\documentclass[11pt,a4paper]{article}
\usepackage{amsmath,amssymb,amsthm}
\usepackage{report-style}
\usepackage{multirow}
\usepackage{xcolor}

\newtheorem{theorem}{Theorem}[section]

\newtheorem{assumption}[theorem]{Assumption}

\definecolor{liftgreen}{HTML}{187A3E}

\usepackage{bm}

\ExplSyntaxOn
\clist_map_inline:nn
  {a,b,c,d,e,f,g,h,i,j,k,l,m,n,o,p,q,r,s,t,u,v,w,x,y,z}
  {
    \str_if_eq:nnF {#1} {v}
      {\cs_new:cpn {v#1} {\bm{#1}}}
    \cs_new:cpn {ev#1}  {#1}
    \cs_new:cpn {rv#1}  {\mathbf{#1}}
    \cs_new:cpn {erv#1} {\textnormal{#1}}
    \str_case:nnF {#1} {{m}{} {q}{}}
      {\cs_new:cpn {r#1} {\textnormal{#1}}}
  }
\clist_map_inline:nn
  {A,B,C,D,E,F,G,H,I,J,K,L,M,N,O,P,Q,R,S,T,U,V,W,X,Y,Z}
  {
    \cs_new:cpn {m#1}   {\bm{#1}}
    \cs_new:cpn {em#1}  {#1}
    \cs_new:cpn {rm#1}  {\mathbf{#1}}
    \cs_new:cpn {erm#1} {\textnormal{#1}}
    \cs_new:cpn {t#1}   {\tens{#1}}
    \cs_new:cpn {et#1}  {\etens{#1}}
    \cs_new:cpn {g#1}   {\mathcal{#1}}
    \str_if_eq:nnF {#1} {E}
      {\cs_new:cpn {s#1} {\mathbb{#1}}}
  }
\ExplSyntaxOff

\DeclareMathAlphabet{\mathsfit}{\encodingdefault}{\sfdefault}{m}{it}
\SetMathAlphabet{\mathsfit}{bold}{\encodingdefault}{\sfdefault}{bx}{n}
\newcommand{\tens}[1]{\bm{\mathsfit{#1}}}
\newcommand{\etens}[1]{\mathsfit{#1}}

\newcommand{\modelname}{TransX}
\newcommand{\reporttitle}{\modelname \ Technical Report}
\newcommand{\reportsubtitle}{Scaling Sequence Transduction for Recommendation}
\newcommand{\reportauthors}{Da Xu\texorpdfstring{\ifdefempty{\reportcontact}{}{\footnotemark[1]}}{}, Liyan Fang, Divya Venugopalan, Sunny Hsu, Xukai Wang, Rishav Chowdhury, Cindy Liang, Nishant Lakshmikanth}
\newcommand{\reportaffiliation}{LinkedIn}
\newcommand{\reportcontact}{daxu5180@gmail.com}

\newcommand{\reportversion}{}
\newcommand{\reportshorttitle}{\modelname: Technical Report}

\newcommand{\projecturl}{}
\newcommand{\codeurl}{}
\newcommand{\modelurl}{}

\draftnotestrue
\showcontentstrue
\showcoverfiguretrue

\hypersetup{
  pdftitle={\reporttitle: \reportsubtitle},
  pdfauthor={\reportauthors},
  pdfsubject={TransX technical report}
}

\begin{document}
\makereporttitle
\begin{abstract}

We present TransX, a sequence-transduction framework and co-designed infrastructure for scaling industrial recommendation with explicit control over model-compute allocation and execution amortization. TransX separates sequence encoding, behavior memory, and candidate-conditioned fusion into independently scalable computational roles. Training amortizes causal trajectory computation across behavioral prefixes and fans out memory states to candidates; serving disaggregates bulk prefill, nearline refresh, and online prediction through compact behavior-memory KV.

We empirically observe that scaling returns are anisotropic: encoder scaling may saturate beyond moderate depth at fixed memory resolution and fusion capacity; increasing supervision shifts the allocation frontier toward candidate-conditioned fusion; and the memory resolution needed to preserve quality remains roughly stable as model capacity grows. We train the core TransX model on click and conversion, then post-train the shared behavioral interface for retrieval and preference ranking, with generative recommendation explored offline.

In separate production A/B experiments, the core TransX model improves CTR by 6.02\% and conversion by 4.45\%, while post-trained retrieval and preference-ranking variants improve CTR by 2.96\% and conversion by 1.33\%, respectively, over their corresponding production baselines. Under matched training conditions, the co-designed training infrastructure increases supervised-target throughput by 84\%. 
In serving, disaggregated execution reduces TransX's total serving cost by almost \(80\%\) relative to colocated serving at matched traffic and SLO, allowing the deployed system to sustain approximately \(5\times\) the model FLOPs per request of the HSTU-style production baseline at the same total serving cost. Even at matched per-request model FLOPs, TransX delivers approximately \(4\times\) higher request throughput per H200.
Together, these results show that scaling recommendation model capacity need not proportionally increase online serving cost.
\end{abstract}

\begingroup
\setlength{\intextsep}{4pt}
\begin{figure}[hb]
\centering
\includegraphics[width=\linewidth]{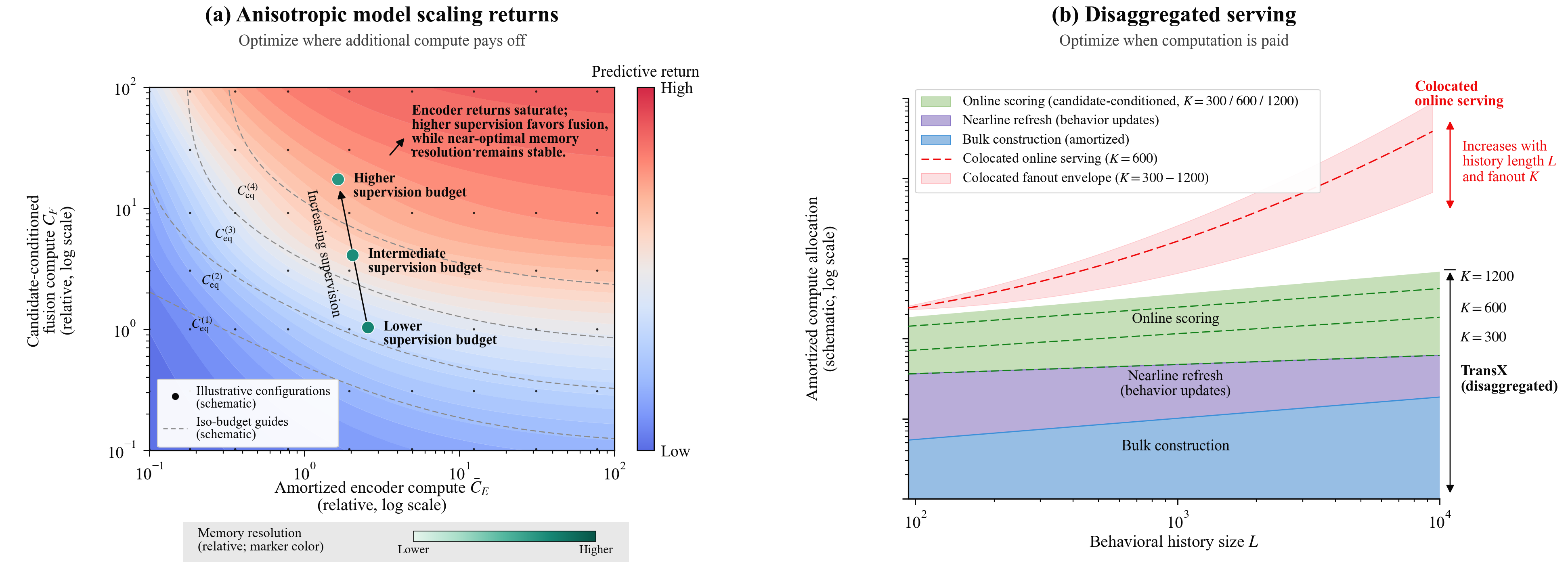}
\caption{
\textbf{TransX decouples recommendation scaling from online serving cost.}
\textbf{(a)} Scaling returns for TransX differ across encoder, behavior-memory, and candidate-conditioned fusion capacity, with higher supervision shifting the validation-selected allocation toward fusion while the memory resolution needed to preserve quality remains roughly stable.
\textbf{(b)} Disaggregated serving amortizes history-dependent computation through offline and nearline execution, concentrating the online path on state access and candidate-conditioned prediction.
}
\label{fig:overview}
\end{figure}
\endgroup

\clearpage

\ifshowcontents
  \begingroup
    \small
    \setlength{\parskip}{0pt}
    \tableofcontents
  \endgroup
  \clearpage
\fi

\section{Introduction}
\label{sec:introduction}

Long behavioral histories provide recommendation systems with evidence
about persistent preferences, evolving interests, and immediate intent~\citep{boka2024survey}.
Recent efforts to scale recommendation models have focused largely on
increasing sequence-handling capacity: consuming longer histories,
allocating more model computation to them, and engineering the serving
stack to sustain the resulting workload~\citep{zhai2024actions,chai2025longer}. When additional computation is
placed on the online serving path, however, its cost scales directly
with online prediction demand. Recommendation scaling is therefore not
only a question of predictive gain, but also of where computation is
placed and how often it must be executed.

This raises two more fundamental questions that are easily obscured by
generic compute scaling:
\begin{quote}
    \emph{Which computational components should scale to advance the
    recommendation-quality frontier, and must advancing that frontier increase
    online serving cost proportionally?}
\end{quote}

Answering these questions leads naturally to a model--infrastructure
co-design problem. The first is a \emph{model-design} question: the
architecture must expose the computational components whose capacities
can be scaled along different axes. The second is an \emph{infrastructure-design}
question: those components should execute at their designated amortization scope and trigger, rather than all at request time.

These two design problems are coupled through the model’s compute–state decomposition. The same computational decomposition that exposes distinct scaling axes also defines the state boundaries across which computation can be amortized, recomputed, and disaggregated.

We develop \textbf{TransX}, a sequence-transduction
framework that instantiates this model--infrastructure co-design
principle. Its model architecture exposes three distinct computational
roles:
\begin{enumerate}
    \item \textbf{Sequence encoding.}
    A causal encoder transforms the evolving behavioral trajectory
    independently of the current candidate, allowing computation over
    overlapping prefixes to be amortized across prediction events.

    \item \textbf{Behavior memory.}
    A bounded global--local memory separates the amount of history
    processed from the behavioral representation exposed to downstream
    prediction, making memory resolution an explicit scaling axis while exposing the memory as a reusable state boundary.

    \item \textbf{Candidate-conditioned fusion.}
    Candidate queries cross-attend to the shared memory as part of the task-specific prediction path. Behavior memory KV is shared across candidate fanout while preserving candidate-specific fusion.
\end{enumerate}
This decomposition makes the amortization boundary explicit: behavioral-state construction over shared prefixes and its incremental refresh are amortized across requests and candidate fanout, while only prediction-conditioned computation scales with request and candidate demand.

The infrastructure preserves and exploits the same decomposition in execution, translating the model design into substantial system-level efficiency gains. During training, variable-length behavioral prefixes from each trajectory share one causal encoding, while candidate fanout at each prefix shares one event state. During serving, bulk (offline) and nearline workers construct and refresh behavioral states on model rollout and behavior updates, while online workers consume bounded, immutable behavior memory KV cache for candidate-conditioned prediction. Thus, history-dependent computation is performed primarily when behavioral state is initialized or refreshed, while the online serving path is concentrated on state access and request- and candidate-conditioned prediction.

This decomposition also lets us study how different compute allocations advance the recommendation-quality frontier. Our scaling experiments reveal unequal marginal returns across these scaling axes: encoder scaling improves prediction at fixed behavior-memory resolution but saturates beyond moderate depth at fixed fusion capacity; increasing supervision shifts the allocation frontier toward candidate-conditioned fusion; and the memory resolution needed to preserve quality remains roughly stable even as overall model capacity grows. Together, these results suggest that efficient scaling is a compute-allocation problem: additional capacity should be allocated according to component-wise marginal returns rather than increased proportionally across all computational roles. The disaggregated infrastructure preserves the benefit of this allocation in execution by amortizing history-dependent computation outside the online serving path, relaxing the coupling between history-side model capacity and online serving cost.

Finally, we reuse the same behavioral memory interface for capability specialization across recommendation tasks. We first train the TransX backbone on response prediction, then post-train capability-specific memory readers and output heads for retrieval, preference ranking, and generative recommendation. Core training learns a reusable behavioral memory interface; post-training specializes how each capability queries it and maps it to task-specific outputs.

The contributions of this work are:
\begin{itemize}
    \item \textbf{Recommendation as request-conditioned sequence transduction.}
    We formulate recommendation as repeated request-level queries
    over an evolving behavior stream, and expose a compute--state
    decomposition into sequence encoding, behavior memory, and
    candidate-conditioned fusion. This decomposition identifies the model states that can be shared across serving events and candidate fanout.

    \item \textbf{Model--infrastructure co-design for amortized execution.}
    The same compute--state decomposition defines the reuse and
    recomputation boundaries preserved by multi-prefix sequence-to-sequence
    training, sequence--fanout parallelism, tiered KV residency, and
    disaggregated serving. 
    Expensive sequence computation can therefore be amortized across overlapping prefixes and executed outside the request path until behavioral state changes.

    \item \textbf{Empirical characterization of scaling returns and compute allocation.}
    Controlled experiments vary behavioral-encoder compute,
    behavior-memory resolution, and candidate-conditioned fusion compute,
    revealing unequal marginal returns across these scaling axes: encoder
    returns saturate beyond moderate depth at fixed fusion capacity,
    increasing supervision shifts the validation-selected allocation
    toward candidate-conditioned fusion, and the memory resolution needed
    to preserve quality remains roughly stable as model capacity grows.

    \item \textbf{Capability specialization over a shared behavior-memory interface.}
    We post-train capability-specific memory readers and output heads for
    retrieval, preference ranking, and generative recommendation while
    preserving the behavior memory interface. Production deployment of the core, retrieval, and ranking variants further validates the framework across recommendation quality, training efficiency, and serving economics.
\end{itemize}

We evaluate TransX in a large-scale social recommendation application
serving approximately 20K requests per second with roughly
0.5K--1.2K candidates per request. In separate production A/B
experiments, TransX-Base improves click-through rate (CTR) by $6.02\%$ and conversion by $4.45\%$ over the production sequential model baseline ; TransX-Retrieval improves CTR by $2.96\%$ over the sequential two-tower production model
baseline; and TransX-Ranking improves conversion by $1.33\%$
over a DLRM learning-to-rank baseline. Under matched training
conditions, the optimized execution pipeline increases supervised-target
throughput by $84\%$. Production serving further demonstrates that
history-dependent computation can be moved off the online serving path while
maintaining bounded behavior memory KV, nearline freshness, and latency-critical
candidate scoring.


Taken together, these results motivate a view of recommendation scaling as a joint problem of compute allocation and amortized/disaggregated execution. The central question is not simply how much computation to add, but where additional capacity yields the greatest predictive return, when each component must be recomputed, and over what scope its state can be reused. TransX realizes this view through a compute--state decomposition that jointly controls model capacity, behavior-memory resolution, and execution placement, enabling the recommendation-quality frontier to advance without proportional growth in online serving cost.
\section{Problem Framing}
\label{sec:framing}

We first formalize the prediction interface underlying industrial recommender systems. A user generates an evolving behavioral trajectory, while the stream of recommendation requests arrives on a separate clock and introduces candidate sets to be scored. The model produces predictions for these candidates before downstream selection
determines which items are exposed and which responses are subsequently observed. 

We begin with the behavior–request–exposure–feedback process and its causal information boundary, then formulate recommendation as
request-conditioned sequence transduction, in which successive user requests
query cutoff-valid prefixes of the user's evolving behavioral trajectory.
Finally, we identify the resulting amortization boundary between
candidate-independent behavioral-state construction and
candidate-conditioned prediction. This boundary provides the formal
basis for the architecture in Section~\ref{sec:model-arch}, the execution
design in Section~\ref{sec:core-training} and \ref{sec:general-infra}, and the compute-allocation analysis in Section~\ref{sec:scaling}.

\subsection{Data and Prediction Interface}
\label{sec:data-interface}

\begin{table}[htb]
    \centering
    \resizebox{\textwidth}{!}{
    \begin{tabular}{|p{4cm}|p{13cm}|}
    \hline
        $u$
        & User index. \\ \hline

        $t$
        & Request index. \\ \hline

        $\tilde t$
        & Behavior timestamp/index, not necessarily aligned with request index $t$. \\ \hline

        $j$
        & Candidate index within a request. \\ \hline

        $k_t,m_t$
        & Number of candidate queries scored in request $t$ and number of candidates exposed, where $m_t\le k_t$. \\ \hline

        $\text{cand}_t^{(u,j)}$
        & Identity of candidate $j$ in request $t$. \\ \hline

        $T_u$ & number of request/serving events considered for user $u$. \\ \hline

        $x_t^{(u,j)}$
        & Prediction-time features for $(u,\text{cand}_t^{(u,j)})$, including
        user attributes, candidate attributes, user--candidate interaction
        features, and request-level context shared across candidates,
        available at scoring time. \\ \hline

        $e_t^{(u,j)}$
        & Candidate-query record $(\text{cand}_t^{(u,j)},x_t^{(u,j)})$. \\ \hline

        $\mathcal{Q}_t^{(u)}$
        & Candidate-query set scored in request $t$. \\ \hline

        $I_t^{(u)}$
        & Ordered indices of candidates selected for exposure. \\ \hline

        $\mathcal{S}_t^{(u)}$
        & Served/exposed slate selected from $\mathcal{Q}_t^{(u)}$. \\ \hline

        $y_t^{(u,j)}\in\mathsf{Y}$
        & Observed response for an exposed candidate where $\mathsf{Y}$ is the outcome space. \\ \hline

        $\mathcal{Y}_t^{(u)}$
        & Feedback aligned with the served slate $\mathcal{S}_t^{(u)}$. \\ \hline

        $h_{\tilde t}^{(u)}$
        & Tokenized user behavior at behavior time $\tilde t$. \\ \hline

        $\mathcal{H}^{(u)}$
        & User-behavior stream. \\ \hline

        $H_{\le t}^{(u)}$
        & Prefix of the behavior stream available when request $t$ is scored. \\ \hline

        $l_t^{(u)}$ & Number of behavior records available to user $u$
        when request $t$ is scored; equivalently,
        $l_t^{(u)} = |H_{\le t}^{(u)}|$. \\ \hline

        $\widehat{\mathcal{P}}_t^{(u)}$ & Collection of predicted response distributions for the candidate queries in request \(t\). \\ \hline
    \end{tabular}
    }
    \caption{Notation used in the problem formulation.}
    \label{tab:notation}
\end{table}

We distinguish the \emph{request}, which defines the candidate set
scored by the model, from the \emph{served event}, which contains the
subset ultimately selected for user exposure. The resulting data flow is
\[
H_{\le t}^{(u)},\ \mathcal{Q}_t^{(u)}
\longrightarrow
\widehat{\mathcal{P}}_t^{(u)}
\longrightarrow
I_t^{(u)},\ \mathcal{S}_t^{(u)}
\longrightarrow
\mathcal{Y}_t^{(u)}.
\]

\paragraph{Candidate-query request.}
At request $t$, the ranking system scores $k_t$ candidate queries:
\begin{equation*}
\label{eq:candidate-query-event}
\mathcal{Q}_t^{(u)}
\coloneqq
\left(
e_t^{(u,1)},\ldots,e_t^{(u,k_t)}
\right),
\qquad
e_t^{(u,j)}
\coloneqq
\left(
\text{cand}_t^{(u,j)},x_t^{(u,j)}
\right),
\end{equation*}
where $\text{cand}_t^{(u,j)}$ identifies candidate $j$ and
$x_t^{(u,j)}$ contains user, the candidate, user--candidate, request, and other features available when the candidate is scored. The ordered request stream is
\[
\mathcal{Q}^{(u)}
=
\left(
\mathcal{Q}_1^{(u)},
\mathcal{Q}_2^{(u)},
\ldots
\right).
\]

\paragraph{Selection and exposure.}
The model produces response predictions for all $k_t$ candidate
queries. The exposure policy then chooses
$m_t\le k_t$ candidates,
\begin{equation*}
\label{eq:selected-indices}
I_t^{(u)}
\coloneqq
\left(
i_{t,1}^{(u)},\ldots,i_{t,m_t}^{(u)}
\right),
\end{equation*}
where each $i_{t,r}^{(u)}\in\{1,\ldots,k_t\}$, and emits the served
slate
\begin{equation*}
\label{eq:served-slate}
\mathcal{S}_t^{(u)}
\coloneqq
\left(
e_t^{(u,i_{t,1})},\ldots,
e_t^{(u,i_{t,m_t})}
\right).
\end{equation*}
Thus, the model produces response predictions, whereas the ranking and
selection policy determines exposure. Features revealed only after
selection or exposure cannot enter the preceding candidate scorer.

\paragraph{Feedback.}
Following exposure, the user may generate explicit or implicit
feedback for candidates in the served event:
\begin{equation*}
\label{eqn:feedback-stream}
\mathcal{Y}_t^{(u)}
\coloneqq
\left(
y_t^{(u,i_{t,1})},\ldots,
y_t^{(u,i_{t,m_t})}
\right),
\qquad
y_t^{(u,i_{t,r})}\in \mathsf{Y}.
\end{equation*}
Feedback positions are therefore aligned with the exposed candidates,
not with all candidates scored in the request. The corresponding
feedback stream is
\[
\mathcal{Y}^{(u)}
=
\left(
\mathcal{Y}_1^{(u)},
\mathcal{Y}_2^{(u)},
\ldots
\right).
\]

\begin{figure}[hbt]
    \centering
    \includegraphics[width=0.9\linewidth]{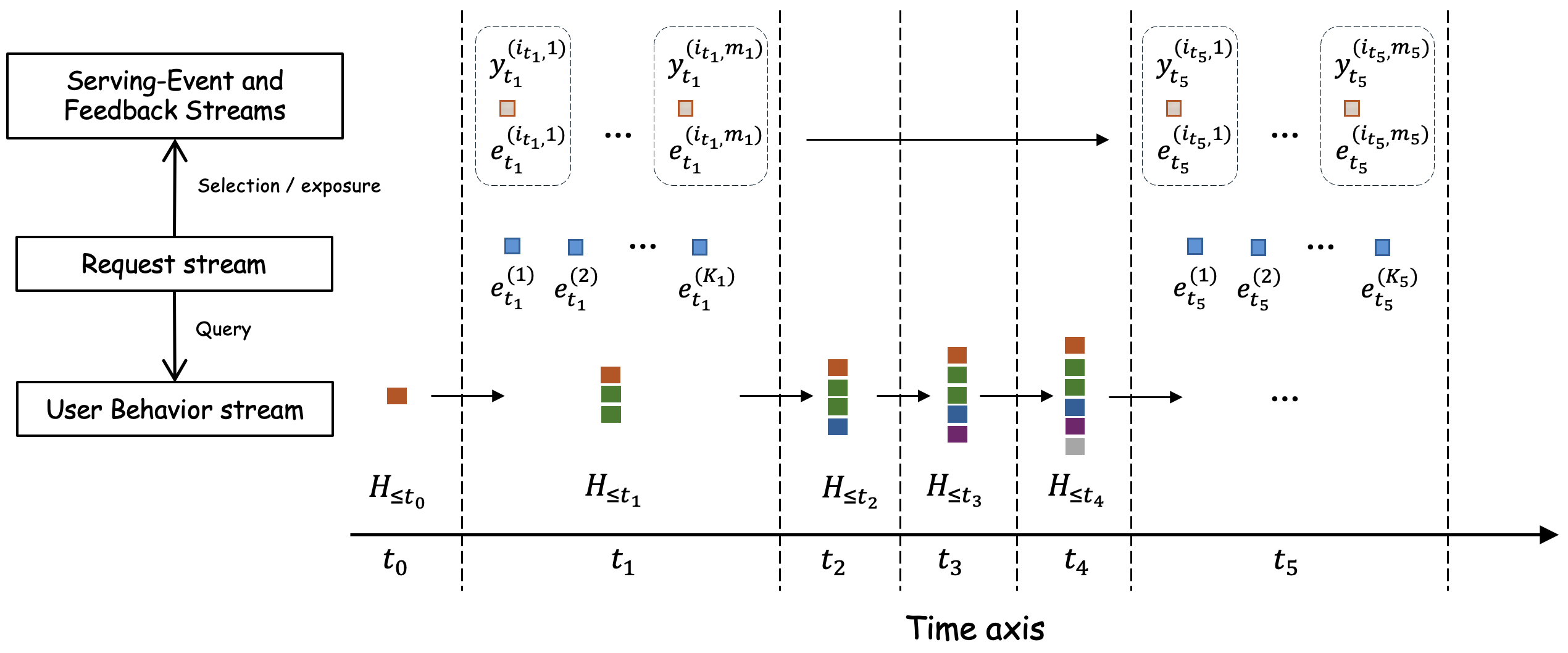}
    \caption{
    Behavior--request--exposure--feedback process.
    Each request queries an independently evolving behavior stream
    through the available prefix $H_{\le t}$; predictions are produced
    for the request candidate set, a downstream policy selects the
    served slate, and feedback is observed only for exposed candidates.
    }
    \label{fig:placeholder}
\end{figure}

\paragraph{User behavior stream.}
Independently of the timing of the target ranking system, a user
generates a broader sequence of activities---for example,
\texttt{search}, \texttt{profile view}, \texttt{connect}, or
interactions with previously served content. We denote the tokenized
behavior stream by
\begin{equation*}
\label{eqn:behavior-stream}
\mathcal{H}^{(u)}
\coloneqq
\left(
\ldots,h_{\tilde t_1}^{(u)},h_{\tilde t_2}^{(u)},\ldots
\right).
\end{equation*}
The behavior clock $\tilde t$ need not align with the request clock
$t$: multiple behaviors may become available between successive
requests, while multiple requests may observe the same behavior
prefix.

We write $H_{\le t}^{(u)}
\preceq
\mathcal{H}^{(u)}$
for the ordered prefix of behavioral records available when request
$t$ is scored. 
The behavior stream and request stream therefore can evolve on different
clocks. The former supplies historical evidence that can be shared
across requests, while each request introduces a new set of
candidate-specific prediction queries. This asynchronous structure is
captured by the request-conditioned sequence-transduction formulation
introduced next.


\subsection{Request-Conditioned Sequence Transduction}
\label{sec:sequence-transduction}

Sequence transduction maps an input sequence to an output sequence
without requiring the two sequences to have the same length or a
position-wise correspondence~\citep{sutskever2014sequence}. Recommendation naturally exhibits this
structure: the input is an evolving behavior sequence, while prediction
outputs occur on a separate sequence of requests.

For user $u$, we write
\begin{equation*}
\label{eq:sequence-transduction}
\left(
\mathcal{H}^{(u)},
\mathcal{Q}_1^{(u)},\ldots,
\mathcal{Q}_{T_u}^{(u)}
\right)
\longrightarrow
\left(
\widehat{\mathcal{P}}_1^{(u)},\ldots,
\widehat{\mathcal{P}}_{T_u}^{(u)}
\right),
\end{equation*}
where
\begin{equation*}
\label{eq:event-response}
\widehat{\mathcal{P}}_t^{(u)}
=
\left(
\widehat p_t^{(u,1)},\ldots,
\widehat p_t^{(u,k_t)}
\right),
\qquad
\widehat p_t^{(u,j)}
=
p_{\theta}
\left(
\,\cdot\,
\middle|
H_{\le t}^{(u)},
e_t^{(u,j)}
\right).
\end{equation*}
Thus, the transducer produces a response prediction for every
candidate query in the request.

Only exposed candidates receive direct response supervision from the
logged event. The user-level training objective is therefore
\begin{equation*}
\label{eq:transduction-training-loss}
\mathcal{L}^{(u)}(\theta)
=
\sum_{t=1}^{T_u}
\sum_{r=1}^{m_t}
\ell
\left(
\widehat p_t^{(u,i_{t,r})},
y_t^{(u,i_{t,r})}
\right).
\end{equation*}
Predictions for unexposed candidates participate in ranking but do
not receive direct logged response supervision from that request.

We call this framing \emph{request-conditioned sequence transduction}:
an evolving behavioral trajectory is repeatedly queried under realized
request and candidate contexts. The behavior stream determines the
historical evidence available at each request, while the request
determines how that evidence must be read for the current prediction
problem.

\subsection{Memory Fanout and the Amortization Boundary}
\label{sec:reuse-boundary}

The transduction formulation exposes a computational boundary between
candidate-independent behavioral state and candidate-conditioned
interaction. We summarize the architecture through four abstract
operators.

Given the behavior prefix,
\begin{equation*}
\label{eq:behavior-encoding}
B_t^{(u)}
=
\operatorname{Enc}
\left(
H_{\le t}^{(u)}
\right),
\end{equation*}
where $\operatorname{Enc}$ produces the contextualized behavioral
representation eligible at request $t$. A memory-construction operator
then forms the reader-facing state
\begin{equation*}
\label{eq:behavior-memory}
M_t^{(u)}
=
\operatorname{Mem}
\left(
B_t^{(u)}
\right).
\end{equation*}
Both stages are independent of the candidates in the current request.

For each candidate query, the resulting candidate-conditioned representation is obtained via:
\begin{equation*}
\label{eq:memory-serving-fusion}
r_t^{(u,j)}
=
\operatorname{Fuse}
\left(
M_t^{(u)},
q_t^{(u,j)}
\right),
\qquad
j=1,\ldots,k_t, 
\end{equation*}
where $q_t^{(u,j)}$ denotes the candidate query. Then the fused representation is passed to the prediction head along with the request-level features to predict the task's target:
\begin{equation*}
\label{eq:response-prediction}
\widehat p_t^{(u,j)}
=
\operatorname{Pred}
\left(
r_t^{(u,j)},
x_t^{(u,j)}
\right).
\end{equation*}

Hence one shared memory fans out to all candidate queries:
\begin{equation*}
\label{eq:candidate-query-fanout}
M_t^{(u)}
\quad
\xRightarrow{\;\mathcal{Q}_t^{(u)}\;}
\quad
\left\{
r_t^{(u,j)}
\right\}_{j=1}^{k_t}
\quad
\xrightarrow{\;\operatorname{Pred}\;}
\quad
\widehat{\mathcal{P}}_t^{(u)}.
\end{equation*}

This candidate-query fanout defines the \emph{request amortization boundary}.
Computation used to construct a valid behavioral memory can be shared
across all candidate queries in the request and, while the memory
version remains valid, across successive requests. Candidate-conditioned
fusion and response prediction scale with request and candidate demand.
The boundary therefore avoids re-encoding history once per candidate
without eliminating candidate-dependent memory access, interaction, or
output computation.

The two resulting amortization scopes are:
\[
\underbrace{
H_{\le t}^{(u)}
\xrightarrow{\;\operatorname{Enc}\;}
B_t^{(u)}
\xrightarrow{\;\operatorname{Mem}\;}
M_t^{(u)}
}_{\substack{
\text{\textbf{Behavioral-memory computation}}\\
\text{prefix modeling capacity}
}}
\qquad
\underbrace{
M_t^{(u)}
\xRightarrow[\mathcal{Q}_t^{(u)}]
{\;\operatorname{Fuse}\;}
\left\{
r_t^{(u,j)}
\right\}_{j=1}^{k_t}
\xrightarrow{\;\operatorname{Pred}\;}
\widehat{\mathcal{P}}_t^{(u)}
}_{\substack{
\text{\textbf{Interaction computation}}\\
\text{request- / candidate-conditioned modeling capacity}
}}.
\]
The ranking and exposure process follows this predictive computation:
\[
\widehat{\mathcal{P}}_t^{(u)}
\longrightarrow
\left(
I_t^{(u)},\mathcal{S}_t^{(u)}
\right)
\longrightarrow
\mathcal{Y}_t^{(u)}.
\]

\begin{remarkbox}{Computational implication}
The computational consequence of the amortization boundary becomes
clear when expressed at the workload level. Let $\lambda_b$ denote the
rate of behavioral updates, $\lambda_q$ the rate of recommendation
requests, and $\bar K$ the mean candidate fanout per request. Let
$C_{\Delta}$ denote the cost of incrementally advancing behavioral state
and publishing a compatible reader snapshot, $C_{\mathrm{req}}$ the
request-level cost of resolving and preparing that shared state, and
$C_{\mathrm{cand}}$ the candidate-conditioned fusion and prediction
cost. The steady-state computation attributable to one request is then
approximately
\[
\bar C_{\mathrm{TransX}}
\approx
\frac{\lambda_b}{\lambda_q} C_{\Delta}
+
C_{\mathrm{req}}
+
\bar K C_{\mathrm{cand}}.
\]
By contrast, a request-synchronous design that reconstructs
history-conditioned state whenever a prediction request arrives incurs
approximately
\[
\bar C_{\mathrm{sync}}
\approx
C_{\mathrm{hist}}(l)
+
C_{\mathrm{req}}
+
\bar K C_{\mathrm{cand}},
\]
where $C_{\mathrm{hist}}(l)$ grows with the history processed on the
online serving path. Thus, cross-request amortization replaces history
computation paid at request rate with state maintenance paid at
behavior-update rate, while candidate-fanout amortization keeps shared
history-side computation outside the $\bar K$ multiplier. The resulting
advantage grows when behavioral state supports multiple subsequent
requests and when each request fans out to many candidate predictions,
subject to the additional storage, transfer, freshness, and recovery
costs required to maintain materialized state. These expressions are
intended as a workload decomposition rather than a complete serving-cost
model.
\end{remarkbox}

\section{Model Architecture}
\label{sec:model-arch}

The central architectural challenge is to expose computation at the granularity at which capacity can be scaled selectively and execution can be amortized, without sacrificing candidate-specific access to behavioral evidence. TransX addresses this through three coupled components: a causal sequence encoder that transforms the evolving behavioral history, a bounded behavior memory that controls the state exposed to prediction, and candidate-conditioned fusion that reads this shared state for each candidate. Figure~\ref{fig:architecture} illustrates the resulting architecture.

\begin{figure}[htb]
    \centering
    \includegraphics[width=0.85\linewidth]{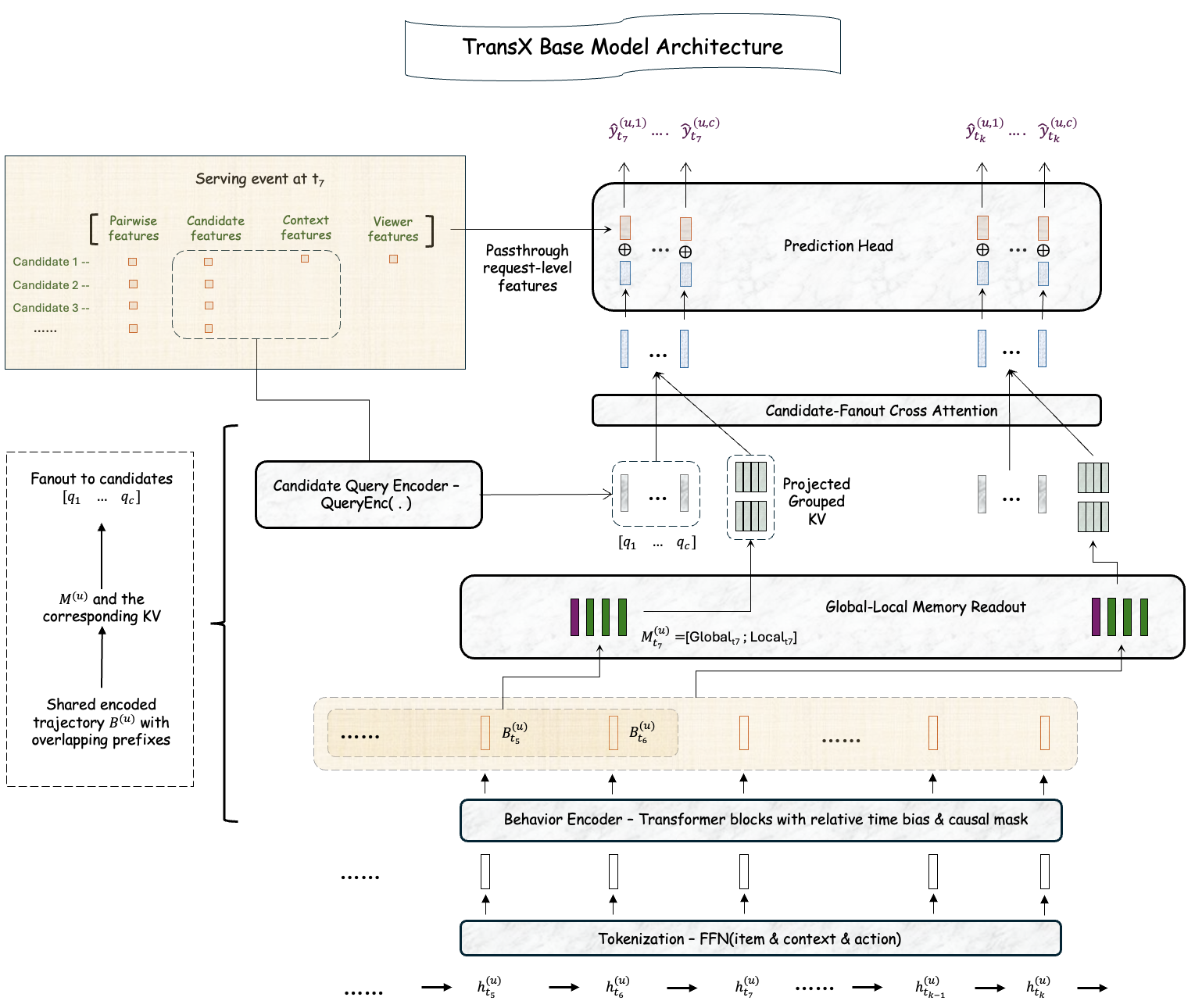}
    \caption{TransX's Encoder-Transducer architecture overview. A causal behavior encoder produces a shared trajectory representation reused across serving events. At each event, a cutoff-specific global–local memory is projected once into shared KV state, while candidate-specific queries independently cross-attend to that memory through candidate-fanout cross-attention. The fused candidate representations, together with passthrough prediction features, are mapped to per-candidate outputs.}
    \label{fig:architecture}
\end{figure}

\subsection{Sequence Tokenization and Encoding}
\label{sec:encoding}

For notational simplicity, we omit the superscript \(u\) when the user index is clear from context. The behavior encoder maps the available history
$H_{\le t}=(h_1,\ldots,h_{l_t})$
to contextualized behavior representations. Historical interaction
context is encoded with each past behavior, whereas current user,
request, and candidate context enters only through the serving-time
query and prediction paths.

\paragraph{Contextualized user behavior tokens.}
A behavior record represents an observed interaction together with
the context under which it occurred:
\begin{equation}
\label{eq:behavior-record}
h_i
=
\left(
o_i,\,
a_i,\,
c_i^{\mathrm{hist}},\,
v_i,\,
p_i,\,
\tilde t_i,\,
\tau_i^{\mathrm{avail}}
\right),
\end{equation}
where $o_i$ denotes the interacted entity and its 
attributes, $a_i$ the action type,
$c_i^{\mathrm{hist}}$ the historical interaction context,
$v_i$ numerical features, $p_i$ available pretrained
representations, $\tilde t_i$ the event timestamp, and
$\tau_i^{\mathrm{avail}}$ the time at which the record becomes
eligible for sequence encoding. Historical context may include,
for example, the interaction surface, entry point, exposure position,
or selected user--entity and user-state attributes associated with
the event.

A record becomes eligible once its tokenized fields are available;
subsequent corrections or enrichments require an explicit
representation refresh.

\paragraph{Shared feature embedding and tokenization.}
Categorical fields use field-specific embedding tables, with
within-field pooling for multi-valued attributes. Action type is
represented by a dedicated embedding. Numerical features are
transformed and normalized, while pretrained representations are
mapped through source-specific learned projections. Missing values
use explicit \texttt{missing} indicators.

We define a shared record-to-token mapping
\begin{equation}
\label{eq:shared-tokenization}
\operatorname{Tok}_{\phi}(x)
=
\operatorname{LN}
\left(
f_{\phi}
\left(
\operatorname{Features}_{\phi}(x)
\right)
\right)
\in\mathbb{R}^{d},
\end{equation}
where
\begin{equation}
\operatorname{Features}_{\phi}(h_i)
=
\left[
e_i^{\mathrm{ent}}
\Vert
e_i^{\mathrm{act}}
\Vert
e_i^{\mathrm{ctx}}
\Vert
\phi_{\mathrm{num}}(v_i)
\Vert
\phi_{\mathrm{pre}}(p_i)
\right].
\end{equation}
We then write
$z_i^{(0)}=\operatorname{Tok}_{\phi}(h_i)$.
The same tokenization interface and embedding tables are reused by
the candidate-query path for fields with matching namespaces and
semantics.

\paragraph{Temporal causal self-attention.}
The token sequence is contextualized using standard Transformer
blocks with multi-head causal self-attention, position-wise
feed-forward networks, residual connections, and layer
normalization~\citep{vaswani2017attention}. For attention head $r$, we use
\begin{equation}
\label{eq:behavior-temporal-attention}
s_{ij}^{(r)}
=
\frac{
(q_i^{(r)})^\top k_j^{(r)}
}{
\sqrt{d_h}
}
+
b_r(\tilde t_i-\tilde t_j)
+
m_{ij},
\end{equation}
where $d_h$ is the query/key dimension, $b_r(\cdot)$ is a learned relative-time bias~\citep{xu2019self} and
\begin{equation}
m_{ij}
=
\begin{cases}
0, & j\le i \text{ and } j \text{ is valid},\\
-\infty, & \text{otherwise}.
\end{cases}
\end{equation}
The causal order follows record availability. Because occurrence
timestamps need not be monotone under this ordering,
$b_r(\tilde t_i-\tilde t_j)$ is defined over signed time differences.
The causal mask enforces the information boundary and excludes
padding, while the relative-time term represents elapsed event time.

For $l_t>0$, the behavior sequence encoder produces
\begin{equation}
\label{eq:behavior-encoder-output}
B_t
=
\operatorname{Enc}
\left(
z_1^{(0)},\ldots,z_{l_t}^{(0)}
\right)
=
\left[
b_1,\ldots,b_{l_t}
\right]^\top
\in\mathbb{R}^{l_t\times d}.
\end{equation}
Under fixed tokenizer/encoder versions and append-consistent
availability ordering, appending later records leaves existing prefix
states unchanged, enabling their reuse across serving events.

\paragraph{Empty histories.}
When no valid behavior memory is available, the request is routed to
the designated non-sequential fallback model using serving-time
features only.

\subsection{Behavior Memory}
\label{sec:behavior-memory}

Exposing all $l_t$ states to every candidate reader would couple
computation on the online serving path, state transfer, and cache footprint directly
to history length. TransX therefore constructs a bounded,
candidate-independent \emph{behavior memory}
\[
M_t
=
\mathcal M\!\left(B_t\right)
\in
\mathbb R^{n_t\times d_m},
\qquad
n_t \ll l_t ,
\]
where $\mathcal M$ is the memory construction mechanism, $n_t$ denotes the number of behavior-memory states available at request $t$, and \(d_m\) is the behavior-memory state dimension. This separates the amount of behavioral history encoded from the number of memory states exposed to prediction.

\paragraph{Global--local memory.}
Our primary realization combines a high-resolution recent window with
lower-resolution summaries of older behavior. Let $W$ denote the local
window size. The local branch retains the most recent encoded states,
\[
L_t
=
B_t
[
l_t-\min(W,l_t)+1:l_t,\:
],
\]
while the preceding history is partitioned into causal supports and
compressed into a bounded set of global summary states,
\[
G_t
=
\mathrm{PoolBlocks}
\!\left(
B_t[1:l_t-W,\:]
\right).
\]
For notational simplicity, we describe the global branch in the main text as block-wise pooling over the history preceding the local window. Appendix~\ref{app:blocked-prefix-scan} details how the corresponding cutoff-specific block summaries are efficiently computed across events using a blocked prefix scan.
Pooling may be fixed or learned, and can be applied to chunks to form block-wise summaries. After projection to a common memory width,
\[
M_t
=
\psi_{\mathrm{mem}}
\left(
[
G_t;
L_t
]
\right).
\]
The reader therefore receives precise access to recent behavior together
with compressed coverage of longer-range history, while the number of
reader-visible states remains bounded.

This construction separates two quantities that are otherwise easily
conflated: \emph{history support}, which determines how far into the
behavioral trajectory information may be drawn, and \emph{memory
resolution}, which determines how much of the resulting representation
is exposed to the reader. A long behavioral history can therefore be
encoded without requiring a proportionally longer reader sequence.
Section~\ref{sec:scaling} studies this distinction explicitly when
scaling behavioral computation and behavior-memory resolution.

\paragraph{Causal eligibility.}
Every reader-visible memory state is constructed only from behavior
available at the corresponding prediction cutoff. For each memory slot
$M_{t,k}$,
\[
\operatorname{supp}(M_{t,k})
\subseteq
\mathcal \{1,\ldots,l_t\}.
\]
Thus neither local retention nor global pooling can expose future
observations to an earlier prediction.

\subsection{Candidate Query}
\label{sec:query}

Candidate queries reuse the behavior-token feature interface and
compatible embedding tables from Eq.~\ref{eq:shared-tokenization}, but
use a separate query encoder. For candidate $j$ at serving event $t$,
$x_{t,j}^{\mathrm{query}}$ contains the prediction-time candidate and
context features available before the response is observed. Its action
field is set to a dedicated \texttt{unknown-action} symbol, distinct
from all observed action types, so that query construction remains
compatible with historical tokenization without exposing the target
response.

The candidate representation is
\begin{equation}
\label{eq:query-encoding}
q_t^{(u,j)}
=
\operatorname{QueryEnc}
\left(
\operatorname{Tok}_{\phi}
(x_{t,j}^{\mathrm{query}})
\right)
\in\mathbb R^{d_q}, \quad x_{t,j}^{\mathrm{query}} \subseteq x_t^{(u,j)}
\end{equation}
where \(d_q\) is the candidate-query dimension.

For the $k_t$ candidates in the request, these independent query rows
are stacked as
\begin{equation}
\label{eq:candidate-query-matrix}
U_t
=
[
q_t^{(u,1)},\ldots,q_t^{(u,k_t)}
]^\top
\in\mathbb R^{k_t\times d_q}.
\end{equation}
The shared tokenizer aligns historical and candidate feature
representations, while the separate query encoder specializes
serving-time candidate conditioning (see Table~\ref{tab:behavior-query-correspondence} for comparison).

\begin{table}[htb]
\centering
\small
\setlength{\tabcolsep}{5pt}
\renewcommand{\arraystretch}{1.12}
\begin{tabularx}{\linewidth}{
  @{}p{0.22\linewidth}
     X
     X@{}
}
\toprule
\textbf{Field group}
&
\textbf{Historical behavior token}
&
\textbf{Candidate query}
\\
\midrule

Item features
&
Observed interacted item
&
Candidate $j$
\\

Context
&
Historical interaction context
&
Serving-time context
\\

Action
&
Observed action
&
\texttt{unknown-action}
\\

Embeddings / tokenizer
&
Shared
&
Shared
\\

Encoder
&
Behavior encoder
&
Query encoder
\\

\bottomrule
\end{tabularx}
\caption{
Correspondence between historical behavior tokens and candidate queries.
The two paths share the feature/tokenization interface and compatible
embedding tables, while using different action semantics and separate
encoders.
}
\label{tab:behavior-query-correspondence}
\end{table}

\subsection{Fanout and Iterative Query Fusion}
\label{sec:fanout-fusion}

The preceding stages produce a shared behavioral memory
$M_t\in\mathbb R^{n_t\times d_m}$ and candidate queries
$U_t\in\mathbb R^{k_t\times d_q}$.
TransX fuses them through candidate-fanout cross attention. The implementation uses grouped-query attention (GQA), where all candidates share behavioral keys and values, while multiple query heads additionally share each KV head~\citep{ainslie2023gqa}.

\paragraph{Grouped-query cross-attention fusion.}
Let $H_q$ denote the number of query heads and $G$ the number of
key--value heads, where $G$ divides $H_q$. Define
\[
\gamma:
\{1,\ldots,H_q\}
\rightarrow
\{1,\ldots,G\}
\]
to assign each query head to a key--value group. With equal-sized
groups, each key--value head is shared by $H_q/G$ query heads.
The cases $G=1$ and $G=H_q$ recover multi-query attention and
standard multi-head attention, respectively.

\paragraph{Iterative memory read.}
TransX can stack $L_R$ cross-attention reader blocks to perform iterative
candidate-conditioned reads over the same behavioral memory.
Let $U_t^{(0)}=U_t$. At reader block $\ell$, the
GQA operations  are applied with $U_t^{(\ell-1)}$ as the query state and
layer-specific projections, while $M_t$ remains fixed. The
resulting readout is integrated as
\begin{equation}
\begin{aligned}
O_t^{(\ell)}
&=
\operatorname{Concat}
(O_{t,1}^{(\ell)},\ldots,O_{t,H_q}^{(\ell)})
W_O^{(\ell)},\\
\widetilde{U}_t^{(\ell)}
&=
\operatorname{LN}
\left(
U_t^{(\ell-1)}+O_t^{(\ell)}
\right),\\
U_t^{(\ell)}
&=
\operatorname{LN}
\left(
\widetilde{U}_t^{(\ell)}
+
\operatorname{FFN}^{(\ell)}
(\widetilde{U}_t^{(\ell)})
\right),
\end{aligned}
\end{equation}
with final candidate-conditioned representation
$R_t=U_t^{(L_R)}$.
Because each block updates the candidate state before the next read,
later blocks can re-attend to $M_t$ conditioned on evidence
retrieved by earlier blocks. Reader depth therefore increases
candidate--history interaction capacity through iterative memory access
without modifying or replicating the shared behavioral memory.

\paragraph{Candidate-fanout amortization.}
At serving event $t$, history-side computation is performed once and
shared across the $k_t$ candidate queries, while candidate-conditioned
fusion and prediction scale with fanout:
\[
C_t
=
C_{\mathrm{shared}}(l_t,n_t)
+
k_t C_{\mathrm{cand}}(n_t).
\]
Hence the shared history-side cost contributes
$C_{\mathrm{shared}}/k_t$ per candidate, while candidate-specific
interaction remains fully expressive. GQA provides an additional,
orthogonal reduction by sharing projected behavioral KV across query
heads.

\subsection{Parallel Response Prediction}
\label{sec:response-prediction}

The response prediction head combines each candidate-conditioned representation
$r_{t}^{(u,j)}$ with prediction-time user, request, candidate, and user-candidate pairwise
features to produce response probabilities:
\begin{equation}
\label{eq:response-prediction}
\begin{aligned}
h_{t,j}^{(u)}
&=
\left[
r_{t}^{(u,j)}
\,\Vert\,
\phi_{\mathrm{pred}}
\!\left(x_t^{(u,j)}\right)
\right],\\
\hat{p}_{t,j}^{(u)}
&=
f_{\mathrm{out}}
\!\left(
\operatorname{Interact}_{\vartheta}
\!\left(h_{t,j}^{(u)}\right)
\right)
\in\mathbb R^{|\mathsf Y|}.
\end{aligned}
\end{equation}
Here \(\phi_{\rm pred}\) embeds serving-time features, while \(\operatorname{Interact}_\vartheta\) combines them with retrieved behavioral evidence through feature-crossing and candidate-wise prediction modules. In practice, prediction-head capacity can be scaled independently through richer feature interactions without enlarging behavior memory. The formulation above scores candidates in parallel, without candidate-to-candidate attention or autoregressive dependence; Section~\ref{sec:generative-specialization} relaxes this constraint for generative recommendation. For binary response tasks, \(f_{\rm out}\) maps the output logits through a sigmoid.

The prediction head $$\text{PredHead}(r, x):= f_{\mathrm{out}}
\!\left(
\operatorname{Interact}_{\vartheta}
\left[
r
\,\Vert\,
\phi_{\mathrm{pred}}
\!\left(x\right)
\right]
\right)$$ 
is applied row-wise to the $k_t$ candidates:
\begin{equation}
\label{eq:parallel-response-prediction}
\widehat{P}_t
=
\operatorname{PredHead}
\!\left(R_t,X_t\right)
\in\mathbb R^{k_t\times|\mathsf{Y}|},
\end{equation}
where $X_t$ contains the feature records aligned with
$R_t$. 


\section{Core Training}
\label{sec:core-training}

TransX is first trained as a response-prediction model over logged
serving events. The core objective predicts click and conversion responses given the current request and user \(u\)’s behavioral history, e.g.
\[
\left(
H_{\le t}^{(u)}, e_t^{(u,j)}
\right)
\longmapsto
p_{\theta}
\left(
y_t^{(u,j)} = 1
\mid
H_{\le t}^{(u)}, e_t^{(u,j)}
\right).
\]
This stage learns the behavioral encoder, memory
interface, candidate-conditioned reader, and prediction head that are
subsequently reused by the specializations in
Section~\ref{sec:post-training}.

\subsection{Packed Training Unit}
\label{sec:event-training}

Recommendation logs are typically materialized with one record per
exposed candidate, with candidates from the same serving event sharing the
same user, prediction-time contexts, and eligible behavioral history. We
therefore treat the retained behavior trajectory and associated serving-event stream, rather than an individual candidate row or event group, as the basic training unit.

For user $u$, let
\[
\mathcal H^{(u)}
=
(h^{(u)}_1,\ldots,h^{(u)}_{L_u})
\]
denotes user \(u\)’s packed behavioral trajectory, where \(L_u\) denotes the number of retained behavior records in the packed trajectory. Serving event $t$ contains
the exposed candidate set and mature responses
\[
\mathcal E_t^{(u)}
=
\left\{
\left(
e_{t,j}^{(u)},
y_{t,j}^{(u)}
\right)
\right\}_{j=1}^{m_t},
\]
together with behavioral cutoff $l_t$, indicating the prefix available
when the event was scored. A packed training unit is therefore
\[
\mathcal D^{(u)}
=
\left(
\mathcal H^{(u)},
\left\{
l_t,\mathcal E_t^{(u)}
\right\}_{t=1}^{T_u}
\right).
\label{eq:packed-training-example}
\]

All candidates within event $t$ share the same causal behavioral state
\[
M_t^{(u)}
=
\mathcal M
\left(
\operatorname{Enc}
(\mathcal H^{(u)}_{1:l_t})
\right),
\]
but remain distinct prediction targets. Likewise, successive events from
the same user generally observe overlapping prefixes of the same
trajectory. The packed representation therefore preserves the two
amortization structures identified in Section~\ref{sec:framing}:
shared behavioral computation across event cutoffs and shared event
memory across candidate targets.

\subsection{Multi-Prefix Seq-to-Seq Training}
\label{sec:multi-cutoff-training}

\begin{figure*}[htb]
\centering
\begingroup
\resizebox{0.7\linewidth}{!}{%
\begin{tikzpicture}[
  x=1cm, y=1cm,
  font=\footnotesize,
  >=Latex,
  box/.style={
    draw=black!45,
    rounded corners=2pt,
    align=center,
    text width=2.55cm,
    minimum width=2.9cm,
    minimum height=8mm,
    inner sep=2.5pt
  },
  memory/.style={
    box,
    draw=teal!65!black,
    fill=teal!5
  },
  query/.style={
    box,
    draw=violet!60!black,
    fill=violet!5
  },
  prediction/.style={
    box,
    draw=blue!55!black,
    fill=blue!4
  },
  flow/.style={
    ->,
    semithick,
    draw=black!65
  },
  shared/.style={
    semithick,
    dashed,
    draw=teal!65!black
  }
]

\node[
  box,
  fill=black!3,
  minimum height=11mm
] (logs) at (1.5,0) {
  \textbf{Serving-event logs}\\
  $(u,t,j,x,y)$
};

\node[
  box,
  text width=6.2cm,
  minimum width=6.6cm,
  minimum height=11mm,
  fill=black!3
] (packing) at (7.6,0) {
  \textbf{Packed Training Unit $\mathcal D^{(u)}$}\\
  Group by $(u,t)$; route candidate features by event
};

\draw[flow] (logs.east) -- (packing.west);

\node[font=\footnotesize\bfseries] at (1.5,-1.45) {Sequence encoding};
\node[font=\footnotesize\bfseries] at (4.8,-1.45) {Prefix memory};
\node[font=\footnotesize\bfseries] at (8.2,-1.45) {Candidate fanout};
\node[font=\footnotesize\bfseries] at (11.75,-1.45) {Response Prediction};

\node[memory] (history) at (1.5,-2.7) {
  Behavior trajectory\\
  $\mathcal H^{(u)}$
};

\node[
  memory,
  minimum height=13mm
] (encoder) at (1.5,-4.25) {
  \textbf{Causal encoder}\\
  $B^{(u)}
   = \operatorname{Enc}(\mathcal H^{(u)})$
};

\draw[flow] (history.south) -- (encoder.north);

\node[
  align=center,
  text width=2.8cm,
  font=\scriptsize,
  text=teal!55!black
] at (1.5,-5.35) {
  shared across prefixes
};

\draw[flow]
  (packing.south)
  --
  node[right, font=\scriptsize]{event-level features}
  (7.6,-1.2);

\foreach \suffix/\event/\yy in {
  one/{t_1}/-2.7,
  two/{t_2}/-4.65
} {
  \node[
    memory,
    font=\scriptsize,
    text width=2.45cm,
    minimum width=2.75cm
  ] (mem-\suffix) at (4.8,\yy) {
    prefix $l_{\event}$\\
    $M_{\event}^{(u)}$
  };

  \node[
    query,
    font=\scriptsize,
    text width=2.9cm,
    minimum width=3.2cm
  ] (query-\suffix) at (8.2,\yy) {
    \textbf{Event $\event$ candidates}\\[-1pt]
    $e_{\event}^{(u,1)},\ldots,
      e_{\event}^{(u,m_{\event})}$
  };

  \node[
    prediction,
    font=\scriptsize,
    text width=2.9cm,
    minimum width=3.2cm
  ] (pred-\suffix) at (11.75,\yy) {
    predictions\\[-1pt]
    $\widehat p_{\event}^{(u,1)},\ldots,
      \widehat p_{\event}^{(u,m_{\event})}$
  };

  \draw[flow] (mem-\suffix.east) -- (query-\suffix.west);
  \draw[flow] (query-\suffix.east) -- (pred-\suffix.west);
}

\draw[semithick, draw=teal!65!black]
  (encoder.east) -- (3.2,-4.25);

\draw[shared]
  (3.2,-2.7) -- (3.2,-4.65);

\foreach \suffix in {one,two} {
  \draw[shared,->]
    (3.2,0 |- mem-\suffix.west)
    --
    (mem-\suffix.west);
}

\node[
  draw=teal!65!black,
  fill=teal!3,
  rounded corners=2pt,
  font=\scriptsize,
  inner sep=4pt, minimum width=168.84073pt
] (materialization) at (6.65,-6.25) {
  $\displaystyle
  M_t^{(u)}
  =
  \psi_{\mathrm{mem}}\!\left(
  \operatorname{Concat}_{\mathrm{slots}}
  \bigl(
  \mathrm{Global}_t^{(u)},
  \mathrm{Local}_t^{(u)}
  \bigr)
  \right)$
};

\end{tikzpicture}%
}
\endgroup

\caption{
Multi-prefix training in TransX. A shared causal encoding of each behavior
trajectory produces prefix memories that fan out to all candidates at the
corresponding serving event; each memory combines global and local behavioral
state.
}
\label{fig:seq2seq-training}
\end{figure*}

Let the training examples be constructed around a common retained-history
boundary where all supervised events associated with that packed
unit share the same left boundary and differ only in their
right-hand cutoff (see Figure~\ref{fig:seq2seq-training}). Because the behavior encoder is causal, one encoding of the packed
trajectory supplies the states required by all selected event cutoffs:
\[
B^{(u)}
=
\operatorname{Enc}(\mathcal H^{(u)}),
\qquad
B_{\le t}^{(u)}
=
B^{(u)}[1:l_t,:].
\]
Each event constructs its cutoff-valid behavioral memory $M_t^{(u)}$,
and all candidate queries associated with that event are evaluated in
parallel (by fusing with the common memory):
\[
R_t^{(u)}
=
\operatorname{Fuse}
\left(
M_t^{(u)}, U_t^{(u)}
\right).
\]
Thus multiple event losses backpropagate through one shared causal
trajectory, while candidate losses within each event first aggregate
through their common memory state. Section~\ref{sec:training-system}
describes the batched cutoff extraction, prefix-scan construction, and
sequence--fanout execution used to realize this logical computation
efficiently.

Let $\mathcal V_t$ denote the candidate positions with valid,
mature response label. Passing $R_t^{(u)}$ and prediction features $X_t^{(u)}$ through the response prediction head $\operatorname{PredHead}
\!\left(R_t^{(u)},X_t^{(u)}\right)$ produces
\begin{equation}
\widehat{\mathcal{P}}_t^{(u)}
=
\left(
\widehat p_t^{(u,1)},\ldots,
\widehat p_t^{(u,m_t)}
\right).
\end{equation}
For binary response-prediction tasks, the per-user loss is:
\begin{equation}
\label{eq:ctr-seq2seq-loss}
\mathcal{L}^{(u)}
=
\sum_{t=1}^{T_u}
\sum_{j\in\mathcal V_t}
\ell_{\mathrm{BCE}}
\left(
\widehat p_t^{(u,j)},
y_t^{(u,j)}
\right).
\end{equation}
For brevity, Eq.~\ref{eq:ctr-seq2seq-loss} omits normalization by the
number of valid supervised targets and loss re-weighting terms. 
The same training setup is reused by the post-training objectives in Section~\ref{sec:post-training}, with task-specific reductions applied over the shared behavioral trajectory and event structure.

\subsection{Dense--Sparse Hybrid Optimization}
\label{sec:hybrid-optimization}

TransX contains two distinct parameter regimes: densely updated network
parameters and large row-sparse ID embedding tables. We optimize dense
parameters with AdamW~\citep{loshchilov2019decoupled} and sparse embeddings with row-wise Adagrad~\citep{duchi2011adaptive,fbgemm2026rowwise}. For
an embedding table with $n$ rows, row-wise Adagrad maintains one
second-moment accumulator per row rather than one per coordinate,
reducing optimizer state from $O(nd)$ to $O(n)$ while preserving the
within-row gradient direction.

Because behavioral trajectories, event memories, and candidate branches
are shared within one logical training step, all gradient contributions
to the same dense parameter or embedding row are aggregated before the
corresponding nonlinear optimizer update. This includes contributions
from overlapping behavioral prefixes, candidate fanout, and gradient
accumulation. The resulting optimizer semantics are therefore defined at
the logical-step boundary and remain unchanged by the physical
partitioning introduced in Section~\ref{sec:training-system}.
Detailed row-wise update equations and distributed ownership mechanics
are deferred to Appendix~\ref{app:optimizer}.

\section{Model--Infrastructure Co-Design}
\label{sec:general-infra}

The compute-state decomposition is useful only if its reuse structure is
preserved in execution. TransX therefore maps the same hierarchy onto
training and serving. In particular, training executes
\[
\underline{\text{trajectory}}
\;\longrightarrow\;
\underline{\text{memory}}
\;\longrightarrow\;
\underline{\text{candidate fanout}},
\]
without replicating shared computation, while serving maintains
\[
\underline{\text{mutable encoder continuation state}}
\;\rightarrow\;
\underline{\text{immutable behavior memory KV}}
\;\rightarrow\;
\underline{\text{request-local computation}}.
\]
These components operate on different clocks and under different objectives: continuation state advances as new behavior arrives, published behavior memory changes only when refreshed state is committed, and candidate-conditioned computation is triggered by each request. The infrastructure therefore schedules, places, and scales these components independently while preserving the causal information boundary and prediction semantics of the model.

\subsection{Execution Model}
\label{sec:execution-model}

Table~\ref{tab:execution-model} summarizes the execution objects induced by TransX model design, their multiplicities, and their amortization scopes. The architecture determines what state and computation can be shared; the infrastructure determines how they are partitioned, placed, transferred, and cached. We use \(\tau\) to index the represented-behavior cutoff of a serving-state version, distinguishing serving-state versions from the request index \(t\).

\begin{table}[htb]
\centering
\footnotesize
\setlength{\tabcolsep}{4pt}
\renewcommand{\arraystretch}{1.12}
\begin{tabularx}{\linewidth}{
  @{}p{0.28\linewidth}
     p{0.16\linewidth}
     p{0.25\linewidth}
     X@{}
}
\toprule
\textbf{Logical object}
&
\textbf{Multiplicity}
&
\textbf{Amortization scope}
&
\textbf{Primary system implication}
\\
\midrule

Encoded trajectory $B^{(u)}$
&
Per packed trajectory
&
Multiple supervised event cutoffs
&
Sequence throughput and activation memory
\\

Event memory / projected KV
&
Per event cutoff
&
Candidate rows and sharing query heads
&
Fanout efficiency and communication
\\

Candidate-conditioned computation
&
Per candidate
&
No cross-candidate dependence
&
Dense batching and GPU utilization
\\

Encoder continuation state $\mathsf C_{\tau}^{(u)}$
&
Per represented behavior version
&
Successive behavior updates
&
Freshness, restoration, and replay
\\

Published behavior-memory KV $\mathsf V_{\tau}^{(u)}$
&
Per published behavior version
&
Requests and candidate batches
&
Storage, transfer, and online latency
\\

Request-local state
&
Per request
&
Candidate execution batches
&
SLO enforcement and temporary memory
\\

\bottomrule
\end{tabularx}
\caption{
Logical execution objects in TransX. The architecture determines which
computation and state can be reused; the infrastructure determines how
these objects are partitioned, placed, transferred, and cached.
}
\label{tab:execution-model}
\label{tab:training-infra-motivation}
\end{table}


Execution must preserve two contracts. During training, each event reads only causally valid behavioral state, candidates share one event memory, and fanout gradients rejoin before entering the shared trajectory graph.
During serving, continuation state remains maintenance-only, published behavior-memory KV is immutable and version-compatible, and state recovery remains outside the online serving path.

\subsection{Training Infrastructure}
\label{sec:training-system}

Section~\ref{sec:core-training} defines training logically as a
multi-prefix computation over packed behavioral trajectories: multiple
serving events read different causal prefixes of the same trajectory,
and multiple candidate losses within each event read the same
event-specific memory. The training system must realize this objective
without changing its semantics while avoiding the duplicated work that
would arise from materializing each $(u,t,j)$ target independently.

The implementation challenge is therefore to construct the
$\{M^{(u)}_t\}_{t=1}^{T_u}$ memories and their candidate fanout from one
encoded trajectory without materializing the
$T_u$ overlapping prefixes separately. Section~\ref{sec:training-memory-extraction}
addresses this with batched local gathers and shared global prefix
scans; Section~\ref{sec:parallelism-distributed-execution} then maps the
resulting trajectory--event--candidate graph onto distributed
execution.

\subsubsection{Batched Local Gather and Global Prefix Scan}
\label{sec:training-memory-extraction}

Section~\ref{sec:multi-cutoff-training} defines the logical computation
in which multiple serving events read different cutoff-valid memories
from one causal behavioral trajectory. We now describe how those
memories are materialized without reconstructing each eligible prefix
independently.

For a packed trajectory from user $u$, let $B^{(u)}
=
\operatorname{Enc}
\left(
\mathcal H^{(u)}_{1:L_u}
\right)
\in
\mathbb R^{L_u\times d}$
be the causal encoding up to the largest selected cutoff, and let the event cutoffs be $(l_{1}^{(u)},\ldots,l_{T_u}^{(u)})$. Let \(W\) denote the local-memory width and \(R\) the maximum number of global summary states retained in the behavior memory.

\paragraph{Batched local gather.}
For a local-memory width $W$, define the source index for local position
$w$ of event $t$ as
\[
I^{(u)}_{t,w}
=
l_{t}^{(u)}-W+w,
\qquad
w=1,\ldots,W,
\]
with validity mask
\[
A^{(u)}_{t,w}
=
\mathbf 1
\left[
1
\le
I^{(u)}_{t,w}
\le
l_{t}^{(u)}
\right].
\]
A single masked gather over $B^{(u)}$ then constructs the fixed-shape local memory
tensor
\[
L^{(u)}_{t,w,:}
=
A^{(u)}_{t,w}\,
B^{(u)}_{
\operatorname{clip}(I^{(u)}_{t,w},1,L_u),:
},
\]
or, collectively,
\[
L^{(u)}
=
\operatorname{MaskedGather}
\left(
B^{(u)},I^{(u)},A^{(u)}
\right)
\in
\mathbb R^{T_u\times W\times d}.
\label{eq:batched-local-gather}
\]
Invalid positions are masked before memory construction. Thus all local
windows are extracted in one batched operation while overlapping
windows continue to reference the same encoded trajectory states.

This moves only $O(T_u W d)$
states into event-local memory, rather than materializing $O\!\left(
d\sum_{t=1}^{T_u}l_{t}^{(u)}
\right)$
states as would occur if every behavioral prefix were reconstructed
independently.

\paragraph{Global prefix scan.}
Global memory is constructed from sufficient states that compose
associatively along the behavioral trajectory. For any contiguous
segment $\mathcal I$, let $S_{\mathcal I}$ denote its sufficient state.
For adjacent segments $\mathcal I_1$ and $\mathcal I_2$, we require
\[
S_{\mathcal I_1\cup\mathcal I_2}
=
S_{\mathcal I_1}\otimes S_{\mathcal I_2},
\]
where $\otimes$ is associative and has identity $e$. Intuitively, \(S_{\mathcal I}\) is a mergeable summary of one contiguous segment; event-specific global blocks are obtained by merging the segment states that cover their supports. In particular,
writing $S_{\{i\}}$ for the singleton state contributed by encoded
behavior $b_i$, a parallel prefix scan~\citep{blelloch1990prefix}
computes
\[
\bigl(
S_{[1:1]},\ldots,S_{[1:L_u]}
\bigr)
=
\operatorname{PrefixScan}_{\otimes}
\bigl(
S_{\{1\}},\ldots,S_{\{L_u\}}
\bigr).
\]

For event $t$, let
$\{\mathcal I_{t,r}^{(u)}\}_{r=1}^{R_t^{(u)}}$
denote the contiguous supports of its global-memory blocks. The
corresponding global summaries are
\[
g_{t,r}^{(u)}
=
\operatorname{Readout}
\left(
S_{\mathcal I_{t,r}^{(u)}}
\right),
\qquad
r=1,\ldots,R_t^{(u)},
\]
and the global memory bank is
\[
G_t^{(u)}
=
\left[
g_{t,1}^{(u)},
\ldots,
g_{t,R_t^{(u)}}^{(u)}
\right]^\top .
\]
Because the sufficient states compose associatively, the statistics
needed by multiple event cutoffs can be shared rather than recomputed
independently. Appendix~\ref{app:blocked-prefix-scan} gives the exact
blocked realization used to construct these segment states across
overlapping event prefixes.

For example, the learned scalar-attention pooler used in our
global--local memory admits such an associative sufficient state.
Given scores $a_i=a_\theta(b_i)$, 
define, for any contiguous segment $\mathcal I$,
\[
m_{\mathcal I}
=
\max_{i\in\mathcal I} a_i,
\qquad
Z_{\mathcal I}
=
\sum_{i\in\mathcal I}
\exp(a_i-m_{\mathcal I}),
\]
and
\[
V_{\mathcal I}
=
\sum_{i\in\mathcal I}
\exp(a_i-m_{\mathcal I})\,b_i.
\]
Then the sufficient state is $S_{\mathcal I}
=
\left(
m_{\mathcal I},
Z_{\mathcal I},
V_{\mathcal I}
\right),$
with readout
\[
\operatorname{Readout}(S_{\mathcal I})
=
\frac{V_{\mathcal I}}{Z_{\mathcal I}}.
\]
For adjacent segments, these states combine exactly through the same
log-sum-exp rescaling~\citep{dao2022flashattention}, making
$\otimes$ associative and allowing the required block summaries to be
constructed by parallel scan and reduction without changing the exact
pooled result.

\paragraph{Event-memory materialization.}
Let \(n_{t}^{(u)}\) denote the number of reader-visible behavior-memory states for user \(u\) at event \(t\). The behavior memory for event $t$ is assembled from the gathered local
states and global summary bank:
\[
M_{t}^{(u)}
=
\operatorname{Compose}
\left(
G_{t}^{(u)},L_{t}^{(u)}
\right),
\qquad
n_{t}^{(u)}\le W+R,
\]
followed once per event by the memory adapter and shared KV projection,
\[
(K_{t}^{(u)},V_{t}^{(u)})
=
\operatorname{Proj}_{KV}
\left(
\operatorname{Adapter}(M_{t}^{(u)})
\right).
\label{eq:event-kv-materialization}
\]
All candidate rows associated with event $t$ then consume this common
projected state. The resulting execution follows the logical structure
of Section~\ref{sec:multi-cutoff-training}: encode the trajectory once,
extract many cutoff memories, and fan out only after compact event state
has been constructed.

In practice, these optimizations increase supervised-candidate throughput (candidate targets/s) by $1.84\times$ and encoded-behavior throughput (behavior tokens/s) by $1.4\times$, while raising average forward-pass MFU\footnote{
MFU is $F_{\mathrm{model}}/(T_{\mathrm{wall}}N_{\mathrm{GPU}}P_{\mathrm{peak}})$,
measured over the full post-warmup interval using the H200 dense BF16
Tensor Core peak (without sparsity uplift). Model FLOPs count dense
operations at two FLOPs per multiply--accumulate, including forward
and backward passes for training and forward passes only for inference.
Embedding lookups, communication, optimizer updates, and data movement
are excluded from model FLOPs but remain reflected in wall-clock time.
} from \(41\%\) to \(57\%\) in the prefix-execution benchmark relative to event-materialized prefix encoding on our H200 GPU fleet. For our production trajectory-length distribution, repeated-prefix materialization
$\approx 5.2$\,GiB/device versus batched cutoff extraction
$\approx 2.9$\,GiB/device peak activation HBM in our benchmarking.

\begin{figure}[htb]
    \centering
    \includegraphics[width=0.8\linewidth]{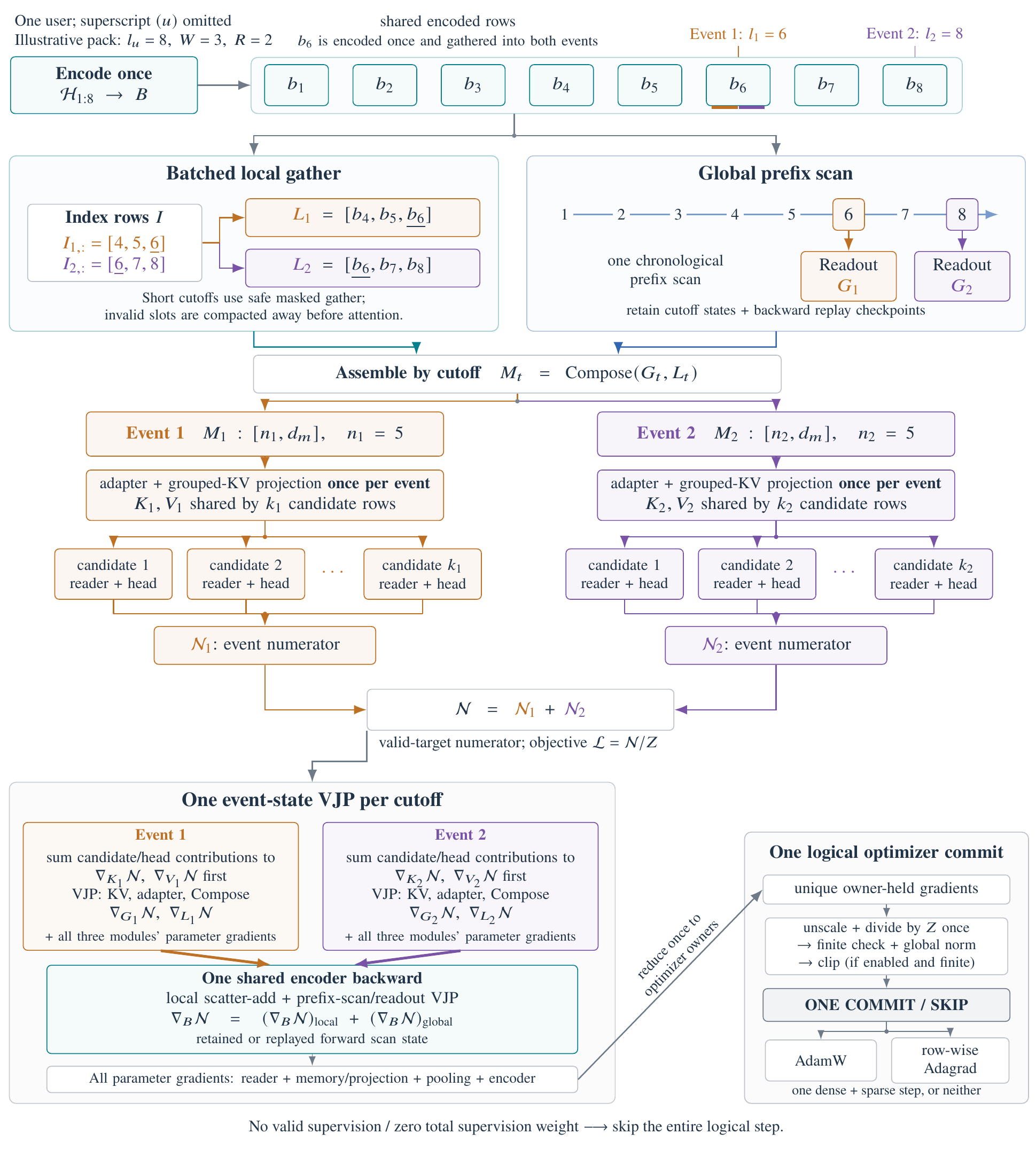}
    \caption{Illustration of the design principle: encode once; extract
    many cutoff memories; fan out candidates; backpropagate jointly.}
    \label{fig:training-infra}
\end{figure}

\paragraph{Backward aggregation.}
For a scalar loss $\mathcal L$, we write
$\nabla_X \mathcal L$ for the reverse-mode gradient with respect to an
intermediate tensor $X$.
Let
\[
\nabla_{K_{t}^{(u)}}\mathcal L^{(r)},\qquad
\nabla_{V_{t}^{(u)}}\mathcal L^{(r)}
\]
denote the KV gradients contributed by candidate/fanout partition $r$.
Contributions from all candidate rows and sharing query heads are summed at
the event boundary:
\[
\nabla_{K_{t}^{(u)}}\mathcal L
=
\sum_r
\nabla_{K_{t}^{(u)}} \mathcal L^{(r)},
\qquad
\nabla_{V_{t}^{(u)}}\mathcal L
=
\sum_r
\nabla_{V_{t}^{(u)}} \mathcal L^{(r)}.
\]

One reverse-mode VJP through the shared KV projection and memory adapter
produces the gradient of the event memory,
\[
\nabla_{M_{t}^{(u)}}\mathcal L
=
\operatorname{VJP}_{\operatorname{Adapter}+\operatorname{Proj}_{KV}}
\!\left(
M_{t}^{(u)};
\nabla_{K_{t}^{(u)}}\mathcal L,
\nabla_{V_{t}^{(u)}}\mathcal L
\right).
\]

Backpropagating through event-memory composition then yields gradients for
the local and global branches,
\[
\left(
\nabla_{G_{t}^{(u)}}\mathcal L,
\nabla_{L_{t}^{(u)}}\mathcal L
\right)
=
\operatorname{VJP}_{\operatorname{Compose}}
\!\left(
G_{t}^{(u)},L_{t}^{(u)};
\nabla_{M_{t}^{(u)}}\mathcal L
\right)
\]

Local-memory gradients are accumulated into the shared encoded trajectory
by the adjoint of the masked gather,
\[
\left.\nabla_{B^{(u)}}\mathcal L\right|_{\mathrm{local}}
=
\operatorname{ScatterAdd}
\!\left(
\{\nabla_{L_{t}^{(u)}}\mathcal L\}_{t=1}^{T_u},
I^{(u)};A^{(u)}
\right).
\]

Likewise, global-memory gradients are propagated through the blocked
prefix-scan and block-readout adjoint,
\[
\left.\nabla_{B^{(u)}}\mathcal L\right|_{\mathrm{global}}
=
\operatorname{VJP}_{\mathrm{BlockScan+Readout}}
\!\left(
B^{(u)};
\{\nabla_{G_{t}^{(u)}}\mathcal L\}_{t=1}^{T_u}
\right).
\]

The shared encoder therefore receives
\[
\nabla_{B^{(u)}}\mathcal L
=
\left.\nabla_{B^{(u)}}\mathcal L\right|_{\mathrm{local}}
+
\left.\nabla_{B^{(u)}}\mathcal L\right|_{\mathrm{global}}.
\]
after all selected event gradients have been accumulated.

\subsubsection{Sequence--Fanout Parallelism}
\label{sec:parallelism-distributed-execution}

Section~\ref{sec:training-memory-extraction} removes redundant
prefix materialization within a packed trajectory. The remaining systems
problem is to distribute this shared computation without turning the
trajectory, event, and candidate multiplicities into replicated encoder
work. In particular, sequence-side parallelism partitions the shared causal trajectory, whereas fanout parallelism partitions candidate-conditioned computation only after compact event state has been constructed.

\paragraph{Sequence--fanout decomposition.}
For sufficiently long trajectories, a sequence team of degree $d_s$
partitions the causal encoder over behavioral positions. After the
cutoff-specific local and global states of event $t$ have been extracted,
one logical event owner assembles the behavior memory and applies the
memory adapter and grouped-KV projection:
\[
B^{(u)}
\xrightarrow[\;d_s\;]{\text{sequence partition}}
M_{t}^{(u)}
\xrightarrow{\operatorname{Adapter}+\operatorname{Proj}_{KV}}
(K_{t}^{(u)},V_{t}^{(u)}).
\]
The event's candidate rows are then partitioned across a fanout team of
degree $d_f$,
\[
(K_t^{(u)},V_t^{(u)}),\,
\{e_t^{(u,j)}\}_{j=1}^{m_t}
\xrightarrow[\;d_f\;]{\text{candidate partition}}
\{\hat p_t^{(u,j)}\}_{j=1}^{m_t}.
\]
Thus increasing $d_f$ does not instantiate another behavioral encoder
or repeat memory construction. The degrees $d_s$ and $d_f$ describe
logical partitioning dimensions and need not correspond to disjoint
hardware pools.

This separation is visible empirically. Increasing event fanout from
64 to 512 candidates changes history-side encoder FLOPs per event by
less than $1\%$, while candidate throughput increases by approximately
$4\times$ before behavior memory or communication saturation. The scaling
behavior therefore reflects additional candidate-side parallelism
rather than replicated history computation.

\paragraph{Event state as the communication boundary.}
The event owner forms the boundary between sequence-side computation and
candidate fanout. Sequence ranks provide only the local states and
global scan outputs required for event $t$; the owner assembles
$M_{t}^{(u)}$ and projects its compact KV once. Candidate workers then
consume this common projected state for their assigned candidate tiles.
Communication across the boundary therefore operates on event-level KV
rather than independently materialized behavioral prefixes.

Backward follows the same ownership structure in reverse. Now we use
$\nabla_{K_{t}^{(u)}}\mathcal L^{(r)}$ and $\nabla_{V_{t}^{(u)}}\mathcal L^{(r)}$ to denote the KV gradients produced
by fanout worker $r$. After reducing them to $\nabla_{K_{t}^{(u)}}\mathcal L
=
\sum_{r=1}^{d_f}
\nabla_{K_{t}^{(u)}}\mathcal L^{(r)},\,
\nabla_{V_{t}^{(u)}}\mathcal L
=
\sum_{r=1}^{d_f}
\nabla_{V_{t}^{(u)}}\mathcal L^{(r)}$ at the event boundary,
the event owner applies the two-stage reverse calculation in the previous section to obtain $\nabla_{M_{t}^{(u)}}\mathcal L$ and then
$(\nabla_{G_{t}^{(u)}}\mathcal L,\, \nabla_{L_{t}^{(u)}}\mathcal L)$.

Afterwards, the local and global gradients re-enter the shared
trajectory through the scatter-add and prefix-scan adjoints of
Section~\ref{sec:training-system}. Hence candidate
multiplicity disappears before gradients return to the shared
trajectory graph.

\paragraph{Fanout-aware execution and communication overlap.}
Within each event, candidate workers batch candidate--head rows that
share the same grouped KV. Native fused GQA avoids explicit
candidate-wise KV replication; where beneficial, an event-level KV tile
remains resident while multiple candidate tiles are processed and their
KV gradients are accumulated locally.

Communication is overlapped across independent event and optimization
work: projected KV for one event may transfer while another event's
behavior memory executes, completed event-KV gradients may return while other
candidate tiles continue, and closed dense-gradient or sparse-row
reductions may proceed during the remainder of backward. In the measured model
training workload, this overlap reduces exposed communication from
approximately $18\%$ to $10\%$ of step wall time.

\paragraph{Objective-preserving distributed execution.}
Physical partitioning must preserve the logical training objective of
Section~\ref{sec:multi-cutoff-training}. Loss numerators are accumulated
over the globally valid supervised targets and the declared
normalization is applied once at the logical update boundary. Likewise,
all same-step gradient contributions to a dense parameter or sparse
embedding row are combined before the corresponding nonlinear optimizer
update. Sequence and fanout partitioning therefore change the physical
execution graph, not the objective or optimizer semantics.

The underlying parameter and embedding sharding use standard distributed
primitives; the TransX-specific execution mechanisms are trajectory
packing, multi-cutoff memory extraction, event-state ownership, and
sequence--fanout scheduling.

\subsection{Disaggregated Serving}
\label{sec:disaggregated-serving}

\begin{figure}[htb]
    \centering
    \includegraphics[width=0.95\linewidth]{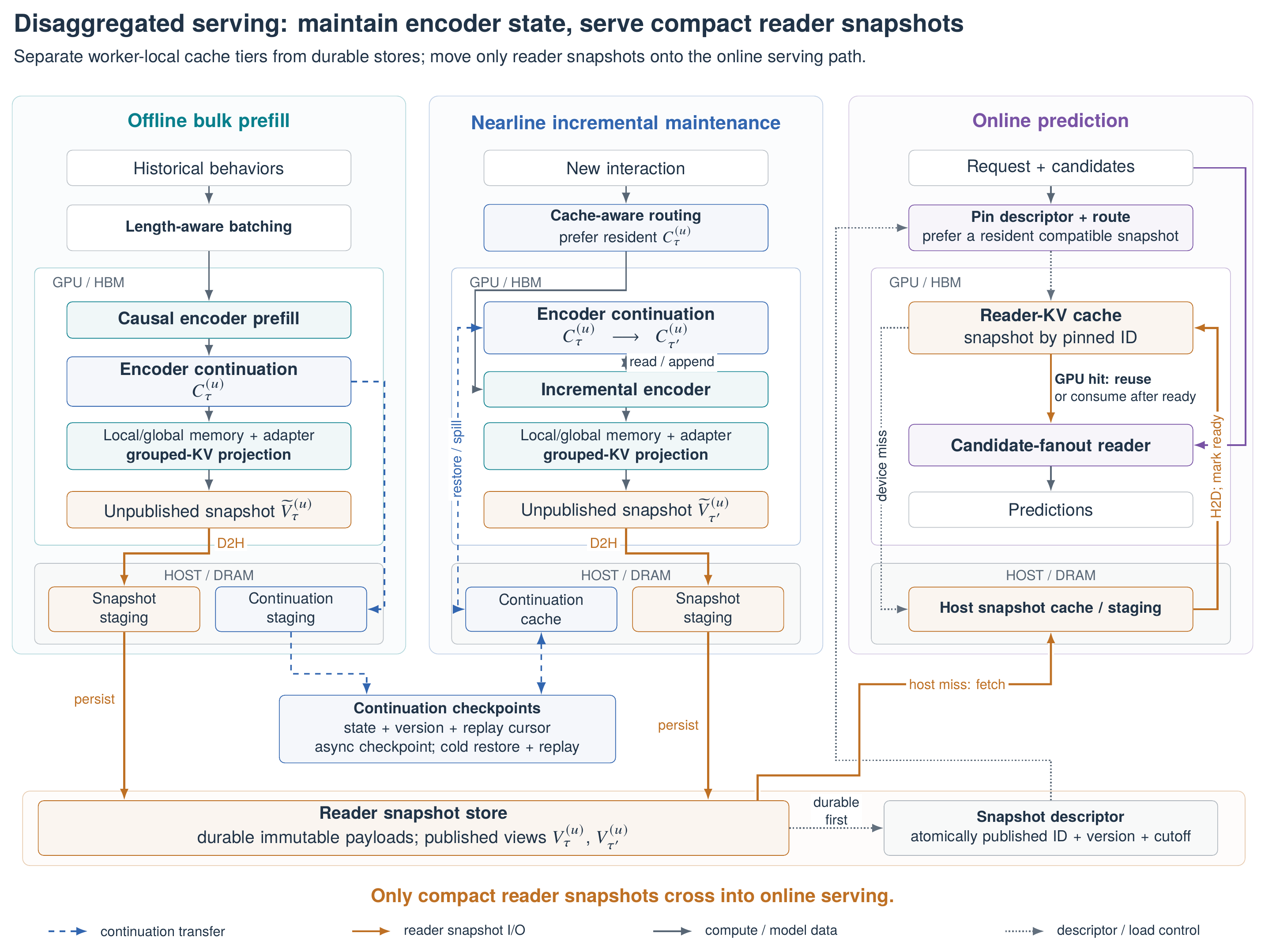}
    \caption{\textbf{Disaggregated TransX serving.} Offline and nearline
    workers maintain encoder continuation state and publish compact immutable
    behavior memory snapshots; online requests pin and consume only compatible behavior memory
    state. Continuation recovery remains off the online serving path, while
    snapshot residency and transfer are managed independently across GPU,
    host, and durable tiers. In practice, hardware placement can be made workload dependent as the three worker stages define logical execution roles rather than fixed device types (CPU/GPU).}
    \label{fig:disaggregated-serving}
\end{figure}

The amortization boundary induces serving workloads that operate on
different clocks. Behavioral state must be constructed when a new model
or representation is deployed, advanced when new user behavior arrives,
and consumed when recommendation requests are issued. These workloads
have fundamentally different execution objectives: bulk construction is
throughput-oriented, incremental maintenance is freshness-constrained,
and online prediction is latency-critical. Co-locating them under one
scheduler would couple workloads with incompatible batching, locality,
and tail-latency requirements.

TransX therefore disaggregates serving into three stages:
\emph{bulk state construction}, \emph{nearline incremental maintenance},
and \emph{online prediction}. This is not a request-synchronous
producer--consumer pipeline. History-dependent state is normally
constructed or refreshed before a prediction request arrives; online
serving consumes already materialized state rather than synchronously
advancing the behavioral encoder. The system therefore performs work at
the rate of the information that invalidates it: model deployment
triggers population-scale construction, behavioral updates trigger
state maintenance, and prediction demand triggers only
request- and candidate-conditioned computation.

The boundary is represented by two persistent state objects. Encoder
\emph{continuation state}
\[
C_{\tau}^{(u)}
\]
is mutable maintenance state containing the information required to
advance the causal encoder, including encoder KV, temporal metadata,
pooling state, recent outputs, and replay position. For serving notation, we use
\[
V_{\tau}^{(u)}
\]
to denote the \emph{published behavior-memory KV snapshot}: a bounded,
immutable projection of the behavioral memory consumed by online prediction. Bulk and nearline workers are the only
components that advance $C$ and publish new $V$; online requests resolve
and pin a compatible $V$ without restoring or advancing $C$.

This separation makes the three workloads independently schedulable and
scalable while preserving a common model interface. Bulk capacity
follows rollout and backfill demand, nearline capacity follows behavior
arrival and freshness lag, and online capacity follows request traffic,
candidate fanout, and the latency SLO. Table~\ref{tab:serving-stage-objectives}
summarizes these execution regimes before the following subsections
describe each path in turn.

\begin{table}[hbt]
\centering
\small
\setlength{\tabcolsep}{4pt}
\renewcommand{\arraystretch}{1.12}
\begin{tabularx}{\linewidth}{
  @{}p{0.20\linewidth}
     p{0.20\linewidth}
     p{0.3\linewidth}
     X@{}
}
\toprule
\textbf{Stage}
& \textbf{Trigger}
& \textbf{Execution objective}
& \textbf{Primary measurements} \\
\midrule

Bulk prefill
&
Deployment / backfill
&
Maximize sequence-processing throughput 
&
tokens/s/GPU, MFU
\\ \hline

Nearline incremental maintenance
&
User behavior updates
&
Maximize state update throughput within the freshness budget
&
Updates/s, p90 update lag, state-transfer payload
\\ \hline

Online prediction
&
Scoring requests
&
Minimize p99 end-to-end latency under production candidate fanout
&
p99 latency, requests/s/GPU, candidates/s/GPU
\\

\bottomrule
\end{tabularx}
\caption{
Disaggregated TransX serving workloads. The three stages operate on
different triggers and optimize different resource objectives even though
they share the same model representation interface.
}
\label{tab:serving-stage-objectives}
\end{table}

\subsubsection{Offline Bulk Prefill}
\label{sec:bulk-prefill}

A new model or representation version initially has no compatible
behavioral state for the serving population. The bulk path constructs
this state from historical trajectories before traffic is migrated to
the new representation:
\[
H_{\le \tau}^{(u)}
\xrightarrow{\operatorname{BulkPrefill}}
\left(
C_{\tau}^{(u)},
V_{\tau}^{(u)}
\right),
\]
where $C_{\tau}^{(u)}$ is encoder continuation state used by subsequent
nearline updates and $V_{\tau}^{(u)}$ is the behavior memory KV snapshot
consumed by online prediction. Bulk prefill is therefore a
population-scale initialization workload rather than a request-serving
workload; its objective is to maximize useful sequence-processing
throughput while completing the backfill within the rollout deadline.

\paragraph{Rolling-history semantics.}
Bulk prefill also serves as the rebase of behavioral state. Each bulk
cycle reconstructs continuation state $C_{\tau}^{(u)}$ and the corresponding
behavior-memory KV snapshot $V_{\tau}^{(u)}$ according to the configured retention
policy, including the lookback window and maximum sequence length.
Nearline maintenance can therefore advance $C_{\tau}^{(u)}$ append-only as
new behavior arrives until the next bulk prefill, avoiding the need to
remove expired records from already-contextualized hidden states in
place. Records that expire between bulk cycles may remain represented
until the next rebase.

\paragraph{Throughput-oriented execution.}
Behavioral trajectories have a highly skewed length distribution, so
fixed user-count batches would either waste computation on padding or
allow a small number of long histories to determine the execution
shape. We instead process workload under a behavior-token budget and form
length-aware variable-length batches. Since causal attention preserves user
boundaries without padded tokens, long trajectories can be processed in
bounded chunks while carrying encoder continuation state across chunks,
and these chunks are co-scheduled with shorter trajectories to reduce
long-tail stragglers. Context parallelism is used selectively when an
individual trajectory remains too large for efficient single-device
execution. Across the production workload, this path processes
approximately $0.7$M valid behavior tokens/s/H200 at $61\%$ encoder MFU
and $34$~GiB peak HBM, a $1.5\times$ throughput improvement over the
naive bulk implementation without chunking.

\paragraph{Bulk prefill state outputs.}
Bulk construction produces two state objects with different lifetimes
and scaling properties. Published behavior-memory KV $V_{\tau}^{(u)}$ is bounded by
the behavior memory interface and is persisted for subsequent online
consumption. Continuation state $C_{\tau}^{(u)}$ contains the encoder state
required to incrementally incorporate future behavior and may grow with
the retained history. The two objects therefore use independent
placement and residency policies: $V$ is optimized for repeated
prediction-time access, whereas $C$ is maintained only by the bulk and
nearline paths. Section~\ref{sec:nearline-maintenance} describes how
$C$ is subsequently advanced and converted into newer immutable behavior memory
snapshots; detailed storage and version-transition mechanics are given
in Appendix~\ref{app:snapshot-lifecycle}.

\subsubsection{Nearline State Refresh}
\label{sec:nearline-maintenance}

The nearline path converts newly arriving behavior into prediction-ready
behavior memory state under a freshness constraint. For user $u$, continuation
state $C_{\tau}^{(u)}$ is advanced with newly available behavior and used to
construct a new immutable behavior memory snapshot:
\[
\left(
C_{\tau}^{(u)},
\Delta H^{(u)}_{(\tau,\tau']}
\right)
\xrightarrow{\operatorname{Advance}}
C_{\tau'}^{(u)}
\xrightarrow{\operatorname{Project}}
\widetilde V_{\tau'}^{(u)}
\xrightarrow{\operatorname{Publish}}
V_{\tau'}^{(u)},
\label{eq:nearline-transition}
\]
where new behavior is contextualized by the encoder and then processed by the
global--local memory construction mechanism. 

\paragraph{Freshness-aware continuation-state routing.}
The cost of an incremental update depends jointly on the arriving
behavior delta and the continuation state needed to advance it.
Routing therefore considers
\[
\left(
|\Delta H|,
|C_{\tau}^{(u)}|,
\text{state locality},
\text{worker queue},
\text{freshness slack}
\right).
\]
Updates are preferentially routed to workers that already hold compatible
continuation state, avoiding unnecessary state movement. When queueing at
the current owner would consume the remaining freshness slack, however,
the update may be reassigned to another worker. Compatible updates are
microbatched across users subject to their freshness deadlines, while
state transitions for each user remain causally ordered. Under matched
update load, this locality- and freshness-aware routing policy reduces
p90 freshness lag by approximately $30\%$ relative to locality-agnostic
FIFO routing.

\paragraph{Immutable publication and recovery.}
A behavior-memory snapshot becomes visible only after its payload and
producing state have been validated. Publication then atomically advances
the user-visible snapshot descriptor, while each request pins one
compatible immutable snapshot for its lifetime. Continuation state is
checkpointed asynchronously together with its source cursor and version
metadata, allowing worker loss to be recovered by restoring a compatible
checkpoint and replaying the retained behavior-log suffix. Recovery
affects subsequent state production but does not block online requests
that can already consume a published compatible snapshot. Across users,
state fetch, incremental encoding, snapshot persistence, and publication
are pipelined to hide transfer and I/O latency while preserving per-user
causal ordering.

More design details are deferred to Appendix~\ref{app:snapshot-lifecycle}.

\subsubsection{Latency-Centric Online Prediction}
\label{sec:online-fusion-prediction}

The online serving path consumes materialized behavioral state and performs only
request- and candidate-conditioned prediction. Its objective is therefore
different from both bulk prefill and nearline maintenance: maximize
request goodput while satisfying the tail-latency SLO under variable
candidate fanout and state locality.

A useful approximation to the latency-critical serving path is
\[
T_{\mathrm{req}}
\approx
T_{\mathrm{queue}}
+
\max
\left(
T_{\mathrm{state}},
T_{\mathrm{feat}}
\right)
+
T_{\mathrm{fusion}}(k_t,n_t)
+
T_{\mathrm{head}},
\label{eq:online-critical-path}
\]
where $T_{\mathrm{state}}$ is snapshot lookup/admission and transfer,
$T_{\mathrm{feat}}$ is request/candidate feature preparation, and
$T_{\mathrm{fusion}}$ and $T_{\mathrm{head}}$ are the
candidate-conditioned model costs. The two preparation paths can
proceed concurrently. Importantly, no encoder restoration or behavioral
advancement appears on this latency-critical serving path.

\paragraph{Request readiness and state locality.}
At admission, each request resolves and pins one compatible
behavior-memory snapshot for all candidate batches.
A device-resident snapshot removes behavior-memory transfer from the
latency-critical path; concurrent misses for the same snapshot are
coalesced, and the resolved state remains pinned for the request lifetime.
Request-side sparse features follow the same readiness discipline:
GPU- and host-resident rows are served locally, while remote misses are
deduplicated and batched by owning shard. Behavior-memory KV preparation
and independent feature lookup proceed concurrently, and each candidate
batch is admitted only after all required state is ready. Detailed
sparse-embedding lookup mechanics are deferred to
Appendix~\ref{app:sparse-embedding-serving}.

\paragraph{Fanout-efficient candidate execution.}
Once request state is ready, candidate fanout is the dominant
model-side source of latency variation. Our production requests contain
approximately $0.5$K--$1.2$K candidate queries, with a bounded number
of behavior memory states. TransX therefore evaluates candidate-to-memory
attention as a rectangular operation,
\[
Q_t^{\text{cand}} {(K_t^{\text{mem}})}^\top
\in
\mathbb R^{k_t\times n_t},
\qquad
n_t \ll k_t,
\]
using preprojected behavior memory KV shared across the complete candidate
fanout, where \(Q_t^{\rm cand}\in\mathbb R^{k_t\times d_h}\) denotes the projected candidate-query matrix and \(K_t^{\rm mem}\in\mathbb R^{n_t\times d_h}\) the projected keys from the pinned behavior-memory snapshot. The online model cost consequently has the form
\[
C_{\mathrm{online},t}
=
C_{\mathrm{req}}(n_t)
+
k_t C_{\mathrm{cand}}(n_t),
\label{eq:online-fanout-cost}
\]
with no dependence on the full behavioral-history length once the
snapshot has been materialized. Behavior-memory KV is staged once and reused across candidate tiles and query heads sharing the same KV state. Candidate execution uses a small set of regular fanout shapes to bound temporary memory and per-kernel execution time while maintaining high GPU utilization.

\paragraph{Latency-budgeted admission and degradation.}
Queueing and state movement can dominate tail latency even when model
execution is efficient. Let $B_{\mathrm{rem}}$ denote the residual
request budget at a scheduling decision. A snapshot load or candidate
batch is admitted only when its estimated remaining work satisfies
\[
\widehat T_{\mathrm{load}}
+
\widehat T_{\mathrm{queue}}
+
\widehat T_{\mathrm{score}}
\le
B_{\mathrm{rem}}.
\label{eq:latency-admission}
\]
If the freshest compatible snapshot cannot be made ready within the
remaining budget, the request consumes an older admissible snapshot or
falls back to the designated non-sequential model rather than blocking
on continuation-state recovery. Snapshot refresh, cache refill, and
continuation reconstruction proceed asynchronously outside the
latency-critical serving path.

The resulting online contract is therefore simple: consume one
compatible immutable snapshot, make request state ready within the
latency budget, and scale prediction work with candidate fanout rather
than behavioral-history reconstruction.

\subsection{KV Storage and Execution}
\label{sec:kv-cache}

Online TransX execution operates on projected cross-attention KV rather
than on the model-side behavioral memory itself. Once a
memory $M_{\tau}^{(u)}$ has been constructed, each cross attention layer projects it
into the KV state to be consumed by candidate-conditioned cross attention:
\[
M_{\tau}^{(u)}
\;\longrightarrow\;
V_{\tau}^{(u)}
\;\longrightarrow\;
\operatorname{Pack}(V_{\tau}^{(u)}),
\label{eq:kv-representation-boundary}
\]
where $V_{\tau}^{(u)}$ denotes the logical behavior memory KV and
$\operatorname{Pack}(V_{\tau}^{(u)})$ its physical storage representation.
Bulk and nearline execution publish this KV state, and online workers
retrieve, cache, and execute directly against it. 

This boundary separates three concerns. \emph{KV fidelity} determines
how accurately the projected state is represented;
\emph{KV residency} determines whether the state resides in on-disk
storage, host memory, or GPU memory; and
\emph{KV execution format} determines whether compressed KV is decoded
on demand or expanded once and retained for repeated attention reads.
These choices affect storage, transfer, HBM footprint, and kernel cost, but they do not change the request-level semantics of the published KV.

\subsubsection{KV Quantization}
\label{sec:kv-quantization}

Although the number of behavior-memory states is bounded, storing
high-precision projected KV for a large active population can still
create substantial storage and transfer cost. We therefore quantize
behavior memory KV before publication.

The deployed codec uses a calibrated orthogonal rotation followed by
energy-aware precision allocation under a fixed storage budget. 
We observe pronounced channel anisotropy in unrotated KV projections,
similar to activation outliers reported in large Transformers
\citep{dettmers2022gpt3,xiao2023smoothquant}. We therefore rotate the
normalized KV into an empirical energy-aligned basis and allocate precision
according to per-coordinate energy, yielding a representation
better suited to low-bit quantization.
Codec parameters are estimated offline from the model-training behavior memory KV and versioned with the corresponding model and projection. Bulk and nearline workers apply the codec before publishing KV, so no fitting is required on the online serving path. Appendix \ref{app:kv-codec} gives the rotation, bit-allocation, and reconstruction equations.

In the measured configuration, a 4-bit-average direction payload reduces median behavior memory KV from approximately $28$~KiB/user in BF16 to $8.5$~KiB/user, a
$3.2\times$ reduction, with absolute test-AUC change within
$2\times10^{-3}$. Thus the physical KV footprint can be reduced
substantially without changing the logical memory resolution exposed by
the model.

\paragraph{Packed versus expanded KV execution.}
Compressed KV can be executed in two forms. Under \emph{packed
residency}, the compressed representation remains resident and active
KV tiles are decoded inside the attention path:
\[
\operatorname{Pack}(V)
\xrightarrow{\text{decode-and-attend}}
O.
\]
Under \emph{expanded residency}, KV is decoded once when admitted to the
GPU and the reconstructed representation is retained in HBM:
\[
\operatorname{Pack}(V)
\xrightarrow{\text{decode once}}
V^{\mathrm{bf16}}
\xrightarrow{\text{repeated attention}}
O.
\]
Packed execution reduces HBM footprint but pays decode cost repeatedly;
expanded execution spends additional HBM to amortize decoding across
subsequent candidate tiles or requests.

For 4-bit-average behavior memory KV configuration, packed residency uses approximately $3.1\times$
less behavior memory-state HBM than retaining expanded high-precision KV, with
approximately $6\%$ higher behavior memory-kernel latency. In the measured
workload, expanded residency becomes preferable after approximately
$4$--$6$ reads of the same KV snapshot. The packed-versus-expanded
choice is therefore itself an amortization decision: decode repeatedly,
or pay the expansion cost once and amortize it over future reads.

\subsubsection{Tiered KV Residency}
\label{sec:kv-residency}

Published KV is stored in compressed form in a shared on-disk storage tier and
cached opportunistically in host and GPU memory:
\[
\text{on-disk KV}
\;\longrightarrow\;
\text{host-resident KV}
\;\longrightarrow\;
\text{GPU-resident KV}.
\]
Residency is determined by expected KV reuse relative to resident bytes,
transfer cost, decode cost, and available memory. Cold KV remains packed
in the on-disk store; warmer state may be retained in host memory; and
active KV can be admitted to GPU in either packed or expanded form.

Importantly, residency and execution format are independent decisions.
A packed KV snapshot may move across disk, host, and GPU tiers
without changing its logical version, while promotion from packed to
expanded GPU residency changes only its local execution representation.
The same published KV therefore admits the following operating points:

\begin{center}
\small
\begin{tabular}{lll}
\toprule
\textbf{Placement} & \textbf{Format} & \textbf{Preferred regime} \\
\midrule
Backing store & Packed
& Cold population; minimize storage footprint \\
Host memory & Packed
& Warm state; avoid repeated on-disk store reads \\
GPU memory & Packed
& HBM-constrained or low expected reuse \\
GPU memory & Expanded
& Repeated reads with sufficient HBM \\
\bottomrule
\end{tabular}
\end{center}

KV admission can therefore adapt to access frequency without changing
the model interface. A newly fetched snapshot may initially remain
packed on GPU and later be promoted to expanded residency if observed or
predicted reuse justifies the additional HBM. Conversely, inactive,
unpinned KV can be demoted while preserving its compressed on-disk
representation.

All requests consume only KV compatible with their pinned model and
representation version. As the persistent behavioral state consumed by online prediction, the published behavior-memory KV snapshot \(V_{\tau}^{(u)}\) forms the natural boundary for storage, transfer, and device-execution optimization. Detailed compatibility metadata, immutable
publication, on-disk-store layout, request-scoped pinning, and
dual-version rollout are described in
Appendix~\ref{app:snapshot-lifecycle}.

\section{Scaling and Compute Allocation}
\label{sec:scaling}


Prior model scaling studies primarily characterized how predictive quality scales with model
capacity, data, and compute in language modeling and recommendation~\citep{hoffmann2022training,
zhang2024wukong,zhai2024actions}.
Rather than treating model size or total FLOPs as a single scaling
coordinate, we study how model's predictive gains are distributed across
behavioral encoding, behavior-memory resolution, and
candidate-conditioned fusion, and \emph{how this allocation changes with an increasing supervision budget}.
TransX makes these computational roles explicit so scaling can be studied along
three model coordinates:
\[
C_E
\;=\;
\text{behavior-encoder compute per encoded trajectory},
\]
\[
B_M
\;=\;
\text{behavior-memory budget exposed to online prediction},
\]
\[
C_F
\;=\;
\text{candidate-conditioned fusion compute per candidate}.
\]
The behavior memory budget $B_M$ measures the logical size of the state
exposed through the behavioral-memory interface; under fixed memory
width and precision it is proportional to the number of behavior-memory slots exposed to online prediction and to the uncompressed snapshot size. Unless otherwise stated, the memory-resolution sweeps in this section use
BF16 projected KV without quantization.

We study these model scaling axes under two distinct forms of data support.
Behavioral support $h$ characterizes the amount of historical evidence
available to the model, whereas supervision budget $\eta$ characterizes
the amount of mature response feedback available for training the
prediction model.


\subsection{Experimental Design}
\label{sec:scaling-design}

For data regime $d=(h,\eta)$,
we write
\[
Q(C_E,B_M,C_F\mid d)
\]
for held-out predictive quality. Unless otherwise stated, $Q$ denotes
the primary AUC metric used in the scaling experiments. This formulation leads to three questions. 
\begin{itemize}
    \item \textbf{First}, can additional behavioral-encoding compute improve prediction without increasing the behavior-memory budget exposed to prediction?
    \item \textbf{Second}, how does increasing supervision change the relative return to behavioral encoding versus candidate-conditioned fusion?
    \item \textbf{Third}, as model capacity grows, must the persistent behavior-memory interface become progressively finer?
\end{itemize}

\paragraph{Data support.}
We separate behavioral evidence from prediction supervision. Let
$l_t^{(u)}$ denote the number of eligible behavioral records available
at prediction event $t$. Using quantiles measured on a fixed reference
population, we construct three disjoint behavioral-support cohorts:
\[
\mathcal D_{\mathrm{sparse}}
=
\{(u,t): l_t^{(u)} \in[\ell_{5}, \ell_{33})\},\, \mathcal D_{\mathrm{medium}}
=
\{(u,t):  l_t^{(u)} \in [\ell_{33}, \ell_{66})\}, \, \mathcal D_{\mathrm{rich}}
=
\{(u,t): l_t^{(u)} \in [\ell_{66}, \ell_{95})\},
\]
where $\ell_q$ denotes the $q$-th percentile of history length in our
reference population. We use
\[
h\in\{\mathrm{sparse},\mathrm{medium},\mathrm{rich}\}
\]
to index these regimes. Their purpose is to test whether the value of model and memory resources changes systematically across populations with different amounts of behavioral support.

Within each history cohort, we independently vary the amount of
training supervision. Our scaling study uses a one-pass training protocol, so increasing \(\eta\) also increases the number of optimizer updates. Appendix~\ref{app:scaling-manifest} reports the corresponding training budgets. Let $N_{\mathrm{ref}}$ denote the reference number
of mature supervised candidate targets and define
\[
\eta = \frac{N}{N_{\mathrm{ref}}},
\qquad
\eta\in
\{\eta_{\mathrm{low}},
  \eta_{\mathrm{mid}},
  \eta_{\mathrm{high}}\}
\]
where \(N\) denote the number of supervised targets included in the corresponding regime. To construct a fixed $(h,\eta)$ regime, whole serving-event groups are sampled until approximately $\eta N_{\mathrm{ref}}$ valid candidate targets are
included. All architectural configurations compared within that regime
use the same sampled training data and the same held-out evaluation
population. Thus, $(h_1,\eta)\rightarrow(h_2,\eta)$
compares populations with different behavioral support at matched
supervision, while $(h,\eta_1)\rightarrow(h,\eta_2)$
changes supervision while holding the behavioral-support population
fixed.

\paragraph{Model scaling coordinates.}
The encoder coordinate $C_E$ is varied through a controlled family of
behavior-encoder configurations ordered by encoder MFLOPs per encoded
trajectory. The memory coordinate $B_M$ is varied through progressively
finer global--local memory realizations while holding the memory
construction family fixed. The fusion coordinate $C_F$ is varied
through candidate-conditioned cross-attention width and depth, reported
in MFLOPs per candidate.

For memory configuration $m_j$, let $n_t^{(j)}$ denote its number of memory slots exposed to online prediction. We additionally report the \emph{normalized memory resolution}
\[
\rho_j(d)
=
\mathbb E
\left[
\left.
\frac{n_t^{(j)}}{l_t^{(u)}}
\right| d
\right],
\label{eq:memory-density}
\]
which measures the fraction of encoded history exposed to online prediction.
Because $\rho$ can decrease mechanically as histories become longer, it
is never interpreted alone as a systems cost. The next section reports $\rho$ together with absolute slot
count and behavior memory state size.

For model configuration
\[
z=(C_E,B_M,C_F),
\]
we measure quality relative to a fixed reference configuration $z_0$
within the same data regime:
\[
\Delta Q(z\mid d)
=
Q(z\mid d)-Q(z_0\mid d).
\label{eq:relative-quality}
\]

\paragraph{Comparing marginal returns.}
Encoder and fusion FLOPs occur at different reuse multiplicities:
encoder computation is shared by all supervised events and candidate
targets associated with an encoded trajectory, whereas fusion is
executed per candidate. Therefore, raw component MFLOPs are not directly
exchangeable. To achieve matched-allocation analysis, let
\[
\nu(d)
=
\frac{
\text{number of valid supervised candidate targets}
}{
\text{number of encoded trajectories}
}
\]
denote the measured encoder-reuse factor under regime $d$. We define
the amortized encoder compute per supervised target as
\[
\bar C_E(d)
=
\frac{C_E}{\nu(d)}.
\label{eq:amortized-encoder-compute}
\]
This allows encoder and fusion computation to be expressed on a common
training-equivalent per-target axis $C_{\mathrm{eq}}
=
\bar C_E + C_F$
with all other model components held fixed\footnote{Note that $C_{\rm eq}$ is a training-equivalent allocation coordinate under the
measured reuse density $\nu$; it is not a serving-cost objective, where
encoder and fusion work occur at different execution rates.}. 

For adjacent measured configurations $c_1$ and $c_2$ along one scaling
axis, let $C(c_i), i\in\{1,2\}$ denote the corresponding compute
coordinate: $\bar C_E$ for encoder comparisons and $C_F$ for fusion
comparisons. We summarize local empirical return as
\begin{equation}
g(c_1,c_2)
=
\frac{
Q(c_2)-Q(c_1)
}{
\log C(c_2)-\log C(c_1)
}.
\label{eq:local-scaling-return}
\end{equation}
We use this quantity only over overlapping measured compute ranges. It
is a finite-difference summary of the experimental sweep, not a
continuous derivative or a universal scaling exponent. The complete scaling manifest is reported in Appendix~\ref{app:scaling-manifest}.

\subsection{Scaling Returns and Compute Allocation}
\label{sec:scaling-analysis}

\begin{figure}
    \centering
    \includegraphics[width=0.95\linewidth]{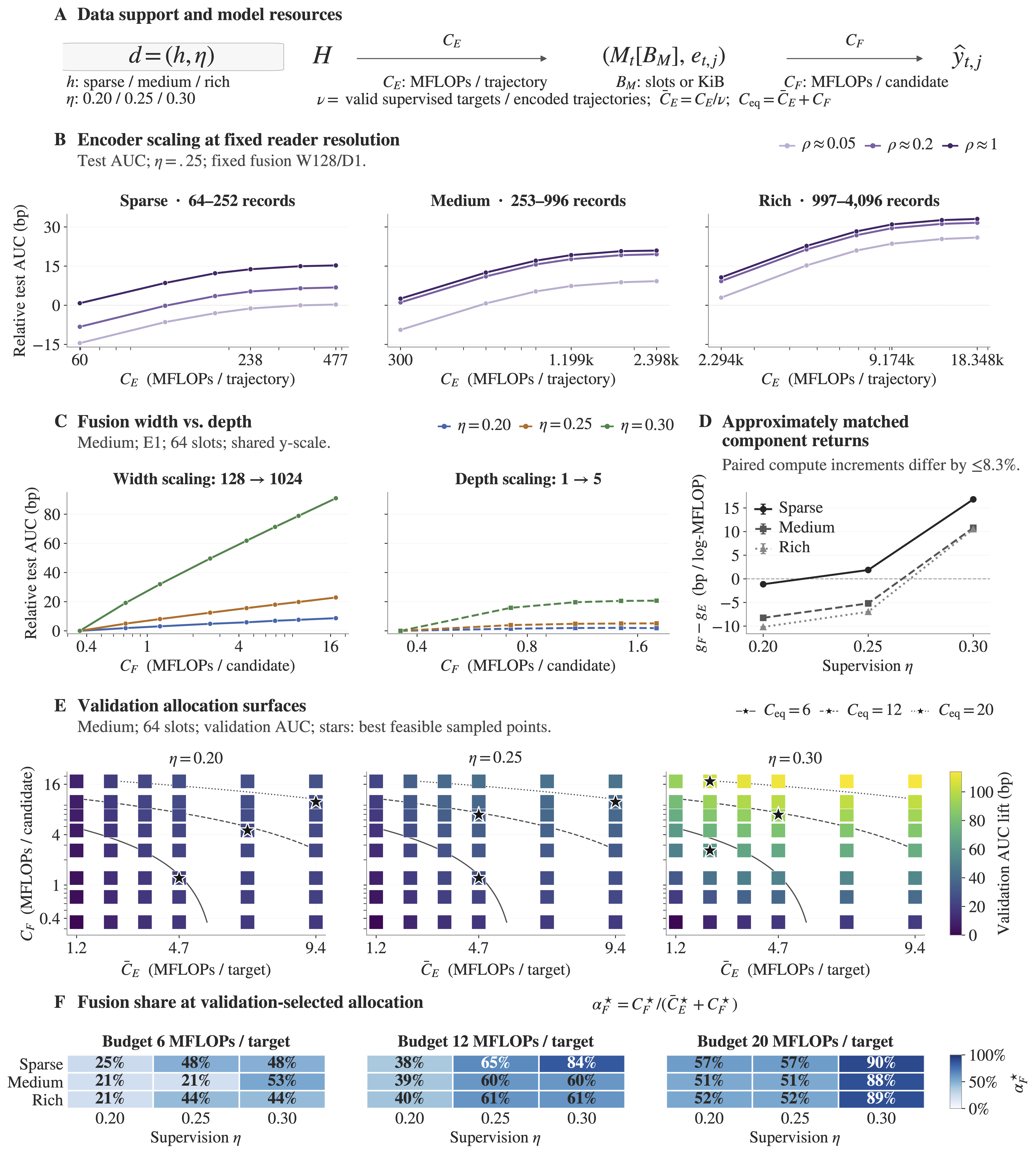}
    \caption{
    \textbf{Compute allocation under changing data support.}
    \textbf{(A)} Experimental decomposition of data support
    $d=(h,\eta)$ and model resources: behavioral-encoder compute $C_E$,
    behavior memory budget $B_M$, and candidate-conditioned fusion compute
    $C_F$. Encoder compute is amortized by the measured number of valid
    supervised targets per encoded trajectory,
    $\bar C_E=C_E/\nu$, with equivalent compute
    $C_{\mathrm{eq}}=\bar C_E+C_F$.
    \textbf{(B)} Increasing encoder capacity improves test AUC at fixed
    behavior-memory resolution across sparse-, medium-, and rich-history
    cohorts.
    \textbf{(C)} Fusion width provides more sustained gains than additional
    cross attention depth, with larger returns under increased supervision.
    \textbf{(D)} Approximately matched finite-difference comparisons show
    the relative marginal return shifting from encoder toward fusion compute
    as supervision increases; paired compute increments differ by at most
    $8.3\%$.
    \textbf{(E)} Validation AUC over the sampled encoder--fusion allocation
    grid for the medium-history cohort; contours denote equivalent-compute
    budgets and stars mark the best feasible measured configuration on each
    budget.
    \textbf{(F)} Fusion share
    $\alpha_F^\star=C_F^\star/(\bar C_E^\star+C_F^\star)$ at the
    validation-selected allocation across history cohorts, supervision
    budgets, and equivalent-compute budgets.
    Together, the results show that recommendation scaling is an
    allocation problem: increasing supervision changes where additional
    compute is most productively placed. 
    }
    \label{fig:scaling-compute-allocation}
\end{figure}

\begin{figure}[htb]
    \centering
    \includegraphics[width=0.95\linewidth]{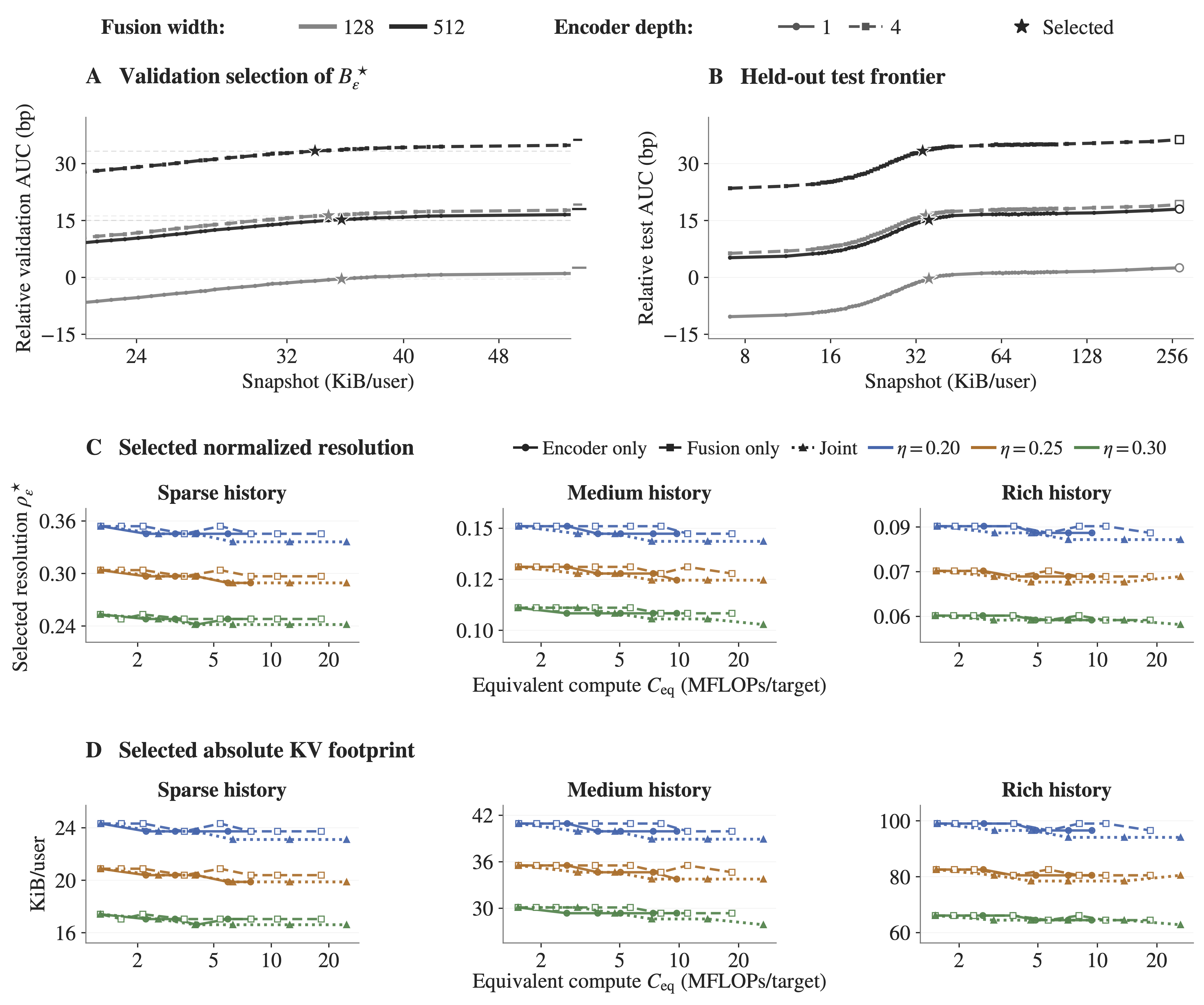}
    \caption{
    \textbf{Behavior memory resolution under model scaling.}
    \textbf{(A)} Validation selection of the minimum
    $\epsilon$-adequate behavior memory budget $B_\epsilon^\star$: for each
    encoder--fusion configuration, the star marks the smallest evaluated
    snapshot remaining within $\epsilon$ of its model-specific
    highest-resolution reference.
    \textbf{(B)} Held-out test quality–memory-budget frontier for the same
    configurations, with the validation-selected operating points carried
    forward to test evaluation.
    \textbf{(C)} Selected normalized behavior-memory resolution (density)
    $\rho_\epsilon^\star$ as equivalent model compute increases across
    sparse-, medium-, and rich-history cohorts and across supervision
    budgets.
    \textbf{(D)} Corresponding absolute prediction-facing physical KV footprint in
    KiB/user.
    Across encoder-only, fusion-only, and joint scaling paths, increasing
    model capability does not systematically require finer behavior memory
    resolution or proportionally larger KV state. Larger models instead
    shift the empirical quality-memory-budget frontier upward, separating
    model scaling from growth of the persistent behavioral interface.
    }
    \label{fig:scaling-memory-frontier}
\end{figure}

\paragraph{Behavioral Encoder Scaling at Fixed Memory Resolution.}
Figure~\ref{fig:scaling-compute-allocation}B isolates encoder scaling
while holding behavior memory resolution and candidate-conditioned fusion
fixed. Across the evaluated history-support regimes and memory
resolutions, increasing encoder compute improves held-out quality, with
the largest gains occurring in the lower-to-middle portion of the sweep
and substantially smaller returns beyond approximately four layers.
Because these gains occur at fixed $B_M$ and $\rho$, they cannot be attributed simply to exposing more behavioral state to online prediction.

The result shows that additional behavioral computation can improve the
predictive utility of a fixed-resolution interface. A stronger encoder
may preserve more response-relevant information through memory
construction, organize the retained information into a representation
that is easier for the downstream predictor to exploit, or both. The observed
saturation is also operationally important: encoder capacity affects
training, bulk construction, incremental maintenance, and continuation
state, so behavioral compute should be treated as one resource in a
broader allocation problem rather than as the sole scaling axis.

\paragraph{Supervision Changes the Return to Candidate-Conditioned Compute.}
Figure~\ref{fig:scaling-compute-allocation}C shows that the return to
candidate-conditioned fusion increases with response supervision.
Fusion width provides sustained gains over the evaluated compute range,
whereas additional cross attention depth saturates earlier. Note this does not imply that repeated memory reads are redundant, only that width is the more
effective allocation over the measured configurations. Because training
is end to end, increased supervision also benefits the encoder and
memory construction. The results above should be interpreted as the \emph{relative} return to where additional capacity is placed.

Figure~\ref{fig:scaling-compute-allocation}D--F makes this shift explicit.
For each supervision regime $\eta$, let $g_E(h,\eta)$ and $g_F(h,\eta)$
denote the local empirical returns from
Eq.~\ref{eq:local-scaling-return} along the amortized encoder-compute
axis $\bar C_E$ and the candidate-conditioned fusion-compute axis $C_F$,
respectively, evaluated over approximately matched compute increments
with the remaining scaling coordinates held fixed. We summarize their
relative return by
\[
\Delta_{\mathrm{alloc}}(h,\eta)
=
g_F(h,\eta)-g_E(h,\eta).
\]
As supervision increases, \(\Delta_{\rm alloc}(h,\eta)\) increases within each history-support regime, indicating that the relative marginal return shifts toward candidate-conditioned fusion over the measured range. Consistently,
the allocation-grid analysis shows that, within the evaluated
encoder--fusion grid, the validation-selected configuration under a
fixed equivalent-compute budget moves toward a larger fusion share:
\[
\alpha_F^\star
=
\frac{C_F^\star}
{\bar C_E^\star+C_F^\star}.
\]
Note this does not imply a universal preference for fusion over encoding: the selected allocation depends on the architecture family, behavioral support, memory interface, supervision regime, and compute range. Rather, marginal returns are anisotropic and differ across scaling axes and change with the supervision regime: increasing supervision shifts the most productive use of additional compute toward candidate-conditioned fusion over the measured range.

\paragraph{Scaling the Model Does Not Systematically Require Finer Behavior Memory Resolution.}
We next ask whether larger models require proportionally finer
online-visible behavioral state. For each encoder--fusion configuration,
let $B_\epsilon^\star$ denote the smallest evaluated memory budget whose
quality remains within $\epsilon$ of that configuration's
highest-resolution reference. We use $\epsilon=3$ AUC basis points in
the primary analysis. Figure~\ref{fig:scaling-memory-frontier} shows that
increasing encoder, fusion, or joint model capacity does not
systematically increase this requirement. Instead, the selected
normalized density $\rho_\epsilon^\star$ remains approximately stable
or decreases across the evaluated scaling paths and supervision budgets.

Because $\rho=\mathbb E[n/l]$ can fall mechanically as histories grow,
we additionally report the corresponding absolute KV footprint under a
fixed cross attention width and precision. The absolute results show the same
qualitative behavior: improvements in model capacity do not require
proportional growth in prediction-facing state. Equivalently, larger
models shift the empirical quality-memory-budget frontier upward---higher
quality can be obtained at a comparable behavior-memory budget. Thus,
within the evaluated regimes, scaling model capability and scaling the
persistent behavioral interface are distinct decisions rather than
quantities that must grow together.

\subsection{Interpretation and Implications}
\label{sec:scaling-interpretation}

The scaling results separate two roles: constructing a useful behavioral
representation and learning the candidate-conditioned mapping that
consumes it. For the binary response-prediction setting studied in our scaling analysis, a population log-loss decomposition makes this distinction explicit. For behavioral memory $M=f_\phi(H)$ and predictor
$q_\psi(Y\mid M,X)$ where \(X\) is the prediction-time candidate/request context, we prove in Appendix~\ref{app:scaling-decomposition}:
\begin{equation*}
    \mathcal L(\phi,\psi)= \underbrace{\mathsf H(Y\mid H,X)}_{\text{irreducible uncertainty}}
    +
    \underbrace{I(Y;H\mid M,X)}_{\text{information lost through memory}}
    +
    \underbrace{
    \mathbb E_{M,X}
    D_{\mathrm{KL}}
    \!\left(
    p_M(\cdot\mid M,X)
    \,\Vert\,
    q_\psi(\cdot\mid M,X)
    \right)
    }_{\text{readout mismatch}},
\end{equation*}
where $\mathsf H$ denotes the Shannon entropy. The decomposition is exact for population log loss; for binary response
tasks, BCE is its empirical counterpart, whereas AUC is a complementary
rank-order metric. The value of the decomposition here is to separate the
\emph{information bottleneck} imposed by behavior memory from the mismatch of the
downstream readout. Under this lens, the memory budget $B_M$ constrains the
admissible memory representation and therefore the achievable
information-loss term $I(Y;H\mid M,X)$, while fusion capacity $C_F$
constrains the candidate-conditioned readout and therefore the achievable
KL mismatch.

This suggests two distinct ways in which additional capacity can improve
prediction:
\begin{itemize}
    \item \textbf{Increasing encoder capacity} can improve both the
    predictive information preserved in the behavioral memory and how
    effectively that information is presented for downstream readout.

    \item \textbf{At a fixed memory map, increasing candidate-conditioned
    capacity} acts on the mapping from $(M,X)$ to the response, reducing
    readout mismatch without changing the information available through
    $M$. The observed shift toward fusion under stronger supervision is consistent with a larger supervision budget supporting more effective learning of the candidate-conditioned mapping once the memory is sufficiently informative.
\end{itemize}

This decomposition also helps explain \textbf{why increasing model
capability need not require finer behavior-memory resolution}. The
$\epsilon$-adequate operating point is defined relative to the
highest-resolution memory at the same encoder, fusion, and data regime.
Thus, what matters is not whether model quality improves in absolute
terms, but whether the quality gap between compact and high-resolution
memory grows. If stronger encoding and prediction improve both
representations comparably---or improve the compact representation
relatively more---then the smallest $\epsilon$-adequate memory budget
need not increase with model compute. Consequently,
$B_\epsilon^\star$ and $\rho_\epsilon^\star$ may remain stable or even
decrease as model capability grows. Empirically,
this is what we observe: encoder scaling improves quality at fixed
behavior-memory resolution, increasing supervision shifts the relative
return of compute toward candidate-conditioned fusion, and
quality is preserved at approximately stable or lower behavior-memory resolution.

These results motivate viewing recommendation scaling as
\textbf{conditional compute allocation over an amortized behavioral
interface}. Behavioral encoding, memory budget, and
candidate-conditioned interaction are distinct resources with different
costs and returns. The preferred allocation therefore depends on the
joint data--model operating point rather than on a fixed allocation ratio between encoder and fusion capacity. Operationally, this is particularly favorable for TransX: model capability can increase without requiring the persistent KV state
transferred, cached, and consumed by online prediction to grow
proportionally. 

\section{Post-Training for Capability Specialization}
\label{sec:post-training}

Core TransX training learns a flexible behavioral interface that supports
candidate-conditioned response prediction. We use \emph{post-training} to denote specialization of this interface after core response-model training for various downstream tasks.

\paragraph{Why post-training?}
We use post-training rather than joint multi-task training as the specialization strategy because the core response task and downstream capabilities can impose different output geometries, supervision structures, and serving interfaces~\citep{tang2020progressive,geng2022recommendation}. Core click/conversion training provides dense, high-volume supervision for learning a general behavioral representation, whereas retrieval, preference ranking, and generative recommendation respectively require contrastive matching, event-level preference comparisons, and autoregressive sequence objectives. Separating these stages allows the expensive behavioral
backbone to be learned once and subsequently held fixed while each capability adapts only the reader and output components required by its prediction geometry.  We do not assume that
sequential specialization is universally superior to joint training. Rather, joint learning, partial backbone adaptation, and cross-task distillation are alternative operating points whose relative value depends on task compatibility and available training and serving budgets.

For post-training, let
\[
M_t^{(u)}
=
f_\phi
\left(
H_{\le t}^{(u)}
\right)
\]
denote the behavioral memory produced by the pre-trained tokenizer, causal
encoder, and memory construction from the core response prediction task. A capability $a$ defines its own query
set $Q_t^{(a)}$, $\mathrm{TaskReader}^{(a)}$, and output head $\mathrm{TaskHead}^{(a)}$:
\[
\hat{P}_t^{(a)}
=
\mathrm{TaskHead}^{(a)}
\left(
\mathrm{TaskReader}^{(a)}
\left(
M_t^{(u)},Q_t^{(a)}
\right)
\right).
\label{eq:posttraining-interface}
\]
In the post-training recipe, the behavioral encoder and memory
construction $\phi$ are frozen and task-specific parameters $\psi_a$ are optimized:
\[
\psi_a^\star
=
\arg\min_{\psi_a}
\sum_{u,t}
\mathcal L^{(a)}
\left(
\hat{P}_t^{(a)},Y_t^{(a)};
M_t^{(u)}
\right),
\label{eq:posttraining-objective}
\]
where \(Y_t^{(a)}\) denotes the task-specific target and $\mathcal L^{(a)}$ is the loss objective.
Thus post-training changes how the behavioral representation is
\emph{read and supervised} rather than requiring a separate behavioral
backbone for every capability.

The model-side memory $M_t$ remains shared while the projected reader KV is $\psi_a$-specific:
\[
V_t^{(a)}
=
\operatorname{Proj}_{KV}^{(a)}(M_t).
\label{eq:task-specific-kv}
\]
Consequently, behavioral encoding and memory materialization remain
shared, while reader projections, output geometry, objective, and the
corresponding KV version may differ across specializations.

Table~\ref{tab:posttraining-geometries} summarizes the resulting
readout geometries. The distinction is structural: retrieval collapses
the behavioral state into one contextual matching representation;
preference ranking preserves candidate-parallel scoring but changes the
event-level supervision; generative retrieval replaces explicit
candidate scoring with autoregressive Semantic-ID decoding; and
generative slate ranking conditions sequential selection on
candidate representations computed from the shared TransX backbone.

\begin{table}[t]
\centering
\small
\begin{tabular}{llll}
\toprule
\textbf{Capability} &
\textbf{Readout geometry} &
\textbf{Output} &
\textbf{Post-training objective} \\
\midrule

Core response
&
Candidate query $\rightarrow M_t$
&
Per-candidate response
&
Pointwise click/CVR \\

Retrieval
&
Context query $\rightarrow M_t$
&
One retrieval embedding
&
Contrastive / InfoNCE \\

Preference ranking
&
Candidate queries $\rightarrow M_t$
&
Candidate logits
&
Event-wise Bradley--Terry \\

Generative retrieval
&
Decoder state $\rightarrow M_t$
&
Semantic-ID sequence
&
AR + preference optimization \\

Generative ranking
&
Slate state $\rightarrow \{z_{t,j}\}$
&
Ordered candidate slate
&
AR + preference optimization \\
\bottomrule
\end{tabular}
\caption{Capability specialization over the TransX behavioral
representation. The behavioral backbone remains common while the query
multiplicity, output geometry, and supervision change across tasks.
Generative variants are evaluated offline in this work.}
\label{tab:posttraining-geometries}
\end{table}

\subsection{Retrieval Specialization}
\label{sec:retrieval-specialization}

Retrieval requires one request-level representation that can be matched
against an independently indexed item corpus. Unlike the core response
reader, whose query is candidate specific, the retrieval reader uses
serving context $\xi_t^{(u)}$ to query the behavioral memory once: 
\[
h_t^{(u)}
=
\operatorname{norm}
\left[
\operatorname{CrossAttn}_{\mathrm{ret}}
\left(
M_t^{(u)},\xi_t^{(u)}
\right)
\right]
\in\mathbb R^{d_r}
\label{eq:retrieval-readout}
\]
where $d_r$ is the retrieval embedding dimension.
The item tower independently produces
\[
v_j
=
\operatorname{norm}
\left[
\operatorname{ItemEnc}(e_j)
\right],
\qquad
s_{t,j}^{(u)}
=
\frac{
\langle h_t^{(u)},v_j\rangle
}{
\tau_{\mathrm{ret}}
},
\label{eq:retrieval-score}
\]
where $e_j$ denotes the item attributes and $\tau_{\mathrm{ret}}$ is the temperature.
Because the item representation \(v_j\) is independent of the current user and request, it can be precomputed and indexed, as in standard two-tower retrieval models.

Post-training uses positive items together with hard non-positive
candidates from the same serving event and negatives from other
user--event examples in the logical batch. The resulting contrastive
objective changes the readout geometry from candidate-wise response
prediction to request-level matching while leaving behavioral encoding
and memory construction unchanged. Exact negative-set construction and
the InfoNCE objective~\citep{oord2018cpc} are given in
Appendix~\ref{app:retrieval-posttraining}.

\subsection{Preference-Ranking Specialization}
\label{sec:ranking-specialization}

Preference ranking retains the candidate-conditioned TransX scorer but
changes the reduction applied to candidate outputs. For event $t$,
valid candidates are scored in parallel:
\[
a_{t,j}^{(u)}
=
\operatorname{RankLogit}
\left(
M_t^{(u)},
x_{t,j}^{(u)}
\right),
\qquad
j\in\mathcal V_t^{(u)} .
\label{eq:ranking-logit}
\]
Here \(\operatorname{RankLogit}\) denotes the same TransX-Base scoring path—candidate-query encoding, candidate-conditioned cross-attention, concatenation with \(\phi_{\rm pred}(x_{t,j}^{(u)})\), and the shared interaction network—followed by a scalar ranking output head.

A declared preference relation over mature response outcomes induces
within-event ordered pairs
\[
\mathcal P_t^{(u)}
=
\left\{
(i,j):
y_{t,i}^{(u)}
\succ_y
y_{t,j}^{(u)}
\right\}.
\]
For a sampled subset
$\widetilde{\mathcal P}_t^{(u)}$, the specialization minimizes a
Bradley--Terry loss~\citep{turner2012bradleyterry} of the form
\[
\mathcal L_{\mathrm{rank}}^{(u,t)}
=
-
\frac{1}{
|\widetilde{\mathcal P}_t^{(u)}|
}
\sum_{(i,j)\in\widetilde{\mathcal P}_t^{(u)}}
\log
\sigma
\left(
\frac{
a_{t,i}^{(u)}-a_{t,j}^{(u)}
}{
\tau_{\mathrm{rank}}
}
\right),
\label{eq:ranking-main-loss}
\]
where \(\tau_{\rm rank}\) is the ranking temperature.

The important architectural property is that no slate-conditioned model
is required for this specialization: candidate scoring remains parallel,
and only the event-level supervision changes from independent response
losses to preference comparisons. The packed event representation from
core training therefore becomes the natural post-training unit. Pair
sampling, weighting, normalization, and distributed pair reduction are
specified in Appendix~\ref{app:ranking-posttraining}.

\subsection{Generative Recommendation Extensions}
\label{sec:generative-specialization}

We additionally study two generative extensions of the shared behavioral
backbone. These variants remain under offline iteration and are included
to examine how the TransX interface supports sequential output
geometries.

\paragraph{Generative retrieval.}
TransX-GenRet replaces embedding-space retrieval with autoregressive
Semantic-ID generation~\citep{rajput2023tiger}. For item $j$, let
\[
G_j=\operatorname{SID}(j)=(g_{j,1},\ldots,g_{j,K})
\]
denote its Semantic-ID sequence.
At decoding step $k$, the decoder state induced by the serving context
and generated prefix $G_{<k}$ queries the same behavioral memory:
\[
h_{t,k}^{(u)}
=
\operatorname{CrossAttn}_{\mathrm{GR}}
\left(
M_t^{(u)},
q_{t,k}^{(u)}
\right),
\qquad
q_{t,k}^{(u)}
=
\operatorname{DecodingQuery}
\left(
\xi_t^{(u)},G_{<k}
\right).
\]
The resulting next-token distribution is
\[
\pi_\theta
\left(
g_k
\mid
M_t^{(u)},\xi_t^{(u)},G_{<k}
\right).
\]
Because $M_t^{(u)}$ is fixed throughout the request, each generative
reader layer projects its behavioral KV once and shares that state across
decode steps and hypotheses. Post-training uses autoregressive
supervision together with preference optimization~\citep{rafailov2023dpo} over preferred and
rejected Semantic-ID sequences. Semantic-ID construction, negative
selection, constrained decoding, and the full preference objective are
given in Appendix~\ref{app:genret-posttraining}.

\paragraph{Generative slate ranking.}
TransX-GenRank instead preserves candidate-conditioned TransX
representations and concentrates sequential capacity on dependencies
among slate selections~\citep{jiang2019listcvae,pang2025higr}. For each candidate $i$, the parallel TransX scoring path first produces
a candidate-conditioned representation $z_{t,i}$ and a base relevance
logit $b_{t,i}$. These quantities are computed once and remain fixed
throughout slate decoding.

For an ordered slate $S=(j_1,\ldots,j_m)$, let $h_{t,k}$ denote the
autoregressive slate state summarizing the serving context and the
previously selected prefix $j_{<k}$. At position $k$, each remaining
candidate $i\notin\{j_1,\ldots,j_{k-1}\}$ is assigned the score
\[
s_{t,k,i}
=
b_{t,i}
+
\frac{(W_q h_{t,k})^\top(W_c z_{t,i})}{\sqrt{d_{\text{slate}}}},
\]
with previously selected candidates masked from the next-item distribution.
The resulting slate policy factorizes as
\[
\pi_\theta(S\mid \xi_t)
=
\prod_{k=1}^{m}
\pi_\theta(j_k\mid \xi_t,j_{<k}),
\]
where $\xi_t$ denotes the fixed request context and candidate
representations available to the slate decoder. Behavioral memory and candidate representations are therefore computed
once, while autoregressive computation is restricted to the evolving
slate and slate-conditioned scoring residual.

Preference pairs are constructed from candidate slates scored by a
frozen preference teacher, and the generative policy is optimized
against a frozen pre-preference reference policy. The exact slate-pair
construction and preference objective are given in
Appendix~\ref{app:genrank-posttraining}.

Across these specializations, the common design principle is not a
single universal output head but a common behavioral interface with
task-specific readout geometry. Post-training changes how behavioral
state is queried, compared, or decoded while preserving the expensive
history-side representation learned by the core model.
\section{Real-World Deployment}
\label{sec:evaluation}

We evaluate TransX as an industrial recommendation system rather than as
an isolated model benchmark. The evaluation therefore considers predictive
quality together with the training and serving properties that motivate the
architecture: reuse of behavioral computation across serving events,
candidate-fanout efficiency, online latency and freshness, and operational
robustness under heterogeneous production traffic. Unless otherwise stated,
we compare against the baseline production recommendation stack under
matched data and product constraints.

\subsection{Deployment Setting and Evaluation Protocol}
\label{sec:deployment-setting}

\paragraph{Production setting.}
We deploy TransX in a large-scale social recommendation application with long, heterogeneous user histories and highly variable candidate fanout, ranging from approximately $0.5$K to $1.2$K candidates per request. The system serves on the order of $20$K requests per second and processes hundreds of millions of user interactions per day under strict latency and infrastructure-cost constraints. Online model scoring runs primarily on an H200 GPU serving fleet and is subject to a p99 end-to-end scoring-latency SLO of less than $50$ ms.

Historical behavior is constructed over a 180-day lookback window, allowing the behavior encoder to exploit substantially longer context than the short-history features typically exposed to conventional ranking models. As described in Section~\ref{sec:disaggregated-serving}, a fallback DLRM~\citep{naumov2019deep} is co-deployed to serve requests for which no valid KV cache can be retrieved.

The deployed system follows the same disaggregated serving design. Bulk and nearline workers maintain behavioral state and publish compact behavior memory KV snapshots, while online workers perform only candidate-conditioned readout and prediction. This separation makes both model quality and infrastructure efficiency first-class deployment criteria.

Models are retrained and freshly deployed daily, with an end-to-end refresh SLO requiring training, bulk prefill, and production rollout to complete within $20$ hours. Nearline state maintenance is subject to a p90 freshness-lag SLO of $2$ seconds, aligned with the system's typical pagination interval.

\paragraph{Production reference.}
For each TransX ramp, the control is the incumbent model currently serving the same recommendation product surface, so the reported gains measure the production transition rather than a comparison against an offline or simplified baseline. For the click/conversion prediction task on the primary landing and home-page surfaces, the control is the production HSTU-style~\citep{zhai2024actions} sequential recommender, trained over a 180-day behavioral lookback window (capped at $512$ tokens) using our production configuration. This control performs history encoding and candidate scoring entirely on the online serving path. For retrieval, the control is a sequential embedding-based retrieval (EBR) model whose user tower encodes up to 180 days of behavioral history (capped at $1024$ tokens) with Transformer blocks. The user and item towers map into a shared 128-dimensional embedding space, and the resulting representations are used by an ANN service to retrieve the top-200 candidates, which are merged with candidates from other retrieval sources before downstream ranking. For the slate-ranking task used on notification and email surfaces, the control is a production DLRM-based learning-to-rank model (LtR DLRM) \citep{borisyuk2024lirank} in which behavioral history is represented through engineered aggregate and sequence-derived features and scored entirely online.

Accordingly, the emphasis of this technical report is the production transition from established serving systems to TransX---including quality, efficiency, and operational behavior---rather than leaderboard-style benchmarking against generic reference models.

\paragraph{Offline protocol.}
We construct the offline dataset from up to one year of historical user behavior and 180 days of event-level feedback, choosing the terminal date sufficiently far in the past for delayed feedback to have fully matured. For each user trajectory, the behavioral sequence contains up to the preceding $180$ days of activity, capped at $8192$ tokens. For each packed training unit, supervised events within that unit share its left boundary and differ only in their causal right-hand cutoffs. Training, validation, and test sets are split chronologically, with the validation and test sets each spanning one week. We exclude users with insufficient behavioral history and downsample users whose sequence lengths exceed the $99$th percentile.

The same data construction is used across click and conversion prediction, retrieval, and preference-ranking tasks. For core click/conversion training, we retain at most five sampled negatives
per positive for click and one sampled negative per positive for conversion
within each serving event, downsampling excess negatives when necessary.
Events containing no positive example for a task are discarded. Validation and testing both use the natural label distribution. Retrieval and preference-ranking tasks reuse the same underlying labeled interactions. For preference ranking, a product-defined utility scoring function is further used to construct ordered candidate lists for training. The task-specific modeling objectives and offline evaluation protocols are described in the following sections.

\paragraph{Online experiment protocol.}
For \texttt{TransX-Base} and \texttt{TransX-Retrieval} ramps, the topline business metrics are Online \texttt{CTR}, \texttt{Conversion} (7-day attribution window), \texttt{DAU} (daily active users), and \texttt{Home Feed Update Views}. For \texttt{TransX-Ranking}, the business metrics are the downstream \texttt{conversion} and \texttt{DAU}. Unless otherwise stated, our online experiments randomize at the user level and run for at least 7 days. Treatment and control receive approximately 20\% and 20\% of eligible traffic, corresponding to approximately 2 billion requests. The reported results use CUPED~\citep{deng2013cuped} for variance reduction where applicable. Primary online estimands are aggregated at the user level to match the randomization unit, with p99 end-to-end system latency monitored as a guardrail. Orthogonal randomization is used to enable overlapping experimentation traffic.

\subsection{Offline Model Quality}
\label{sec:offline-quality}

\begin{table}[hbt]
\centering
\footnotesize
\setlength{\tabcolsep}{5pt}
\renewcommand{\arraystretch}{1.15}
\begin{tabularx}{\linewidth}{@{}p{0.20\linewidth}p{0.45\linewidth}X@{}}
\toprule
\textbf{Model} & \textbf{Description} & \textbf{Task-specific Component} \\
\midrule

\texttt{TransX-Base}
&
Base TransX model trained jointly on binary click and conversion objectives
as described in Section~\ref{sec:core-training}.
&
Deep\&Wide-based prediction head~\citep{cheng2016wide} with DCN feature interaction~\citep{wang2021dcn}
\\ \hline

\texttt{TransX-Retrieval}
&
\texttt{TransX-Base} post-trained for discriminative retrieval using the
contrastive objective described in
Section~\ref{sec:retrieval-specialization}.
&
Context-to-memory retrieval readout and 128-dimensional retrieval embedding
head 
\\ \hline

\texttt{TransX-Ranking}
&
\texttt{TransX-Base} post-trained for preference ranking under
product-defined preference criteria using the Bradley--Terry objective
described in Section~\ref{sec:ranking-specialization}.
&
Same candidate-conditioned prediction architecture as
\texttt{TransX-Base}
\\ \hline

\texttt{TransX-GenRet}
&
\texttt{TransX-Base} post-trained for generative retrieval through
autoregressive semantic-ID generation using the generative objective
described in Section~\ref{sec:generative-specialization}.
&
Autoregressive semantic-ID decoding head
\\ \hline

\texttt{TransX-GenRank}
&
\texttt{TransX-Base} post-trained for autoregressive slate recommendation
using the generative preference objective described in
Section~\ref{sec:generative-specialization}.
&
Autoregressive next-item / slate decoding head
\\

\bottomrule
\end{tabularx}
\caption{
TransX model variants evaluated for offline model quality. All variants
reuse the same pretrained behavior tokenization, encoder, and
memory-construction backbone; task-specific differences arise after the
shared behavioral representation.
}
\label{tab:transx-model-variants}
\end{table}

We evaluate TransX and its post-training variants as described in Table~\ref{tab:transx-model-variants} before examining online deployment. All
variants use the common behavioral tokenizer, causal encoder, and
memory-construction backbone learned by \texttt{TransX-Base}; task-specific
specializations use the readout geometries and objectives defined in
Section~\ref{sec:post-training}. The common backbone configuration is
selected using the scaling analysis in Section~\ref{sec:scaling}, which has $\sim 0.1$B dense parameters and $\sim 0.3$B sparse parameters. Under our production sequence-length and request-fanout distributions, TransX executes approximately \(5\times\) the model FLOPs per request of the HSTU-style baseline. The
complete base model configurations and post-training setups are reported in
Appendix~\ref{app:model-configurations}.

\begin{figure}[htb]
    \centering
    \includegraphics[width=0.9\linewidth]{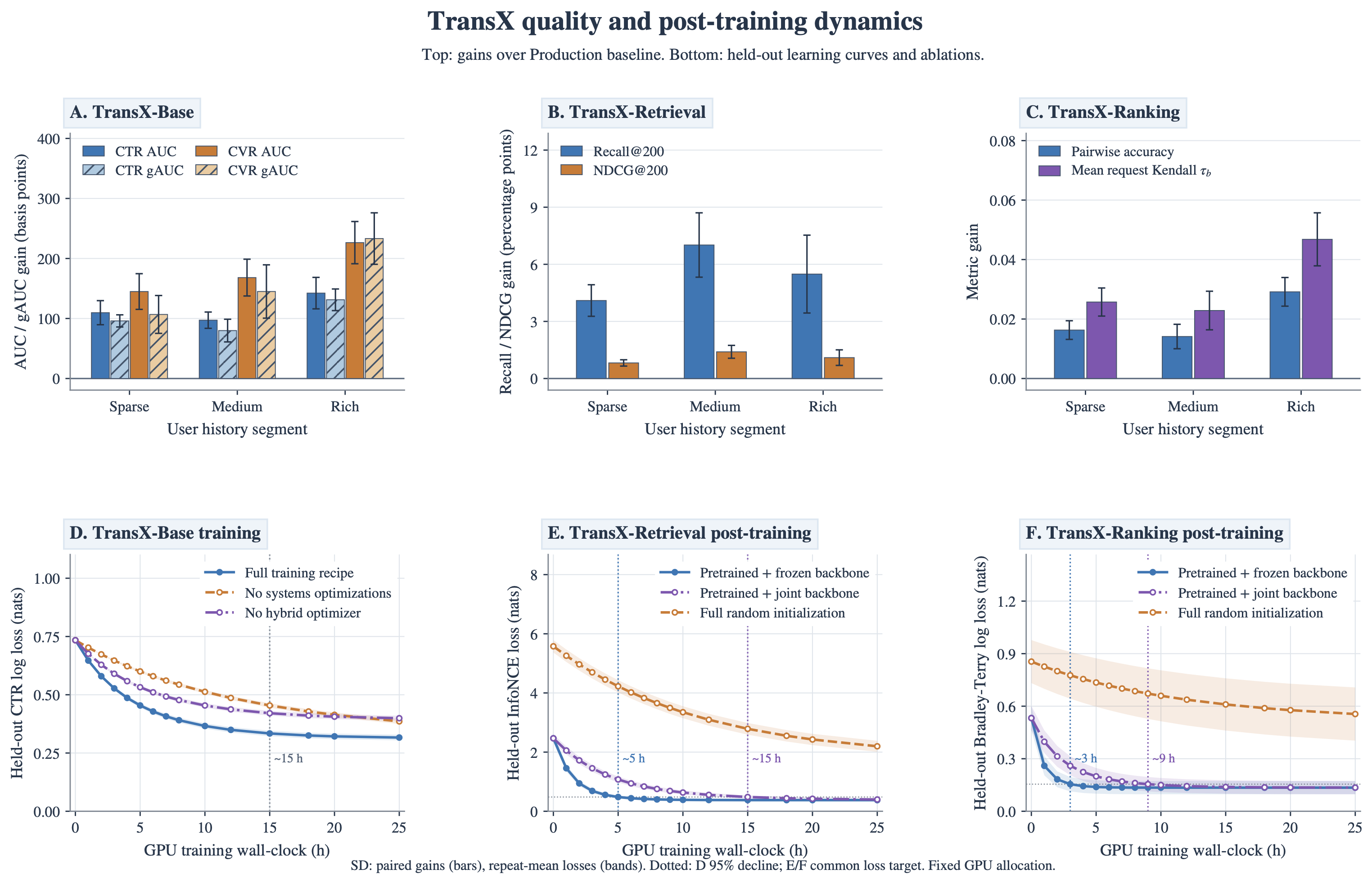}
    \caption{
    Offline quality and training dynamics for the three TransX variants.
    \textbf{(A--C)} Performance comparisons with the corresponding production
    baselines across sparse-, medium-, and rich-history users for
    \texttt{TransX-Base}, \texttt{TransX-Retrieval}, and
    \texttt{TransX-Ranking}. gAUC is the unweighted average of per-user AUC.
    \textbf{(D--F)} Held-out learning curves and training ablations.
    The optimized base-training recipe converges faster, while retrieval and
    ranking specialization benefit substantially from pretrained TransX
    representations compared with task-specific training from random
    initialization. 
    }
    \label{fig:transx-training}
\end{figure}

\begin{figure}[hbt]
    \centering
    \includegraphics[width=0.9\linewidth]{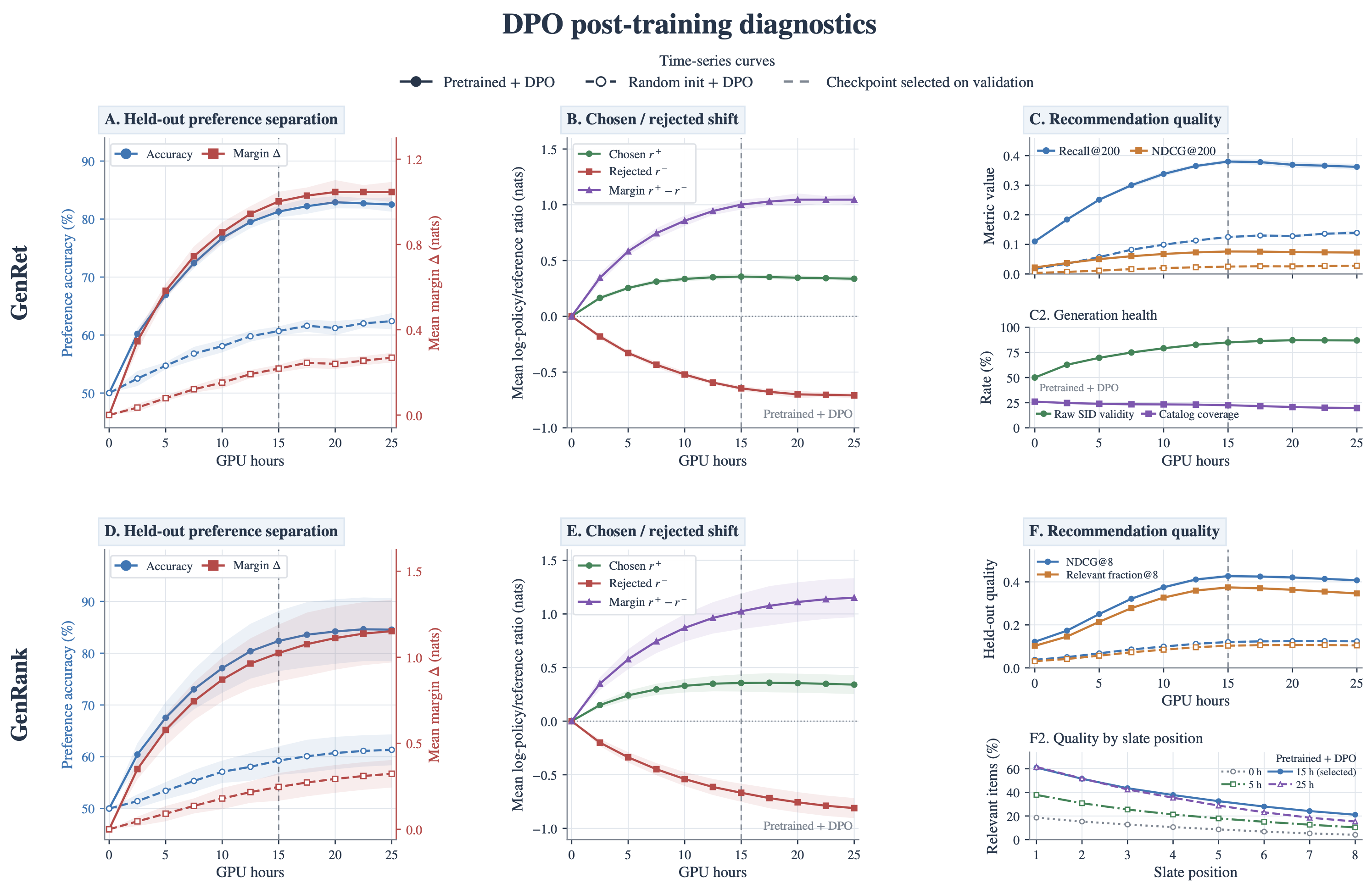}
    \caption{
    DPO post-training diagnostics for generative retrieval (top) and ranking
    (bottom).
    \textbf{(A,D)} Teacher-derived preference accuracy and chosen--rejected margin improve
    through training.
    \textbf{(B,E)} DPO progressively separates chosen and rejected responses.
    \textbf{(C,F)} Recommendation quality improves while generation validity,
    catalog coverage, and slate-level quality remain well behaved. Recall and NDCG are computed based on sampled pool.
    Dashed vertical lines indicate checkpoints selected on validation.
    }
    \label{fig:DPO-results}
\end{figure}

Figure~\ref{fig:transx-training} summarizes the primary offline results.
The upper panels compare each production-ready TransX variant with its
corresponding production baseline across sparse-, medium-, and
rich-history users. The lower panels evaluate optimization efficiency
and representation transfer. For retrieval and preference ranking, we
compare post-training initialized from \texttt{TransX-Base} with otherwise
matched training from random initialization, so that the benefit of the
shared behavioral representation can be separated from the
task-specific objective itself.

\paragraph{Core response prediction.}
\texttt{TransX-Base} improves offline response-prediction quality over the
production HSTU baseline across all evaluated history cohorts
(Figure~\ref{fig:transx-training}A). The gains are positive across the evaluated history cohorts and are largest for rich-history users, consistent with the online cohort results reported in the next
section. The optimized multi-prefix training
pipeline also reaches its held-out operating point substantially faster
than the straightforward training baseline
(Figure~\ref{fig:transx-training}D), complementing the systems
measurements in Section~\ref{sec:training-system}.

\paragraph{Retrieval specialization.}
\texttt{TransX-Retrieval} improves Recall@$K$ and NDCG@$K$ over the production
neural retrieval baseline across the evaluated behavioral cohorts, with
the smallest advantage in the sparse-history regime where little behavioral
state is available. Retrieval diversity, measured by ILD@$K$ (Intra-List Average Distance), does not
decrease. More importantly, initialization from \texttt{TransX-Base} yields
better validation quality throughout the matched GPU-hour trajectory
than training the same retrieval architecture from random
initialization (Figure~\ref{fig:transx-training}E). The result provides
direct evidence that the behavioral representation learned by the core
response task transfers to a different matching geometry rather than
being useful only to the original prediction head.

\paragraph{Preference-ranking specialization.}
\texttt{TransX-Ranking} shows the same transfer pattern. Starting from the
\texttt{TransX-Base} representation produces lower pairwise validation loss
throughout matched post-training and improves both pairwise accuracy and
event-level Kendall's $\tau$ relative to the production
LtR DLRM baseline
(Figure~\ref{fig:transx-training}C,F). Because the ranking architecture
retains candidate-conditioned TransX scoring while changing the
supervision to within-event preferences, this experiment isolates the
ability of post-training to redirect a learned behavioral
representation toward a product-defined ordering objective.

\paragraph{Generative recommendation.}
\texttt{TransX-GenRet} and \texttt{TransX-GenRank} also benefit from initialization with
the shared TransX behavioral representation and show successful
preference optimization offline (Figure~\ref{fig:DPO-results}). Their end-to-end quality, however, is more sensitive to output-space construction, preference supervision,
and autoregressive decoding: GenRet depends strongly on Semantic-ID
design and valid-ID handling, while GenRank is more exposed to
sequential error propagation and slate-preference coverage. We
therefore treat these variants as exploratory extensions rather than
production-validated results; detailed diagnostics are deferred to
Appendix~\ref{app:model-configurations}.

\subsection{Online Deployment Results}
\label{sec:online-results}




We evaluated \texttt{TransX-Base}, \texttt{TransX-Retrieval}, and
\texttt{TransX-Ranking} in three separate production A/B experiments,
each against the incumbent model on its respective product surface
(e.g., distinct landing pages or email and notification channels
within the same recommendation product). At the time of writing, \texttt{TransX-GenRet} and \texttt{TransX-GenRank} remain under offline iteration while we address the performance tradeoffs identified in the previous section. We therefore restrict the online evaluation to the three production-ready variants.

\begin{table}[hbt]
\centering
\small
\setlength{\tabcolsep}{6pt}
\renewcommand{\arraystretch}{1.12}
\begin{tabular}{@{}lccc@{}}
\toprule
& \multicolumn{3}{c}{\textbf{TransX Online A/B Results}} \\
\cmidrule(lr){2-4}
\textbf{Treatment}
& \texttt{TransX-Base}
& \texttt{TransX-Retrieval}
& \texttt{TransX-Ranking} \\
\textit{Control}
& HSTU
& EBR
& LtR DLRM \\
\midrule
CTR
& $+6.02\%$
& $+2.96\%$
& --- \\
Conversion
& $+4.45\%$
& $+0.05\%$
& $+1.33\%$ \\
DAU
& $+0.26\%$
& $+0.04\%$
& $+0.10\%$ \\
Feed views
& $+0.08\%$
& $+0.06\%$
& --- \\
\bottomrule
\end{tabular}
\caption{
Online A/B results for \texttt{TransX} variants, reported as relative
lifts over their respective control production models. Columns
represent separate experiments, not head-to-head comparisons.
All reported lifts have $p<10^{-4}$ except feed views ($p<10^{-2}$).
An em dash denotes an unreported metric.
}
\label{tab:online-product-impact}
\end{table}

For the retrieval experiment, we additionally report \emph{Candidate CTR},
defined as the number of clicks on exposed candidates attributed to the
evaluated retrieval source divided by the number of such candidate exposures;
\emph{Top-20 unique contribution} is the mean, over eligible requests, of the
number of top-20 exposed items contributed exclusively by that retrieval
source. For ranking, \emph{RankingUtility@$K$} is the user-level average of per-request top-\(K\) exposed-item utility:
$
\mathrm{RankingUtility@}K
=
\mathbb{E}_{u}\!\left[\frac{1}{|\mathcal T_u|}\sum_{t\in \mathcal T_u}
\frac{1}{K}\sum_{j=1}^{K} \text{utility}\big(\text{cand}^{(u,[j])}_t\big)
\right],
$
where $\text{cand}^{(u,[j])}_t$ denotes the candidate at ranked position $j$, $\text{utility}(\cdot)$ is the fixed production-defined utility function, and $\mathcal T_u$ is the eligible request set.

\begin{table*}[htb]
\centering
\small
\setlength{\tabcolsep}{6pt}
\renewcommand{\arraystretch}{1.15}

\begin{tabular*}{\textwidth}{
    @{\extracolsep{\fill}}lllccc@{}
}
\toprule
& & & \multicolumn{3}{c}{\textbf{User history distribution}} \\
\cmidrule(lr){4-6}
\textbf{TransX variant}
& \textbf{Control}
& \textbf{Metric}
& \textbf{Rich}
& \textbf{Medium}
& \textbf{Sparse} \\
\midrule

\multirow{2}{*}{\texttt{TransX-Base}}
& \multirow{2}{*}{HSTU}
& CTR
& $+6.2\%$
& $+6.0\%$
& $+5.5\%$ \\
& & Conversion
& $+4.8\%$
& $+4.3\%$
& $+4.1\%$ \\
\midrule

\multirow{2}{*}{\texttt{TransX-Retrieval}}
& \multirow{2}{*}{EBR}
& Top-20 unique contribution
& $+3.38\%$
& $+8.96\%$
& $+4.55\%$ \\
& & Candidate CTR
& $+6.8\%$
& $+6.0\%$
& $+6.2\%$ \\
\midrule

\multirow{2}{*}{\texttt{TransX-Ranking}}
& \multirow{2}{*}{LtR DLRM}
& Conversion
& $+1.47\%$
& $+1.25\%$
& $+1.12\%$ \\
& & RankingUtility@$K$
& $+5.66\%$
& $+4.01\%$
& $+4.32\%$ \\
\bottomrule
\end{tabular*}

\caption{
History-segmented results for the online A/B experiments summarized
in Table~\ref{tab:online-product-impact}.
Values are relative percentage lifts over the respective control
production models, stratified by user history length.
Each block reports variant-specific metrics from a separate
experiment; the blocks are not head-to-head comparisons.
}
\label{tab:online-history-segments}
\end{table*}

\paragraph{TransX-Base.}
Against the production HSTU baseline, TransX-Base improves CTR by
$6.02\%$ and conversion by $4.45\%$, with smaller positive gains in DAU
and feed views (Table~\ref{tab:online-product-impact}). The history-segmented results
in Table~\ref{tab:online-history-segments} remain positive across sparse-,
medium-, and rich-history users (with the largest gains among rich-history users), showing that the production improvement is not confined to rich-history users as we scale up both the model and the behavior trajectory.

\paragraph{TransX-Retrieval.}
Relative to the production neural retrieval baseline, TransX-Retrieval
improves CTR by $2.96\%$ while maintaining positive conversion, DAU, and
feed-view lifts (Table~\ref{tab:online-product-impact}). Candidate CTR and Top-20
unique contribution are also positive across all history cohorts
(Table~\ref{tab:online-history-segments}), providing production evidence that
post-training can specialize the shared TransX behavioral representation
for retrieval.

\paragraph{TransX-Ranking.}
Against the production LtR DLRM baseline, TransX-Ranking
improves Conversion by $1.33\%$ and DAU by $0.10\%$
(Table~\ref{tab:online-product-impact}). Conversion gains are highest for
rich-history users, while the product-defined
RankingUtility@$K$ metric improves across all history cohorts
(Table~\ref{tab:online-history-segments}). These results show that preference
post-training can redirect a response-prediction backbone toward
product-specific ranking objectives without changing the underlying
behavioral representation.


\paragraph{Sustained production behavior.}
A five-week holdout evaluation of \texttt{TransX-Base} showed that
the product gains persisted over time and across all evaluated
user segments. The sustained effect is consistent with the daily retraining and redeployment cadence. In particular, the training and prefill optimizations in Section \ref{sec:general-infra} enabled this daily update cadence, making continual model refresh practical at production scale.

\subsection{System Efficiency and Robustness}
\label{sec:deployment-system}

The system evaluation asks whether the amortization structure introduced
in Sections~\ref{sec:training-system}--\ref{sec:kv-cache} improves the
end-to-end operating point rather than merely relocating computation.
We evaluate training efficiency, online serving behavior,
freshness, and the additional state-maintenance resources introduced by
disaggregation. 

Certain deployment details and absolute cost measurements are omitted due to disclosure policies. The systems results should be interpreted within their stated measurement scopes and do not constitute a complete lifecycle-cost analysis.

\paragraph{Training efficiency.}
We compare the optimized TransX-Base training pipeline with a
straightforward FSDP~\citep{zhao2023fsdp} baseline on 32 H200 GPUs, holding the model,
training data, history-length distribution, target-sampling policy,
effective batch size, optimizer, and precision fixed. Multi-cutoff
trajectory packing, batched memory extraction, shared prefix scans, and
sequence--fanout execution increase supervised candidate-target
throughput by $84\%$ and full-pipeline training MFU from approximately $11\%$ to
$18\%$. End-to-end peak GPU memory remains approximately unchanged. Using stabilization
of click and conversion AUC and gAUC within $0.01\%$ as the convergence criterion,
wall-clock time to convergence decreases from approximately 25 to 13 hours.
%

\paragraph{Online serving efficiency.}
Production TransX-Base achieves approximately 250 requests/s/H200 with
p99 e2e scoring latency of 49~ms. The GPU-resident behavior-memory KV hit rate
is approximately $25\%$, while combined GPU/host residency exceeds
$99\%$; behavior memory KV transfer contributes less than $5\%$ of request
feature-transfer volume. Under the production traffic distribution,
online serving operates at approximately $55\%$ model FLOPs
utilization.

\begin{figure}[htb]
    \centering
    \includegraphics[width=\linewidth]{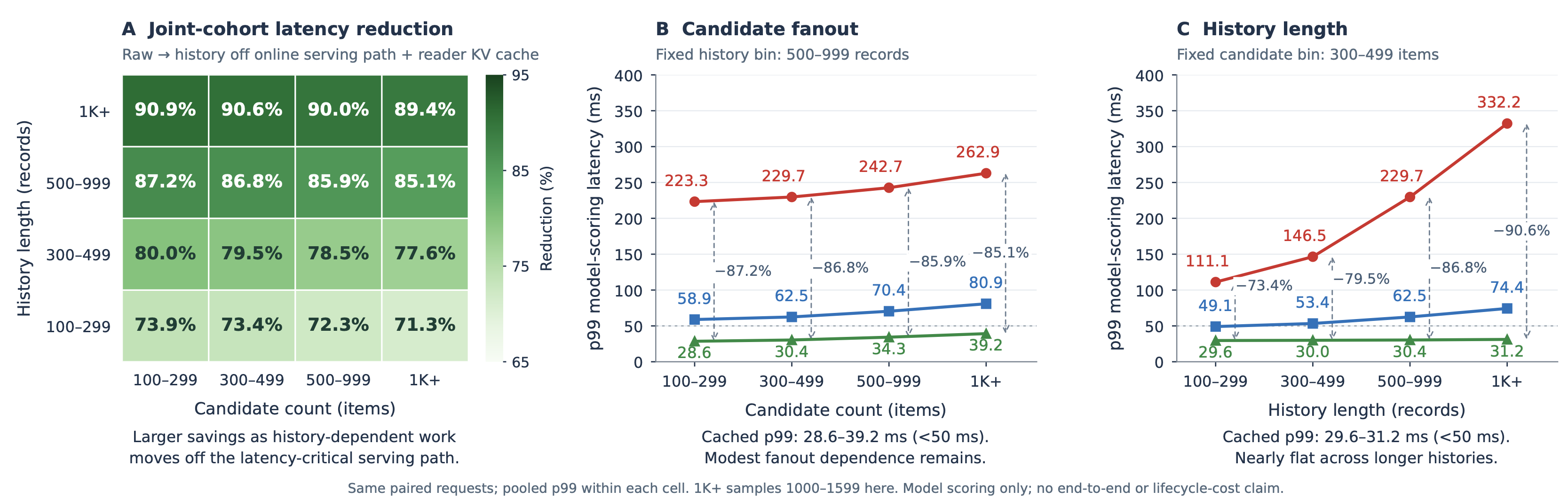}
    \caption{
    Serving-efficiency analysis. In panels B--C, red denotes colocated serving, blue denotes disaggregated serving without KV caching, and green denotes disaggregated serving with KV caching.
    \textbf{(A)} Reduction in p99 model-scoring latency after
    moving history-dependent computation off the latency-critical serving path and reusing
    materialized behavior memory KV state.
    \textbf{(B)} At fixed history length, cached p99 latency grows only
    modestly with candidate fanout.
    \textbf{(C)} At fixed candidate fanout, cached p99 latency remains
    nearly flat as history length increases.
    Together, the results show that online latency is largely decoupled
    from behavioral-history length and is driven primarily by
    candidate-conditioned scoring.}
    \label{fig:serving-efficiency}
\end{figure}

Figure~\ref{fig:serving-efficiency} isolates the effect of moving
history-dependent computation off the latency-critical serving path. Relative to a raw
online serving path that performs history processing synchronously, cached
behavior memory KV serving reduces p99 model-scoring latency by $71$--$91\%$
across the measured history-length and fanout cohorts. At fixed history
length, latency increases primarily with candidate fanout; at fixed
candidate fanout, cached p99 latency remains nearly flat as history
length grows from hundreds to more than one thousand records. This
matches the online cost decomposition of Section~\ref{sec:disaggregated-serving}: once behavior memory KV is materialized, prediction cost is governed primarily by candidate-conditioned work rather than full behavioral-history length.

Finally, we benchmark an HSTU-style baseline configured to approximately match the deployed TransX in per-request model FLOPs. The matched-FLOP HSTU-style baseline uses the same tokenizer and prediction head as TransX, with the sequential backbone scaled to match average per-request model FLOPs under the same production sequence-length/fanout distribution, precision, H200 hardware, and latency SLO. The two models achieve comparable offline quality (click AUC within $1\times 10^{-3}$), while TransX delivers approximately \(4\times\) higher serving throughput in requests per H200.

\paragraph{Freshness--quality trade-off.}
Disaggregation removes encoder advancement from the online serving
path but introduces a freshness constraint on materialized state. In the
TransX-Base production ramp, p50 and p90 nearline update lags are
approximately $1.2$~s and $1.8$~s, respectively, and more than $92\%$
of eligible traffic is served from the latest compatible snapshot.
Approximately $3\%$ of traffic uses the fallback model, primarily for
cold-start users without sufficient behavioral history.

\begin{table}[htb]
\centering
\footnotesize
\setlength{\tabcolsep}{5pt}
\renewcommand{\arraystretch}{1.12}

\begin{tabular}{lcccc}
\toprule
& \multicolumn{4}{c}{\textbf{Increase in p90 behavior memory KV freshness lag}} \\
\cmidrule(lr){2-5}
\textbf{Metric}
& \textbf{1.8\,s $\rightarrow$ 1.2\,min}
& \textbf{1.2\,min $\rightarrow$ 18\,min}
& \textbf{18\,min $\rightarrow$ 76\,min}
& \textbf{76\,min $\rightarrow$ $\sim$22\,h} \\
\midrule
Overall CTR
& $-0.21\%$
& $-0.12\%$
& $-0.03\%$
& $<0.01\%$ \\

In-session CTR
& $-0.37\%$
& $-0.18\%$
& $-0.05\%$
& $<0.01\%$ \\
\bottomrule
\end{tabular}

\caption{
Online ablation of behavior memory KV freshness. Each column reports the
incremental CTR change when increasing the observed p90 freshness lag
from one regime to the next. Performance degrades gradually with
staleness, with a larger effect on in-session CTR at short freshness
timescales.
}
\label{tab:freshness-ablation}
\end{table}

To quantify the predictive cost of staleness, we vary the allowable
behavior memory KV freshness lag online and report the resulting observed p90
lag in Table~\ref{tab:freshness-ablation}. CTR degrades gradually as
state becomes older, with the largest sensitivity occurring when lag
moves from seconds to minutes and a larger effect on in-session CTR.
This establishes freshness as an explicit quality--systems trade-off:
nearline maintenance need not be request synchronous, but its service
level materially affects short-horizon recommendation quality.

\paragraph{End-to-end resource accounting.}
Disaggregation reduces online GPU consumption from $5.8$ to $1.1$
GPU-hours per million requests ($-81.2\%$) without increasing storage
or transfer on the online serving path. In colocated serving, scoring stores and
transfers the full pre-tokenization behavioral history and repeatedly
executes history-dependent computation; TransX instead materializes a
compact behavior-memory KV snapshot for online reuse. As a result,
persistent storage and median network traffic remain approximately flat,
while the remaining history-state construction and refresh cost is moved
to bulk and nearline paths outside the online serving path. The gain
therefore comes from both avoiding repeated history computation and
replacing the online history representation with compact reusable state. After converting online GPU, nearline CPU, daily bulk compute, storage, and network resources to a common normalized production-cost basis at matched traffic and SLO, total serving cost is reduced to 22\% (colocated \(=1.00\), TransX \(0.22\)).

\begin{table}[t]
\centering
\footnotesize
\begin{tabular}{lccc}
\toprule
\textbf{Resource} & \textbf{Colocated} & \textbf{TransX} &
\textbf{Relative} \\
\midrule
Online GPU-hours / M requests
    & 5.8 & 1.1 & $-81.2\%$ \\
Nearline CPU-hours / M behavior updates
    & -- & 7.5 & -- \\
Offline bulk GPU-hours / deployment
    & -- & 3.2 & -- \\
Continuation state $C$: p50/p95/p99 bytes/user
    & -- & 496/1824/4125 KiB/user & -- \\
Behavior-memory KV $V$: p50/p95/p99 bytes/user
    & -- & 8.5/12.3/15.1 KiB/user & -- \\
Host memory per worker
    & 60 GiB & 60 GiB & $0.0\%$ \\
Persistent storage
    & 0.312 PB & 0.308 PB & $-1.3\%$ \\
Network bytes (p50) / request
    & 4.12 MiB & 3.96 MiB & $-3.9\%$ \\
Nearline bytes (p50) / behavior update
    & -- & 1.21 MiB & -- \\
\bottomrule
\end{tabular}
\caption{
\textbf{End-to-end production resource accounting for colocated and
disaggregated TransX serving.}
TransX reduces online GPU consumption by $81.2\%$ while storage and
network traffic on the online serving path remain approximately flat because compact
behavior-memory KV replaces the full pre-tokenization behavioral history consumed by
colocated scoring. Additional state-production cost is isolated to the
bulk and nearline paths. Nearline state maintenance runs entirely on CPUs in the measured production deployment; accordingly, the nearline CPU-hours account for the full compute consumption of the nearline stage. Finally, note that online, nearline, and bulk rows use different natural denominators and are not additive.
}
\label{tab:system-resource-accounting}
\end{table}

\paragraph{Operational robustness and independent scaling.}
TransX treats stale or unavailable behavioral state as a degradation
condition rather than allowing recovery to enter the online serving
path. Requests consume a compatible immutable snapshot when available,
otherwise degrade to an admissible older snapshot or the designated
non-sequential fallback, while continuation-state reconstruction proceeds
asynchronously. Controlled failure-injection experiments in
Appendix~\ref{app:system-robustness} show that injected failures primarily
reduce freshness or increase fallback traffic rather than causing
request failure or synchronous history reconstruction.

Disaggregation also allows bulk construction, nearline maintenance, and
online prediction to scale independently according to their own
bottlenecks. Bulk capacity follows rollout backlog, nearline capacity
follows update load and freshness lag, and online capacity follows
request traffic, candidate fanout, and latency pressure. Appendix
\ref{app:system-robustness} reports the corresponding horizontal-scaling
efficiency and failure-recovery measurements.

\section{Related Work}
\label{sec:related-work}

\begin{table}[htbp]
\centering
\small
\setlength{\tabcolsep}{4pt}
\renewcommand{\arraystretch}{1.10}
\begin{tabularx}{\linewidth}{
  @{}>{\raggedright\arraybackslash}p{0.13\linewidth}
  >{\raggedright\arraybackslash}X
  >{\raggedright\arraybackslash}X
  >{\raggedright\arraybackslash}X@{}
}
\toprule
\textbf{Work}
&
\textbf{Shared/materialized state}
&
\textbf{Documented reuse/refresh scope}
&
\textbf{Prediction-specific workload}
\\
\midrule
MIMN/UIC
&
Fixed-size user interest memory
&
Behavior-triggered; cross-request
&
Online scoring
\\
HSTU
&
Causal sequential state/cache
&
Candidate microbatches; cross-request reuse
&
Transduction/scoring
\\
SequenceO1
&
Compressed user sketch
&
Targets; examples; requests
&
Recent branch and target-conditioned reasoning
\\
OneTrans-V2
&
Causal user context
&
Training prefixes across requests; cascade stages
&
Stage-specific candidate computation
\\
LLaTTE
&
Compressed user representation
&
Asynchronous, selected behavior-event refresh
&
Online target-aware ranker
\\
SOLARIS
&
User--item interaction representation
&
Speculative pair reuse; TTL refresh
&
Off-request pair inference; downstream ranking
\\
\textbf{TransX}
&
Continuation state and bounded behavior-memory KV
&
Shared trajectory prefixes; behavior-triggered refresh;
cross-request snapshot reuse
&
Candidate-conditioned fusion
\\
\bottomrule
\end{tabularx}
\caption{
State, reuse, and prediction-specific work in the nearest related systems.
Scopes include training and serving where specified, not performance
rankings or exclusive capabilities.
}
\label{tab:related-scaling}
\end{table}

\paragraph{Long-history modeling and reusable behavioral state.}
Industrial recommenders amortize behavioral computation through persistent
user state, selective history access, and candidate-independent
preprocessing. SASRec and BERT4Rec established self-attentive and
bidirectional history modeling \citep{kang2018sasrec,sun2019bert4rec};
industrial systems extend this line toward longer histories, high
candidate fanout, and serving constraints. MIMN/UIC maintains
behavior-triggered user memory \citep{pi2019mimn}; SIM and TWIN select
target-relevant history, with TWIN V2 adding offline clustering
\citep{pi2020sim,chang2023twin,si2024twinv2}; TransAct combines realtime
and batch representations \citep{xia2023transact}. HSTU is particularly
close in causal sequential transduction and M-FALCON's candidate-side
amortization \citep{zhai2024actions}; LONGER reduces sequence cost through
token merging and cached attention \citep{chai2025longer}; and SequenceO1
separates reusable compressed sketches from target-conditioned reasoning
\citep{guan2026sequenceo1}. TransX builds on this direction by exposing
behavior-memory resolution as an independent scaling coordinate and by
separating mutable continuation state from published behavior-memory KV.
This runtime state also differs from LMN's shared trainable memory
parameters \citep{lu2025lmn}.

\paragraph{Scaling architectures and compute allocation.}
Wukong and RankMixer scale feature interaction
\citep{zhang2024wukong,zhu2025rankmixer}; OneTrans unifies sequence and
feature interaction within one Transformer \citep{zhang2025onetrans},
whereas TransX preserves the boundary between candidate-independent
behavioral construction and candidate-conditioned fusion. This operator
view is a comparison lens, not a claim that these architectures are
strict special cases of TransX. Their internal operators map to different
choices for, or collapses of, encoding, memory, and candidate-conditioned
interaction, without implying plug-and-play compatibility.
Sequential scaling studies examine model and data size
\citep{zhang2024scaling}; MixFormer and UniScale also study component
balance \citep{huang2026mixformer,yu2026uniscale}. LLaTTE is especially
related in allocating asymmetric compute between an asynchronous user
model and an online ranker through a fixed representation interface
\citep{xiong2026llatte}. TransX broadens this allocation question to
behavioral-encoder compute $C_E$, behavior-memory budget $B_M$, and
candidate-conditioned fusion compute $C_F$, asking how their relative
returns change with behavioral support and supervision. The focus is a controlled
allocation surface rather than a universal one-dimensional scaling law:
it exposes both where capacity is placed and how much behavioral state
is available to prediction, while the execution design controls when
each computational role is executed.

\paragraph{Training reuse and capability specialization.}
RecD deduplicates features end to end \citep{zhao2023recd}; MTGR and
Douyin batching share user-side computation across grouped examples
\citep{han2025mtgr,guan2026long}. OneTrans-V2 and TransX both exploit causal
prefix sharing \citep{cao2026onetransv2}; OneTrans-V2 uses it within a
jointly optimized retrieval-to-ranking cascade, whereas TransX couples
prefix reuse to bounded event memory, candidate-fanout execution, and
subsequent capability specialization. MTFM and LUM likewise share
behavioral representations across scenarios or tasks
\citep{song2026mtfm,yan2026lum}. Generative recommenders use semantic-ID
retrieval or slate generation \citep{rajput2023tiger,jiang2019listcvae};
TransX treats these as capability-specific output geometries over the
shared behavioral interface. In TransX's reference recipe, core
response-prediction training learns the behavioral backbone; post-training
freezes the encoder and memory construction and optimizes capability-specific
readers and heads. Partial adaptation and joint training remain
alternatives; KV projections can remain capability-specific.

\paragraph{Amortized serving and execution clocks.}
MIMN/UIC updates state on behavior events; HSTU and OneTrans reuse causal
computation across candidates or requests. LLaTTE refreshes user
representations on selected high-value behavior events, whereas SOLARIS
speculatively caches user--item interaction features
\citep{xiong2026llatte,liu2026solaris}. PagedAttention/vLLM and DistServe
address KV management and prefill/decode disaggregation
\citep{kwon2023pagedattention,zhong2024distserve}. These are adjacent
systems analogies: LLM KV is typically request/session scoped, whereas
TransX maintains persistent per-user state driven by an independent
behavioral stream.
TransX distinguishes mutable continuation state $C$ from immutable,
version-compatible published behavior-memory KV $V$, aligning bulk
construction, behavior-triggered refresh, and online prediction with
different execution clocks. The distinguishing feature is that the same
compute–state decomposition governs model
scaling, training reuse, serving-state lifetimes, and capability
specialization.
Table~\ref{tab:related-scaling} compares the resulting state and reuse
boundaries.

\section{Conclusion, Limitations, and Future Directions}
\label{sec:conclusion}

\paragraph{Conclusion.}
TransX is organized around two coupled questions: where should additional model
capacity be allocated to improve recommendation quality, and must
increasing that capacity proportionally increase online serving cost?
The model's compute--state decomposition separates sequence encoding, behavior
memory, and candidate-conditioned fusion into distinct scaling axes,
while the infrastructure preserves the same boundaries in training and
serving. Behavior computation is shared across overlapping prefixes,
recomputed as behavioral state changes, and materialized behind a
bounded behavior-memory interface, while request-time computation is
concentrated on candidate-conditioned prediction.

Within the evaluated regimes, marginal returns differ substantially
across these scaling axes. Encoder scaling improves prediction at fixed
behavior-memory resolution but saturates beyond moderate depth at fixed
fusion capacity; increasing supervision shifts the
allocation frontier toward candidate-conditioned fusion; and the memory resolution
needed to preserve quality remains roughly stable as overall model
capacity grows. Together with disaggregated serving, these results
suggest that efficient recommendation scaling is a joint problem of
compute allocation and amortized/disaggregated execution: additional capacity should
be allocated according to its marginal return rather than increased
proportionally across all computational roles, while reusable
computation should remain off the request-critical path.

The same behavior-memory interface also supports capability
specialization. Starting from a core response-prediction backbone, we
post-train capability-specific memory readers and output heads for
retrieval and preference ranking, with generative recommendation
explored offline. Production deployment of the core, retrieval, and
ranking variants further validates the framework across recommendation
quality, training efficiency, and serving economics.

\paragraph{Limitations and future directions.}
The observed scaling patterns are empirical and conditional on the
studied data, supervision regimes, architecture family, and deployment
envelope; they do not constitute a universal scaling law or uniquely
identify the mechanisms behind the patterns. Materialized behavioral encoder continuation and memory state maintenance
also introduce freshness, storage, compatibility, and recovery costs.
Finally, for task specialization via post-training, a shared behavior-memory interface and frozen backbone need not be optimal for every task. Future work should therefore study: (1) adaptive allocation across encoder capacity, behavior-memory resolution, and candidate-conditioned fusion; and (2) when the behavioral memory interface and model backbone should remain shared, be partially adapted, or become task-specific.

\begin{quote}
    \emph{More broadly, the central scaling question for industrial recommendation is not only how much computation to add, but where the capacity should be allocated, when each computation should be executed, and over what scope the resulting states can be reused.}
\end{quote}


\clearpage
\phantomsection
\addcontentsline{toc}{section}{References}
\bibliographystyle{plainnat}
\bibliography{references}

\clearpage
\appendix
\numberwithin{equation}{section}

\section{Training Addendum}
\label{sec:app-examples}

\begin{remarkbox}{Example: From record-wise logs to one packed training unit}

Consider a user with the following serving logs:
\[
\begin{array}{c|c|c|c|c}
\text{event} & \text{candidate} & \text{features} &
\text{label} & \text{behavior cutoff} \\
\hline
t=1 & c_1^{(1)} & x_1^{(1)} & 1 & l_1=4 \\
t=1 & c_1^{(2)} & x_1^{(2)} & 0 & l_1=4 \\
t=1 & c_1^{(3)} & x_1^{(3)} & 0 & l_1=4 \\
\hline
t=2 & c_2^{(1)} & x_2^{(1)} & 0 & l_2=7 \\
t=2 & c_2^{(2)} & x_2^{(2)} & 1 & l_2=7 \\
\hline
t=3 & c_3^{(1)} & x_3^{(1)} & 0 & l_3=9 \\
t=3 & c_3^{(2)} & x_3^{(2)} & 0 & l_3=9
\end{array}
\]

Together with the behavior trajectory
\[
\mathcal H^{(u)}
=
(h_1,h_2,\ldots,h_9),
\]
these rows are packed as
\[
\mathcal D^{(u)}
=
\left(
\mathcal H^{(u)},
\left\{
(\mathcal E_1,l_1),
(\mathcal E_2,l_2),
(\mathcal E_3,l_3)
\right\}
\right),
\]
where, for example,
\[
\mathcal E_1
=
\big\{(e_1^{(1)},1),\, (e_1^{(2)},0),\, (e_1^{(3)},0) \big\},
\qquad
e_1^{(i)}=\big(c_1^{(i)}, x_1^{(i)} \big), \, i=1,2,3,
\]
and we can use behavior $(h_1,h_2,h_3,h_4)$ to construct the behavior memory for event $t=1$ (since its behavior cutoff is $l_1=4$).
In general, all candidates in event $t$ share
\[
M_t
=
\mathcal M
\left(
B[1{:}l_t,:]
\right)
\longrightarrow
\left\{
e_t^{(u,j)}
\right\}_{j=1}^{m_t}.
\]
Event grouping therefore also provides the natural unit for pairwise
and listwise ranking objectives introduced in
Section~\ref{sec:post-training}.
\end{remarkbox}

\subsection{Hybrid Optimizer Detail}
\label{app:optimizer}

TransX combines densely updated network parameters with large,
row-sparse ID embedding tables. We optimize dense parameters with AdamW
and row-sparse embeddings with row-wise Adagrad. 

\begin{table}[hbt]
\centering
\small
\setlength{\tabcolsep}{4pt}
\renewcommand{\arraystretch}{1.12}
\begin{tabularx}{\linewidth}{
  @{}p{0.2\linewidth}
     p{0.18\linewidth}
     X@{}
}
\toprule
\textbf{Parameter regime}
& \textbf{Optimizer}
& \textbf{Update semantics} \\
\midrule

Dense network parameters
&
AdamW
&
Encoder, fusion, prediction, and other densely updated parameters.
First- and second-moment adaptation with decoupled weight decay. \\

\addlinespace

Row-sparse ID embeddings
&
Row-wise Adagrad
&
One squared-gradient accumulator per row, shared across its
coordinates. Only rows receiving gradient contributions participate
in updates; no momentum or embedding weight decay is applied. \\

\bottomrule
\end{tabularx}
\caption{Optimizer assignment by parameter gradient structure.}
\label{tab:dense-sparse-optimization}
\end{table}

\paragraph{Row-wise adaptation.}
For embedding row $e_i\in\mathbb R^{d_i}$, let $g_i$
be its complete logical-step gradient after loss normalization,
mixed-precision unscaling, and any configured clipping. Row-wise
Adagrad maintains a scalar accumulator $a_i$ and applies
\begin{equation}
\begin{aligned}
a_i^{+}
&=
a_i+\frac{1}{d_i}\|g_i\|_2^2,\\
e_i^{+}
&=
e_i
-
\eta_E
\frac{g_i}{\sqrt{a_i^{+}}+\epsilon_E},
\end{aligned}
\label{eq:row-wise-adagrad}
\end{equation}
where \(\eta_E\) and \(\epsilon_E\) are the embedding learning rate and numerical stabilizer, respectively.
The shared denominator preserves the within-row gradient direction while
reducing optimizer state from $nd$ scalars for coordinate-wise Adagrad
to $n$ scalars for an $n\times d$ embedding table. This is a distinct
optimizer rather than a compressed representation of coordinate-wise
Adagrad.

All gradient contributions to a parameter or embedding row are aggregated at the
logical update boundary before its optimizer step, including
contributions from shared histories, event cutoffs, candidate fanout,
and gradient-accumulation microbatches.

\subsection{Blocked Prefix-Scan Realization}
\label{app:blocked-prefix-scan}

Section~\ref{sec:behavior-memory} writes the global branch abstractly as
block-wise pooling over the history preceding the recent local window.
This section gives the corresponding execution used to construct the
same cutoff-specific local--global memories jointly for multiple events
on one encoded trajectory.

\paragraph{Exact local--global memory.}
For user $u$ and event $t$, let $l^{(u)}_t$ denote the eligible behavioral
cutoff and $W$ the local-memory width. Define
\[
    b^{(u)}_t
    =
    \max(0,l_{t}^{(u)}-W),
\]
so that the local branch contains
\[
    L^{(u)}_t
    =
    B^{(u)}[b^{(u)}_t+1:l^{(u)}_t],
    \qquad
    W^{(u)}_t=\min(W,l^{(u)}_t).
\]
The preceding support $[1,b^{(u)}_t]$ is partitioned according to the
configured global-memory policy into at most $R$ contiguous,
non-overlapping causal blocks
\[
    \mathcal P^{(u)}_t
    =
    \{
        \mathcal I_{t,r}^{(u)}
    \}_{r=1}^{R^{(u)}_t},
    \qquad
    R^{(u)}_t\le R,
\]
with
\[
    \mathcal I_{t,r}^{(u)}\subseteq[1,b^{(u)}_t],
    \qquad
    \mathcal I_{t,r}^{(u)}\cap\mathcal I_{t,r'}^{(u)}=\varnothing
    \quad (r\neq r').
\]
Each block is reduced independently,
\[
    g_{t,r}^{(u)}
    =
    \operatorname{Pool}
    \left(
        \{b_i^{(u)}:
          i\in\mathcal I_{t,r}^{(u)}\}
    \right),
\]
giving the global summary bank
\[
    G^{(u)}_t
    =
    [g_{t,1}^{(u)},\ldots,g_{t,R^{(u)}_t}^{(u)}]^\top .
\]
The event memory is therefore
\[
    M^{(u)}_t
    =
    \psi_{\mathrm{mem}}
    \left(
        [G^{(u)}_t;
         L^{(u)}_t]
    \right),
    \qquad
    n^{(u)}_t=R^{(u)}_t+W^{(u)}_t\le R+W.
\]
Thus the local and global branches have disjoint support: recent states
are retained explicitly, while only preceding behavior contributes to
the global summaries. The particular block widths and maximum numbers
of local and global states used in the evaluated configurations are
reported in Appendix~\ref{app:model-configurations}.

\paragraph{Associative block pooling.}
The global pooler is implemented through an associative sufficient
state, allowing block summaries to be constructed without repeatedly
materializing overlapping behavioral prefixes. For the learned
attention pooler used in our implementation, let
\[
    a_i=a_\theta(b_i)
\]
denote the scalar pooling score associated with encoded state $b_i$.
For any contiguous segment $\mathcal I$, define
\[
    m_{\mathcal I}
    =
    \max_{i\in\mathcal I} a_i,
\]
\[
    Z_{\mathcal I}
    =
    \sum_{i\in\mathcal I}
    \exp(a_i-m_{\mathcal I}),
\]
and
\[
    V_{\mathcal I}
    =
    \sum_{i\in\mathcal I}
    \exp(a_i-m_{\mathcal I})\,b_i .
\]
The sufficient state
\[
    S_{\mathcal I}
    =
    (m_{\mathcal I},Z_{\mathcal I},V_{\mathcal I})
\]
produces the pooled representation
\[
    \operatorname{Pool}(\mathcal I)
    =
    \frac{V_{\mathcal I}}{Z_{\mathcal I}}.
\]

For adjacent segments $\mathcal I_1$ and $\mathcal I_2$, their states
can be merged exactly. Let
\[
    m
    =
    \max(m_{\mathcal I_1},m_{\mathcal I_2}).
\]
Then
\[
\begin{aligned}
    Z
    &=
    e^{m_{\mathcal I_1}-m}Z_{\mathcal I_1}
    +
    e^{m_{\mathcal I_2}-m}Z_{\mathcal I_2},\\
    V
    &=
    e^{m_{\mathcal I_1}-m}V_{\mathcal I_1}
    +
    e^{m_{\mathcal I_2}-m}V_{\mathcal I_2}.
\end{aligned}
\]
We write
\[
    S_{\mathcal I_1\cup\mathcal I_2}
    =
    S_{\mathcal I_1}\otimes S_{\mathcal I_2}.
\]
The merge operator $\otimes$ is associative, so segment statistics can
be constructed by a parallel scan or reduction while producing the
same pooled block representation as direct attention pooling.

\paragraph{Blocked prefix scan across event cutoffs.}
A packed trajectory generally contains multiple supervised events whose
eligible prefixes overlap. Rather than evaluate
$\operatorname{Pool}$ independently for every block of every event, we
first compute the token-level pooling statistics once along
$B^{(u)}$ and organize them according to the block boundaries required
by the selected event cutoffs.

Define \(S_u\) as the number of atomic segments. Conceptually, let
\[
    \mathcal A^{(u)}
    =
    \{\mathcal A^{(u)}_s\}_{s=1}^{S_u}
\]
denote the finest contiguous segmentation induced by the union of the
required global-block boundaries. Each target block
$\mathcal I_{t,r}^{(u)}$ is then a contiguous union of elements of
$\mathcal A^{(u)}$. The execution first computes
\[
    S_{\mathcal A^{(u)}_s}
    =
    \operatorname{BlockScan}
    \left(
        \{b_i^{(u)}:
          i\in\mathcal A^{(u)}_s\}
    \right)
\]
once for each atomic segment. A requested global block is reconstructed
by associative merging,
\[
    S_{t,r}^{(u)}
    =
    \bigotimes_{
        \mathcal A_s^{(u)}
        \subseteq
        \mathcal I_{t,r}^{(u)}}
    S_{\mathcal A_s^{(u)}},
\]
followed by
\[
    g_{t,r}^{(u)}
    =
    \operatorname{Readout}(S_{t,r}^{(u)})
    =
    \frac{V_{t,r}^{(u)}}{Z_{t,r}^{(u)}}.
\]
Collecting these readouts yields
\[
    G_{t}^{(u)}
    =
    [g_{t,1}^{(u)},\ldots,g_{t,R_{t}^{(u)}}^{(u)}]^\top .
\]

The recent local states are constructed independently through the
batched masked gather. Event memory is then
materialized exactly as
\[
    M_{t}^{(u)}
    =
    \psi_{\mathrm{mem}}
    \left(
        [G_{t}^{(u)};
         L_{t}^{(u)}]
    \right),
\]
followed once per event by the memory adapter and projected KV
construction. Candidate fanout occurs only after this shared event state
has been formed.

This organization computes token-side pooling statistics once and
shares them across overlapping event prefixes. Ignoring projection
costs, the history-side work scales with the encoded trajectory length
plus the number of exposed memory slots,
\[
O(\ell_u d)
+
O\!\left(
d\sum_{t=1}^{T_u}(R_{t}^{(u)}+W_{t}^{(u)})
\right),
\]
rather than with the sum of independently materialized prefix lengths
\[
    O\!\left(
        d\sum_{t=1}^{T_u}l_{t}^{(u)}
    \right).
\]
The exact constant factors depend on the configured block partition and
scan implementation.

\paragraph{Backward aggregation.}
For a map $y=f(x)$ and upstream gradient
$\nabla_y\mathcal L$, we write
\[
\operatorname{VJP}_f
\left(
x;\nabla_y\mathcal L
\right)
=
J_f(x)^\top \nabla_y\mathcal L.
\]
After candidate losses have been reduced at the event boundary, the
backward path through the shared KV projection and memory adapter described
in Section~\ref{sec:training-system} produces an event-memory
gradient $\nabla_{M_{t}^{(u)}}\mathcal L$ for each selected cutoff.

Because
\[
M_{t}^{(u)}
=
\psi_{\mathrm{mem}}
\!\left(
[G_{t}^{(u)};L_{t}^{(u)}]
\right),
\]
one reverse-mode application of the memory-construction map gives
\[
\left(
\nabla_{G_{t}^{(u)}}\mathcal L,
\nabla_{L_{t}^{(u)}}\mathcal L
\right)
=
\operatorname{VJP}_{\psi_{\mathrm{mem}}}
\!\left(
G_{t}^{(u)},L_{t}^{(u)};
\nabla_{M_{t}^{(u)}}\mathcal L
\right).
\]

The local-memory gradients are accumulated into the shared encoded
trajectory by the adjoint of the batched gather:
\[
\left.\nabla_{B^{(u)}}\mathcal L\right|_{\mathrm{local}}
=
\operatorname{ScatterAdd}
\!\left(
\{\nabla_{L_{t}^{(u)}}\mathcal L\}_{t=1}^{T_u},
I^{(u)};A^{(u)}
\right),
\]

For the global branch, each
\[
G_{t}^{(u)}
=
[g_{t,1}^{(u)},\ldots,g_{t,R_{t}^{(u)}}^{(u)}]^\top
\]
contributes gradients
$\nabla_{G_{t}^{(u)}}\mathcal L$ to the corresponding block readouts.
These gradients are propagated through the block readouts, associative
merge trees, and shared atomic-segment scan:
\[
\left.\nabla_{B^{(u)}}\mathcal L\right|_{\mathrm{global}}
=
\operatorname{VJP}_{\mathrm{BlockScan+Readout}}
\!\left(
B^{(u)};
\{\nabla_{G_{t}^{(u)}}\mathcal L\}_{t=1}^{T_u}
\right).
\]
The shared encoded trajectory then receives
\[
\nabla_{B^{(u)}}\mathcal L
=
\left.\nabla_{B^{(u)}}\mathcal L\right|_{\mathrm{local}}
+
\left.\nabla_{B^{(u)}}\mathcal L\right|_{\mathrm{global}}.
\]
Because gradients from all event cutoffs are summed at shared trajectory
positions, this backward realization is equivalent to independently
constructing and differentiating every cutoff-specific event memory, up to
the numerical precision of the implemented kernels. Gradients with respect
to trainable pooling and memory-construction parameters are accumulated by
the same reverse-mode operations and are omitted above for brevity.

For an individual global block,
\[
\nabla_{S_{t,r}^{(u)}}\mathcal L
=
\operatorname{VJP}_{\mathrm{Readout}}
\!\left(
S_{t,r}^{(u)};
\nabla_{g_{t,r}^{(u)}}\mathcal L
\right).
\]
and gradients from all target blocks containing a shared atomic segment are
summed before propagating through the underlying block scan.

\paragraph{Execution-equivalence contract.}
For every event cutoff $\ell_{t}^{(u)}$, the blocked implementation produces
the same forward memory
\[
\widetilde M_{t}^{(u)}
=
\mathcal M\!\left(B^{(u)}_{1:\ell_{t}^{(u)}}\right)
=
M_{t}^{(u)},
\]
and, for any event-memory gradients
$\{\nabla_{M_{t}^{(u)}}\mathcal L\}_{t=1}^{T_u}$, its reverse-mode execution produces the
same accumulated trajectory gradient
\[
\nabla_{B^{(u)}}\mathcal L
=
\sum_{t=1}^{T_u}
J_{\mathcal M_t}\!\left(B^{(u)}\right)^\top
\nabla_{\widetilde M_{t}^{(u)}}\mathcal L.
\]
as independently materializing and differentiating each cutoff-specific
memory.
Thus the blocked prefix scan changes only the physical realization of
overlapping forward and backward computation; it does not change the
local/global supports, pooling rule, reader-visible state, or resulting
gradient.

\paragraph{Global-memory overflow.}
When the pre-local history would require more than $R$ global
blocks, adjacent blocks are compacted pairwise using the associative
pooling operator, doubling the effective block width until the complete
pre-local history fits in at most $R$ summaries. Thus no retained
behavior is discarded solely because the global-memory bank is full.

For example, with the scaling analysis configuration
$L_{\text{max}}=4096$, $W=32$, $R=32$, and base width $64$, the
$4064$ pre-local states exceed the nominal $32\times64=2048$ support.
One compaction level increases the effective width to $128$, yielding
31 full 128-state blocks and one 96-state partial block. These produce
32 global summaries, which together with the 32 local states give the
64-state reader memory.

During incremental maintenance, newly eligible global states extend the
newest block. When another block would exceed the $R$-slot budget,
adjacent blocks are merged to the next compaction level using the same
associative sufficient statistics. Bulk rebase reconstructs the identical
partition from the retained trajectory.
\section{Additional Model--Infrastructure Details}
\label{app:infra-details}

We provide the additional model-infrastructure co-design details in this part of the appendix.

\subsection{Nearline Infrastructure Detail}
\label{app:snapshot-lifecycle}

Section~\ref{sec:nearline-maintenance} treats nearline maintenance as a
state transition from mutable encoder continuation state to immutable
behavior memory state. This appendix specifies the corresponding publication,
recovery, storage, and model-transition mechanics.

\begin{figure}[hbt]
    \centering
    \includegraphics[width=0.9\linewidth]{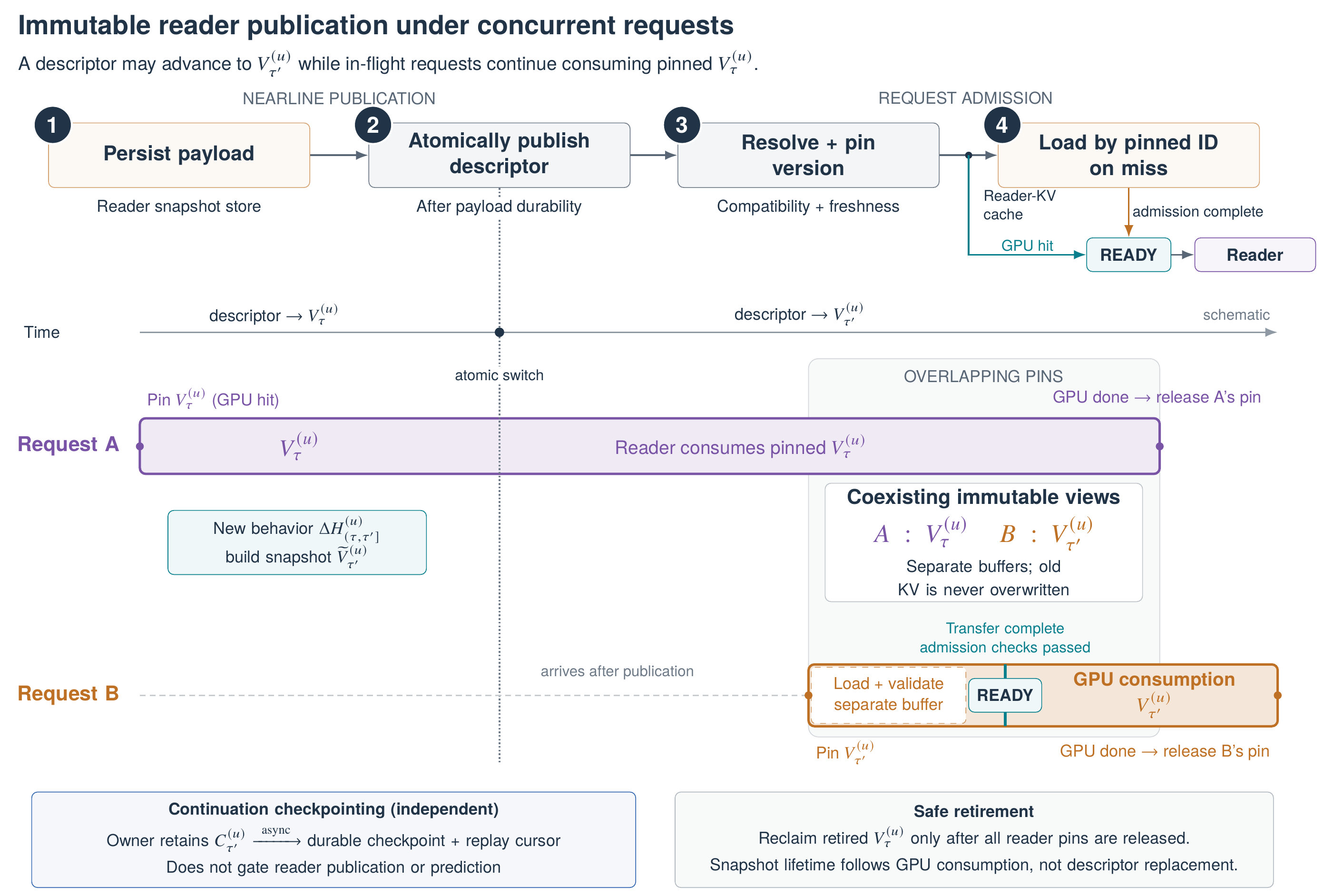}
    \caption{\textbf{Immutable behavior memory publication and request-scoped snapshot
    lifetimes.} Descriptor publication can advance to a new snapshot while
    in-flight requests continue consuming their pinned immutable view. Cache
    hits and miss-driven loads converge at a ready boundary before behavior memory
    execution, and retired snapshots are reclaimed only after all behavior memory KV pins
    are released.}
    \label{fig:reader-publication}
\end{figure}

\paragraph{Publication contract.}
For user $u$, let $C_{\tau}^{(u)}$ denote continuation state representing
behavior through cutoff $\tau$. After processing newly available
behavior, the nearline worker constructs a prospective behavior-memory snapshot
$\widetilde V_{\tau'}^{(u)}$ from the advanced state $C_{\tau'}^{(u)}$.
The snapshot becomes externally visible only after its payload has been
persisted and its producing metadata have been validated.

Each publication records the represented-history cutoff and source
cursor together with the representation dependencies required for
compatibility. At minimum these include
\[
(v_{\mathrm{model}},
 v_{\mathrm{memory}},
 v_{\mathrm{projection}},
 v_{\mathrm{codec}},
 v_{\mathrm{layout}}),
\]
as well as the continuation-state version and ownership generation used
to produce the snapshot. Publication validates the source cursor,
base-state version, and ownership generation before atomically advancing
the user-visible descriptor:
\[
\widetilde V_{\tau'}^{(u)}
\xrightarrow[\text{validate + persist}]
{\operatorname{Publish}}
V_{\tau'}^{(u)}.
\]
A failed or stale publication attempt does not replace the currently
visible compatible snapshot.

This ordering separates prediction-visible state from mutable state
maintenance. The continuation state may continue advancing, fail, or be
reconstructed without modifying a snapshot that has already been
published.

\paragraph{Request-scoped pinning and reclamation.}
At request admission, the online serving path resolves one compatible
behavior-memory snapshot descriptor and pins that snapshot for the lifetime of
the request. All candidate batches associated with the request (denoted by $r$) consume
the same pinned version:
\[
r
\longrightarrow
V_{\tau(r)}^{(u)}.
\]
If nearline publication advances the descriptor while the request is in
flight, the request continues using its previously pinned immutable
snapshot rather than switching versions mid-execution.

A superseded snapshot may therefore remain resident while requests still
hold references to it. Retired snapshots are reclaimed only after all
request-scoped pins have been released. Cache-hit and miss-driven
loading paths converge at the same ready-state boundary, so the
request-visible semantics do not depend on whether the snapshot was
already resident or fetched after admission.

\paragraph{Continuation checkpointing and replay recovery.}
Encoder continuation state is checkpointed asynchronously together with
its source cursor, represented-history cutoff, and compatible
representation versions. Let $c_0$ denote the source cursor recorded by
the latest compatible checkpoint and $c$ the current committed source
position. Recovery reconstructs current continuation state by restoring
the checkpoint and replaying the retained behavior-log suffix:
\[
C^{(u)}(c_0)
+
\Delta H^{(u)}(c_0,c]
\;\longrightarrow\;
C^{(u)}(c).
\]
Source retention is configured so that every acknowledged behavior
update is either represented in a durable checkpoint or remains
available for replay.

Recovery reconstructs mutable continuation state $C$; it does not
invalidate an already published compatible behavior memory snapshot $V$.
Consequently, online requests can continue to consume the latest
admissible published state while checkpoint restore and log replay
proceed asynchronously. Once continuation state catches up, ordinary
nearline advancement resumes and a fresher behavior memory snapshot can be
published.

\paragraph{Pipelined nearline execution.}
Nearline workers overlap continuation-state access, incremental
encoding, memory construction, snapshot persistence, and publication
across independent users. State required by a future update may be
fetched while another update executes, while completed behavior memory snapshots
may be persisted and published as subsequent encoding work proceeds.
This pipeline hides a portion of state-transfer and storage latency
without relaxing causal ordering of committed state transitions within a
user.

\subsection{Large Sparse Embedding Serving}
\label{app:sparse-embedding-serving}

The sparse embedding tables need not reside entirely in accelerator
memory. TransX exposes one versioned embedding namespace but uses
different placement and lookup paths for bulk prefill, nearline prefill,
and online prediction, as summarized in
Table~\ref{tab:embedding-serving}. Small tables may be replicated in HBM
within the serving-memory budget. Larger tables use sharded backing
storage, with GPU and host caches retaining selected rows. Inference
placement is chosen independently of the training sharding plan because
serving has no gradient traffic and substantially different locality and
latency requirements.

Each operation pins a compatible model and embedding version before
lookup; deduplication and cache reuse never cross incompatible versions,
and reused continuation or behavior memory state must be compatible with the
corresponding producing parameters. 

\begin{table}[hbt]
\centering
\small
\setlength{\tabcolsep}{4pt}
\renewcommand{\arraystretch}{1.12}
\begin{tabularx}{\linewidth}{
  @{}p{0.13\linewidth}
     p{0.24\linewidth}
     X
     p{0.22\linewidth}@{}
}
\toprule
\textbf{Path}
& \textbf{Primary placement}
& \textbf{Lookup execution}
& \textbf{Scheduling objective} \\
\midrule

Bulk prefill
&
Locally staged snapshot; GPU shards or host-memory backing
&
Extract upcoming IDs, deduplicate by $(\text{table},\text{ID})$,
batch reads by shard, prefetch a bounded future window while the current
window is encoded, and restore original occurrences and ordering
&
Maximize completed prefill work per unit cost
\\

Nearline prefill
&
Worker-local embedding cache near the continuation-state owner
&
Start lookup when a new behavior event arrives; route the fetched
embedding rows toward the continuation owner and combine compatible
updates into bounded microbatches
&
Maximize sustainable update throughput within the freshness budget
\\

Online prediction
&
Per-worker GPU hot cache, optionally backed by node-local host memory
and remote embedding shards
&
Resolve GPU hits immediately; deduplicate misses, group them by
table/shard, fetch them in batches, and admit candidate work only after
its required embedding rows are ready
&
Maximize request goodput within latency and memory limits
\\

\bottomrule
\end{tabularx}
\caption{Workload-specific lookup paths for large sparse embedding tables.
All three paths consume the same versioned embedding namespace, but use
different placement, batching, and admission policies.}
\label{tab:embedding-serving}
\end{table}

Candidate batches are dispatched to cross attention kernels only after all
required embedding rows are resolved. Pending lookups still consume
queue, buffer, and transfer resources, which are bounded separately.
A behavior-memory hit removes historical re-encoding from the online serving
path, but request- and candidate-side embedding lookups may remain.

A cache miss for a known row triggers retrieval, not substitution with
an unknown-ID vector. Legitimate unseen IDs follow the model's declared
OOV policy, while unavailable or incompatible rows follow an explicit
failure policy. In particular, prefill does not publish behavioral
state constructed from temporary cache-miss defaults.

\subsection{KV Quantization and Execution Details}
\label{app:kv-codec}

Section~\ref{sec:kv-quantization} summarizes the storage--quality
trade-off of compressed behavior memory KV. This appendix specifies the codec and
the two GPU execution representations in greater detail.

\paragraph{Quantization unit.}
Quantization is applied after behavior memory projection. For one key or
value vector
\[
x\in\mathbb R^{d_h},
\]
we separate magnitude and direction,
\[
a=\|x\|_2,
\qquad
s=
\begin{cases}
x/a, & a>0,\\
0,   & a=0.
\end{cases}
\]
Zero vectors use an explicit zero representation. The magnitude $a$ is
stored separately, while the normalized direction $s$ is compressed.

For each cross attention layer, KV group, and tensor type (K or V), calibration data from the training
distribution define the empirical second moment
\[
\widehat\Sigma
=
\mathbb E[ss^\top]
=
U\Lambda U^\top.
\]
We use the orthogonal rotation
\[
R=U^\top,
\qquad
z=Rs,
\label{eq:kv-spectral-rotation}
\]
so that quantization is performed in coordinates aligned with the
empirical energy spectrum. All $d_h$ coordinates are retained:
compression changes precision rather than truncating dimensions.

\paragraph{Precision allocation.}
Let $\lambda_i$ denote the calibrated energy of rotated coordinate $i$
and let $b_i$ denote its assigned number of bits. Under a direction
payload budget $B_{\text{bits}}$, we use the high-resolution scalar-quantization
distortion proxy~\citep{goyal2001transform}
\begin{equation}
D(b)
=
\sum_{i=1}^{d_h}
\bar\lambda_i\,2^{-2b_i},
\qquad
\bar\lambda_i=\max(\lambda_i,\lambda_{\text{min}}),
\label{eq:kv-distortion-proxy}
\end{equation}
subject to
\[
\sum_i b_i=B_{\text{bits}},
\qquad
b_{\min}\le b_i\le b_{\max}.
\]
The continuous allocation has the water-filling form
\[
b_i^\star
=
\operatorname{clip}_{[b_{\min},b_{\max}]}
\left[
\frac{1}{2}
\log_2
\left(
\frac{\bar\lambda_i}{\mu}
\right)
\right],
\qquad
\sum_i b_i^\star=B_{\text{bits}},
\label{eq:kv-waterfilling}
\]
where $\mu$ is chosen to satisfy the bit budget.

The continuous solution is converted to supported integer bit widths
without exceeding the payload budget. Scalar codebooks are then fitted
to the rotated coordinates using the calibration population. Vector
magnitudes, code indices, shape metadata, and codec identifiers are
accounted for separately from the direction-payload budget.

Given decoded rotated coordinates $\widehat z$, reconstruction is
\[
\widehat x
=
aR^\top\widehat z.
\label{eq:kv-reconstruction}
\]

\paragraph{Rotation-aware attention.}
The orthogonal change of basis itself introduces no approximation. For
key rotation $R_K$,
\[
q^\top k
=
(R_K q)^\top(R_K k),
\label{eq:key-rotation}
\]
so the key rotation may be absorbed into the corresponding query
projection when operation order permits.

Likewise, if values are represented in a rotated basis, the weighted
attention output can be mapped through the inverse value rotation before
the cross attention heads are combined. Fixed rotations can therefore be folded
into adjacent learned linear operators where compatible with the kernel
layout; otherwise they remain explicit execution operations.
Approximation arises from quantization, not from the orthogonal change
of basis.

\paragraph{Quality calibration.}
The distortion proxy in Eq.~\eqref{eq:kv-distortion-proxy} is used to
allocate precision but is not itself a guarantee on recommendation
quality. Candidate queries can amplify different components of the same
KV reconstruction error. We therefore select operating points using
held-out recommendation quality under representative cross attention queries,
rather than relying only on vector reconstruction error.

\paragraph{Packed KV execution.}
A published snapshot may remain in its compressed representation when
resident on GPU. For attention tile $\mathcal T$, the cross attention executes
schematically as
\[
\operatorname{Pack}(K_{\mathcal T},V_{\mathcal T})
\xrightarrow{\operatorname{Decode}}
(\widehat K_{\mathcal T},\widehat V_{\mathcal T})
\xrightarrow{\operatorname{Attend}}
O_{\mathcal T}.
\]
Decoding is fused as closely as practical with the attention operation
so that the complete expanded snapshot need not be materialized in HBM.
This mode minimizes persistent HBM footprint at the cost of repeated
decode work when the same KV is read multiple times.

\paragraph{Expanded KV execution.}
Alternatively, a compressed snapshot can be expanded when admitted to
GPU:
\[
\operatorname{Pack}(V_{\tau}^{(u)})
\xrightarrow{\operatorname{Decode}}
V^{(u),\,\mathrm{bf16}}_{\tau},
\]
after which candidate tiles and subsequent requests attend directly to
the retained high-precision representation:
\[
V^{(u),\,\mathrm{bf16}}_{\tau}
\xrightarrow{\operatorname{Attend}}
O.
\]
Expansion therefore pays decode and HBM-write cost once and amortizes
that cost over subsequent reads, at the expense of a larger resident
working set.

The two paths implement the same logical published KV snapshot; they
differ only in local execution representation. In the measured 4-bit-average-payload
configuration, packed residency reduces cross-attention-KV HBM by approximately
$3.1\times$ relative to expanded high-precision residency, with
approximately $6\%$ additional cross-attention-kernel latency. Expanded residency
becomes preferable after approximately $4$--$6$ reads per snapshot
under the measured workload.

\paragraph{Execution-format selection.}
The packed--expanded choice can be viewed as an amortization decision.
Let $C_{\mathrm{dec}}$ denote the incremental decode cost per packed
read, $C_{\mathrm{expand}}$ the one-time expansion cost, $B_{\mathrm{exp}}$
the additional HBM required by expanded residency, and $R$ the expected
number of reads before refresh or eviction. Expansion is favored when
the avoided repeated decode work is sufficient to justify the
one-time decode and additional residency:
\begin{equation}
N_{\text{read}}\,C_{\mathrm{dec}}
\gtrsim
C_{\mathrm{expand}}
+
\lambda_{\mathrm{HBM}} B_{\mathrm{exp}},
\label{eq:kv-expansion-condition}
\end{equation}
where $\lambda_{\mathrm{HBM}}$ denotes the effective opportunity cost of
GPU memory, and $N_{\text{read}}$ is the expected reuse count of the same behavior-memory KV snapshot before it is refreshed or evicted. Therefore, Eq.~\eqref{eq:kv-expansion-condition} is an operational
interpretation rather than the exact admission policy; the production
system uses measured access frequency, memory pressure, and kernel
profiles to choose the execution representation.

\paragraph{KV byte accounting.}
For $L_R$ cross-attention layers, $N$ memory slots, $G$ KV heads, head dimension
$d_h$, and $b$ bytes per scalar, the uncompressed projected-KV footprint is
\[
B_{\rm KV}
=
L_R\,N\,2Gd_h\,b,
\]
where the factor $2$ accounts for keys and values. In our deployed cross attention configuration, with \(H_q=32\) query heads and 8-way query-to-KV sharing (\(G=4\)), \(d_h=32\), \(L_R=2\), and median \(N=28\), BF16 therefore requires
\[
2\times N\times2\times G\times d_h\times2
= 28~\mathrm{KiB/user}.
\]
The reported $8.5$~KiB compressed footprint includes the quantized
coordinate payload together with per-vector magnitudes and per-snapshot
codec/layout metadata.

\subsection{Behavior-Memory KV Snapshot Storage and Version Rollout}
\label{app:kv-cache-rollout}

\paragraph{Shared-store layout.}
The backing store holds compressed projected KV as a derived payload of
the versioned snapshot $\mathsf V_{\tau}^{(u)}$. Each record contains packed
code indices, vector magnitudes, tensor shapes, and codec/layout
identifiers. Rotation matrices, bit allocations, and codebooks are shared
across users through immutable codec identifiers. Retaining the underlying
memory $M_{\tau}^{(u)}$ adds storage but permits reprojection or
requantization without re-encoding history when that memory remains
compatible. Encoder continuation checkpoints are managed separately.

For small snapshots, per-object I/O overhead can dominate payload transfer.
The backing implementation therefore packs records into indexed, immutable
storage segments and coalesces reads within request deadlines. Updates
append new records; online-prediction-visible payloads are never modified in place.

\paragraph{Snapshot refresh.}
Fresh KV is encoded on the prefill path from compatible materialized memory,
not repeatedly transcoded from older lossy payloads. Expansion changes
execution layout rather than quantization fidelity: higher-fidelity KV
must be regenerated from retained memory or another higher-fidelity source.

\paragraph{Model rollout and cache version transition.}
A full retrain can change the behavior memory representation, making the new model
incompatible with the existing population-wide KV cache. Because rebuilding
snapshots for all users can take substantially longer than loading new model
weights, rollout proceeds as a dual-version transition rather than an
in-place replacement. For notational convenience, we use the user index as a subscript in this section, rather than the superscript notation used above.

Let version $v$ denote the currently serving model/cache pair and $v+1$
the new rollout. Bulk prefill constructs the new cache namespace in the
background while requests continue to use
\[
\left(
\theta^{(v)},
\mathsf V_u^{(v)}
\right).
\]
For each user, bulk prefill records the source cursor $c_u^{\mathrm{bulk}}$
through which behavior has been incorporated into the new continuation
state. Behavior arriving after this cursor remains available in the source
log and is either consumed concurrently by the $v+1$ nearline path or
replayed after bulk construction completes:
\[
\mathsf C_u^{(v+1)}(c_u^{\mathrm{bulk}})
+
\Delta H_u
\left(
c_u^{\mathrm{bulk}},c_u^{\mathrm{current}}
\right]
\longrightarrow
\mathsf C_u^{(v+1)}(c_u^{\mathrm{current}})
\longrightarrow
\mathsf V_u^{(v+1)}.
\]
Thus, no behavior acknowledged after the bulk-prefill cutoff is lost between
backfill and cutover.

Traffic migrates to version $v+1$ only after the user's compatible snapshot
is available and its represented cutoff satisfies the rollout freshness
policy:
\[
\left(
\theta^{(v)},
\mathsf V_{u}^{(v)}
\right)
\quad\longrightarrow\quad
\left(
\theta^{(v+1)},
\mathsf V_{u}^{(v+1)}
\right).
\]
Users not yet ready remain on version $v$, allowing serving to continue
throughout cache reconstruction. During the rollback window, the old
version either continues receiving updates or retains sufficient replay
coverage to catch up before traffic returns. It is reclaimed only after
traffic has drained and the rollback window has expired. This temporarily
duplicates model/cache state and requires additional maintenance or
catch-up work, but preserves serving continuity and an explicit rollback
path.

\begin{table}[hbt]
\centering
\small
\setlength{\tabcolsep}{5pt}
\renewcommand{\arraystretch}{1.15}

\begin{tabularx}{\linewidth}{
    @{}
    >{\raggedright\arraybackslash}p{0.17\linewidth}
    *{3}{>{\raggedright\arraybackslash}X}
    @{}
}
\toprule
\textbf{Dimension}
& \textbf{Offline bulk prefill}
& \textbf{Nearline incremental maintenance}
& \textbf{Online prediction} \\
\midrule

\textbf{Workload trigger}
& Model deployment, backfill, or representation migration
& New user-behavior events
& Recommendation requests \\ \hline
\addlinespace[5pt]

\textbf{Execution objective}
& Maximize history-processing throughput per unit cost while
  meeting the rollout deadline
& Maximize sustainable state-update throughput within the
  freshness SLO
& Maximize request goodput while satisfying the request-level
  latency SLO \\ \hline
\addlinespace[5pt]

\textbf{Primary constraints}
& Long-history compute, accelerator utilization, and backfill volume
& Continuation-state locality, update queueing, and
  state-transfer bandwidth
& Candidate fanout, queueing, behavior memory locality, and
  memory pressure \\ \hline
\addlinespace[5pt]

\textbf{Capacity-control signals}
& Backfill backlog and required rollout-completion time
& Update-queue depth and p90 freshness lag
& Offered traffic, p99 latency, request queue depth, and
  host/GPU memory pressure \\

\bottomrule
\end{tabularx}

\caption{
Stage-specific workload objectives and capacity-control signals
under disaggregated TransX serving. Each stage can scale
independently according to its own bottleneck rather than
requiring uniform scaling of the complete serving stack.
}
\label{tab:stage-specific-capacity-scaling}
\end{table}

\paragraph{Stored KV representation.}
The backing store contains compressed projected KV rather than
high-precision model-side memory. Each immutable record contains the
packed KV payload together with the metadata required to interpret it,
including tensor shapes, vector magnitudes where required by the codec,
and codec/layout identifiers. Rotation matrices, bit allocations, and
codebooks are shared across users through immutable codec versions
rather than duplicated in every snapshot.

\begin{table}[htb]
\centering
\small
\begin{tabular}{llll}
\toprule
\textbf{Residency} &
\textbf{Format} &
\textbf{Preferred regime} &
\textbf{Primary trade-off} \\
\midrule

Backing tier &
Packed &
Cold population &
Minimum storage; fetch required \\

Host cache &
Packed &
Warm state &
Avoid backing reads; device transfer remains \\

GPU &
Packed &
Low reuse or HBM pressure &
Small footprint; repeated decode \\

GPU &
Expanded &
Repeated reads &
Larger HBM; decode amortized \\
\bottomrule
\end{tabular}
\caption{Residency and execution representations for published behavior memory KV.
Placement and execution format are separate decisions: the same logical
snapshot may move across tiers or change its local execution
representation without changing its request-level semantics.}
\label{tab:kv-residency-execution}
\end{table}

For small snapshots, per-object I/O overhead can dominate the payload
itself. KV records are therefore grouped into indexed immutable storage
segments, allowing compatible reads to be coalesced. New snapshots are
appended rather than modifying previously published behavior memory KV in
place.

\paragraph{Logical version versus physical representation.}
We distinguish the semantic compatibility of a behavior memory KV snapshot from
its current residency or execution format. At minimum, compatibility
depends on
\[
\left(
v_{\mathrm{model}},
v_{\mathrm{memory}},
v_{\mathrm{projection}},
v_{\mathrm{codec}},
v_{\mathrm{layout}}
\right).
\label{eq:kv-version-tuple}
\]
A request may consume KV only when the producing dependencies are
compatible with the pinned model.

By contrast, movement among backing storage, host cache, packed GPU
residency, and expanded GPU residency does not by itself create a new
semantic snapshot version. These transitions modify placement or local
execution representation, not the behavioral state represented by the
snapshot.

Compatibility is dependency based rather than requiring blanket
invalidation. For example, a prediction-head-only change need not
invalidate existing behavior memory KV if none of the parameters or transforms
that produced that KV have changed.

\paragraph{Regeneration and transcoding.}
Fresh behavior memory KV is generated from compatible model-side memory when that
memory is retained; it is not repeatedly transcoded from an older lossy
KV payload. Changing only execution layout may transform the physical
representation of an existing snapshot, but increasing representation
fidelity requires regeneration from retained memory or another
higher-fidelity source.

\subsection{System Robustness and Horizontal Scaling Analysis}
\label{app:system-robustness}

\begin{table*}[htb]
\centering
\footnotesize
\setlength{\tabcolsep}{4pt}
\renewcommand{\arraystretch}{1.10}
\begin{tabularx}{\textwidth}{
@{}p{0.15\textwidth}
   p{0.15\textwidth}
   p{0.15\textwidth}
   p{0.15\textwidth}
   X@{}}
\toprule
\textbf{Failure / stress condition}
&
\textbf{Injected condition}
&
\textbf{Serving impact}
&
\textbf{Recovery measurement}
&
\textbf{Performance}
\\
\midrule

Continuation-owner loss
&
Terminate owners serving $1\%$ of active users
&
Latest-snapshot share, stale-view / fallback rate, p90 freshness lag
&
Checkpoint restore time, replayed events/user, time to freshness recovery
&
Latest-compatible share falls from $97.8\%$ to $93.6\%$;
$4.9\%$ stale-compatible, $1.5\%$ fallback, $<0.01\%$ failed.
Median restore $1.7$\,s; p90 freshness peaks at $6.2$\,s and returns below
the $2$\,s SLO within $21$\,s.
\\ \hline

KV cache eviction spike
&
Evict $50\%$ of GPU-resident behavior memory snapshots
&
GPU and mixed-residency hit rates, backing-store reads, p99 scoring latency
&
Time to restore steady-state residency
&
GPU hit rate drops from $25\%$ to $12\%$ and mixed-residency hit rate from
$99.3\%$ to $96.8\%$; p99 scoring latency increases from $49$ to
$58$\,ms. Mixed-residency hit rate recovers above $99\%$ within $47$\,s.
\\ \hline

Snapshot-store unavailability
&
Block backing-store reads for $60$\,s
&
Resident / stale-compatible / fallback shares and request failure rate
&
Time to normal snapshot availability
&
$96.9\%$ of requests continue from resident compatible snapshots,
$2.2\%$ use an older admissible snapshot, and $0.9\%$ use the
non-sequential fallback; request failures remain below $0.01\%$.
Normal fetch availability recovers within $8$\,s after store restoration.
\\ \hline

Publication conflicts
&
Inject stale or duplicate publication attempts at a $1\%$ rate
&
Retry rate, stale-view fraction, publication delay
&
Retry success and additional publication latency
&
$99.4\%$ of conflicts resolve on the first retry and $>99.99\%$ within
two retries; p99 additional publication delay is $38$\,ms, with no
measurable change in online request failure rate.
\\ \hline

Nearline publication stall
&
Delay publication by $10$\,s for $5\%$ of active users
&
Latest-snapshot share, stale-compatible share, fallback rate
&
Time for latest-compatible coverage to recover
&
During the stall, $92.7\%$ of affected requests use the latest compatible
snapshot, $5.8\%$ use an older admissible snapshot, and $1.5\%$ fall back.
Latest-compatible coverage returns above $99\%$ within $24$\,s after
publication resumes.
\\

\bottomrule
\end{tabularx}
\caption{
\textbf{System robustness measurements} for controlled
failure injection in TransX serving.}
\label{tab:transx-robustness}
\end{table*}

\paragraph{Horizontal scaling observations.}
Across all three stages, useful goodput scales close to linearly at moderate
replication and then exhibits expected diminishing efficiency as
stage-specific bottlenecks become dominant. Bulk prefill reaches approximately
$96\%$ scaling efficiency at $2\times$ capacity and $90\%$ at $4\times$,
with the remaining loss primarily associated with long-history stragglers and
batch imbalance. Nearline maintenance shows a similar trend, but its scaling
degrades somewhat faster as hot-user ownership and continuation-state movement
begin to dominate over dense compute. Online prediction scales most efficiently
at $2\times$ replication and remains above $90\%$ efficiency at $4\times$,
while the residual loss is driven by candidate-fanout skew, cache locality, and
smaller effective batches. Importantly, raw GPU utilization remains high even
when scaling efficiency declines, confirming that device occupancy alone is not
a sufficient measure of system efficiency; stage-specific useful goodput and
SLO compliance provide the more relevant scaling signals.
\section{Scaling Analysis}
\label{app:scaling}

We provide the proof for the log-loss decomposition and the scaling experiment manifests in this part of the appendix.

\subsection{Log-Loss Decomposition for Compressed Behavioral State}
\label{app:scaling-decomposition}

\paragraph{Setting and assumptions.}
Let \(Y\in\mathcal Y\), where \(\mathcal Y\) is finite (e.g. \(Y\in\{0,1\}\) for binary response target). Fix a held-out population \(P(H,X,Y)\) for which the relevant conditional distributions exist and the population log loss is finite.
Let
\begin{equation}
\label{eq:app-memory-readout}
M=f_{\phi}(H),
\qquad
q_{\psi}(y\mid M,X)
\end{equation}
denote a deterministic behavioral memory and a probabilistic
fusion-plus-head readout, including any learned processing of $X$.
Write
\begin{equation}
\label{eq:app-true-conditionals}
\begin{aligned}
p_H(y\mid H,X)&=P(Y=y\mid H,X),\\
p_M(y\mid M,X)&=P(Y=y\mid M,X),
\end{aligned}
\end{equation}
and define population log loss by
\begin{equation}
\label{eq:app-population-logloss}
\mathcal{L}(\phi,\psi)
=
\mathbb{E}_{P}
\left[-\log q_{\psi}(Y\mid f_{\phi}(H),X)\right].
\end{equation}
Note that the conditional distribution $p_M$ is induced by $P$ and the encoder rather than a fitted readout.

\begin{assumption}[Fixed evaluation population, fitted model, and finite risk]
\label{ass:scaling-regularity}
The distribution $P$ and the observed inputs $(H,X)$ are fixed for each
application of the identity. The parameters $(\phi,\psi)$ are held
fixed when evaluating population risk; they may have been learned jointly or sequentially.
Equivalently, $(\phi,\psi)$ may be viewed as conditioned on the training data and training randomness, with evaluation then performed on a fresh draw from \(P\). The readout assigns
positive probability wherever $p_M(y\mid M,X)>0$, almost surely,
and $\mathcal{L}(\phi,\psi)<\infty$.
\end{assumption}

The theorem below adopts the above setting and assumption. We use $\mathcal{H}$ to denote the Shannon conditional entropy operator, distinguishing it from the behavioral-history variable \(H\).

\begin{theorem}[Predictive-information and readout decomposition]
\label{thm:scaling-decomposition}
Population log loss decomposes as
\begin{equation}
\label{eq:app-three-term-decomposition}
\begin{aligned}
\mathcal{L}(\phi,\psi)
&=
\mathcal{H}(Y\mid H,X)
+
\mathcal{I}_{\mathrm{lost}}(\phi)
+
\mathcal{R}(\phi,\psi),
\end{aligned}
\end{equation}
where
\begin{align}
\label{eq:app-information-loss}
\mathcal{I}_{\mathrm{lost}}(\phi)
&=I(Y;H\mid M,X),\\
\label{eq:app-readout-mismatch}
\mathcal{R}(\phi,\psi)
&=
\mathbb{E}_{M,X}
D_{\mathrm{KL}}\!\left(
p_M(\cdot\mid M,X)
\,\Vert\,
q_{\psi}(\cdot\mid M,X)
\right).
\end{align}
Both terms are nonnegative. Moreover,
\begin{equation}
\label{eq:app-information-loss-kl}
\mathcal{I}_{\mathrm{lost}}(\phi)
=
\mathbb{E}_{H,X}
D_{\mathrm{KL}}\!\left(
p_H(\cdot\mid H,X)
\,\Vert\,
p_M(\cdot\mid f_{\phi}(H),X)
\right).
\end{equation}
\end{theorem}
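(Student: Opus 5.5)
The proof is a pointwise telescoping of the log loss, followed by taking expectations. Throughout, write $M=f_\phi(H)$. On the event where all quantities are positive, which holds almost surely under the support condition in Assumption~\ref{ass:scaling-regularity}, I would insert the two true conditionals and split the integrand:
\[
-\log q_\psi(Y\mid M,X)
=
-\log p_H(Y\mid H,X)
+\log\frac{p_H(Y\mid H,X)}{p_M(Y\mid M,X)}
+\log\frac{p_M(Y\mid M,X)}{q_\psi(Y\mid M,X)}.
\]
Positivity of $p_H$ at the realized $Y$ is automatic almost surely. The same holds for $p_M$, and $q_\psi>0$ wherever $p_M>0$ by assumption.

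Next I take $\mathbb E_P$ of each term and identify it.
\begin{itemize}
\item The first term has expectation $\mathcal H(Y\mid H,X)$ by definition.
\item For the third term, condition on $(M,X)$. Since $M$ is a measurable function of $H$, the conditional law of $Y$ given $(M,X)$ is exactly $p_M(\cdot\mid M,X)$. The inner expectation is therefore $D_{\mathrm{KL}}(p_M\Vert q_\psi)$, which gives $\mathcal R(\phi,\psi)\ge0$ by Gibbs' inequality.
\item For the second term, condition on $(H,X)$. Because $M$ is determined by $H$, the inner expectation is $D_{\mathrm{KL}}(p_H(\cdot\mid H,X)\Vert p_M(\cdot\mid f_\phi(H),X))$. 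This yields \eqref{eq:app-information-loss-kl} and nonnegativity.
\item To recognize this quantity as $I(Y;H\mid M,X)$, note that $(H,X)$ and $(H,M,X)$ generate the same information, since $M=f_\phi(H)$. Hence $p_H(Y\mid H,X)=P(Y\mid H,M,X)$, and the expected log ratio is the standard form of conditional mutual information. Equivalently, it equals $\mathcal H(Y\mid M,X)-\mathcal H(Y\mid H,X)$.
\end{itemize}

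The main obstacle is justifying that the expectation of the sum splits into the sum of expectations with no $\infty-\infty$ issue. My approach is to use finiteness of $\mathcal Y$. The conditional entropies are bounded by $\log|\mathcal Y|$, so both $\mathcal H(Y\mid H,X)$ and $\mathcal H(Y\mid M,X)$ are finite. That makes the first two terms, whose expectations are $\mathcal H(Y\mid H,X)$ and $\mathcal H(Y\mid M,X)-\mathcal H(Y\mid H,X)$, integrable. Then $\mathcal L<\infty$ forces the third term to be integrable as well.

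Since the second and third terms each have nonnegative conditional expectations, the tower property can be applied to them without sign issues. Measurability of conditioning on $M$ follows from $f_\phi$ being a deterministic measurable map, with $(\phi,\psi)$ held fixed as stated in the assumption. I would remark that this is the only place the assumption is used substantively.
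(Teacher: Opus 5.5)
Your proposal is correct and is essentially the paper's proof: the paper conditions on $(M,X)$ to get $\mathcal{L}(\phi,\psi)=\mathcal{H}(Y\mid M,X)+\mathcal{R}(\phi,\psi)$, then uses $\mathcal{H}(Y\mid H,M,X)=\mathcal{H}(Y\mid H,X)$ (deterministic $M$) to split off $I(Y;H\mid M,X)$, which is your pointwise three-term telescoping taken in expectation and rests on the same key fact $P(Y\mid H,M,X)=p_H(Y\mid H,X)$. Your integrability discussion, which the paper leaves implicit, is a useful addition, with two fixes: justify it by writing the middle term as a difference of the nonnegative variables $-\log p_M(Y\mid M,X)$ and $-\log p_H(Y\mid H,X)$, each with expectation at most $\log|\mathcal{Y}|$, since a finite expectation alone does not give integrability; and drop the claim that measurability is the only substantive use of the assumption, because your argument also uses its support and finite-risk parts.
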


\begin{proof}
Conditioning on \((M,X)\) and adding and subtracting the log probability under
$p_M$ gives,
\begin{align}
\mathcal{L}(\phi,\psi)
&=
\mathbb{E}_{M,X}
\sum_{y\in\mathcal{Y}}
p_M(y\mid M,X)
\bigl[-\log q_{\psi}(y\mid M,X)\bigr]
\nonumber\\
&=
\mathcal{H}(Y\mid M,X)
\nonumber\\
&\quad+
\mathbb{E}_{M,X}
\sum_{y\in\mathcal{Y}}
p_M(y\mid M,X)
\log\frac{p_M(y\mid M,X)}{q_{\psi}(y\mid M,X)}
\nonumber\\
&=
\mathcal{H}(Y\mid M,X)
+
\mathcal{R}(\phi,\psi).
\label{eq:app-cross-entropy-identity}
\end{align}
Because $M=f_{\phi}(H)$ is deterministic, it holds that
\begin{equation}
\label{eq:app-deterministic-memory-entropy}
\mathcal{H}(Y\mid H,M,X)
=
\mathcal{H}(Y\mid H,X).
\end{equation}
Consequently, the conditional mutual-information identity gives
\begin{align}
I(Y;H\mid M,X)
&=
\mathcal{H}(Y\mid M,X)
-
\mathcal{H}(Y\mid H,M,X)
\nonumber\\
&=
\mathcal{H}(Y\mid M,X)
-
\mathcal{H}(Y\mid H,X).
\label{eq:app-entropy-information-identity}
\end{align}
Equivalently,
\begin{equation}
\label{eq:app-memory-entropy-split}
\mathcal{H}(Y\mid M,X)
=
\mathcal{H}(Y\mid H,X)
+
I(Y;H\mid M,X).
\end{equation}
Substitution into Eq.~\eqref{eq:app-cross-entropy-identity} yields
Eq.~\eqref{eq:app-three-term-decomposition}.

To obtain the conditional-KL representation, expand its definition:
\begin{align}
&\mathbb{E}_{H,X}
D_{\mathrm{KL}}\!\left(
p_H(\cdot\mid H,X)
\,\Vert\,
p_M(\cdot\mid f_{\phi}(H),X)
\right)
\nonumber\\
&\qquad=
\mathbb{E}_{H,X,Y}
\left[
\log\frac{p_H(Y\mid H,X)}{p_M(Y\mid M,X)}
\right]
\nonumber\\
&\qquad=
\mathcal{H}(Y\mid M,X)
-
\mathcal{H}(Y\mid H,X)
\nonumber\\
&\qquad=
\mathcal{I}_{\mathrm{lost}}(\phi).
\end{align}
Since conditional mutual information and KL divergence are nonnegative,
$I_{\mathrm{lost}}(\phi)\ge 0$ and $R(\phi,\psi)\ge 0$, completing the proof.
\end{proof}

\subsection{Scaling Experiment Manifest}
\label{app:scaling-manifest}

This section reports the quantities needed to reconstruct the scaling
comparisons in Section~\ref{sec:scaling}. We use
$N_{\rm ref}=1.0$B valid supervised candidate targets and supervision
fractions $\eta\in\{0.20,0.25,0.30\}$. The measured numbers of valid
targets per encoded trajectory are approximately
$\nu=\{64,255,2.055\mathrm{K}\}$ for the sparse-, medium-, and
rich-history cohorts, respectively. The resulting data volumes are:

\begin{table}[h]
\centering
\small
\begin{tabular}{lccc}
\toprule
 & Sparse & Medium & Rich \\
\midrule
History length & 64--252 & 253--996 & 997--4096 \\
$\nu$ (targets/trajectory) & 64 & 255 & 2.055K \\
Trajectories, $\eta=0.20$ & 3.12M & 0.78M & 97K \\
Trajectories, $\eta=0.25$ & 3.91M & 0.98M & 122K \\
Trajectories, $\eta=0.30$ & 4.69M & 1.18M & 146K \\
\bottomrule
\end{tabular}
\caption{Data manifest for the scaling experiments. Each supervision
level contains $200$M, $250$M, or $300$M valid targets, respectively.}
\label{tab:scaling-data-manifest}
\end{table}

The encoder sweep uses depths $\{1,2,3,4,6,8\}$. Raw encoder compute
$C_E$ (MFLOPs/trajectory) is

\[
\begin{array}{c|rrrrrr}
\text{depth} & 1 & 2 & 3 & 4 & 6 & 8\\
\hline
\text{sparse} & 60 & 119 & 179 & 238 & 358 & 477\\
\text{medium} & 300 & 600 & 899 & 1199 & 1799 & 2398\\
\text{rich} & 2294 & 4587 & 6881 & 9174 & 13761 & 18348
\end{array}
\]

Using the measured numbers of valid supervised targets per encoded trajectory,
\[
\nu_{\rm sparse}\approx64,\qquad
\nu_{\rm medium}\approx255,\qquad
\nu_{\rm rich}\approx2.055\mathrm{K},
\]
the corresponding amortized encoder-compute grids are

\[
\bar C_E^{\rm sparse}
\approx
\{0.94,1.86,2.80,3.72,5.59,7.45\},
\]

\[
\bar C_E^{\rm medium}
\approx
\{1.18,2.35,3.53,4.70,7.05,9.40\},
\]

\[
\bar C_E^{\rm rich}
\approx
\{1.12,2.23,3.35,4.46,6.70,8.93\}
\quad\text{MFLOPs/target}.
\]

Fusion width is swept from 128 to 1024, corresponding to measured
candidate-conditioned compute

\[
C_F
\approx
\{0.40,0.70,1.22,2.61,4.50,7.00,9.80,17.20\}
\quad\text{MFLOPs/candidate}.
\]
while the depth sweep uses $1$--$5$ reader blocks with
$C_F\approx\{0.4,0.7,1.0,1.3,1.6\}$ MFLOPs/candidate.
Memory sweeps evaluate progressively finer local--global memories,
reported both by logical resolution and by physical BF16 snapshot
budgets spanning approximately $8$--$256$ KiB/user. All unvaried
components are held fixed within each sweep.

All configurations within a $(h,\eta)$ regime use the same sampled
training set and optimization recipe. They are trained for one pass over
the sampled valid targets with a logical batch of $8192$ targets,
corresponding to approximately $24.4$K, $30.5$K, and $36.6$K logical
updates for $\eta=0.20$, $0.25$, and $0.30$, respectively. The primary
quality metric in Section~\ref{sec:scaling-analysis} is held-out click AUC. Because each supervision regime is trained for one pass over its sampled
targets, increasing $\eta$ increases both the number of distinct supervised
targets and the number of optimizer updates. The resulting trends therefore
characterize the one-pass training operating point and should not be
interpreted as isolating supervision diversity from optimization duration.

Repeated selected configurations are reported over three independent runs; the full allocation grid is evaluated once per configuration.
For repeated configurations, uncertainty is reported as the standard deviation across three independent training seeds. Repeating the memory-budget selection with
$\epsilon\in\{2,3,5\}$ AUC basis points changes the selected memory
configuration by at most one adjacent evaluated level and does not alter
the qualitative allocation trend toward fusion with increasing
supervision. Accordingly, the primary analysis uses $\epsilon=3$ basis points.
\section{Post-Training Details}
\label{app:posttraining}

Section~\ref{sec:post-training} emphasizes the readout geometries used
to specialize the shared TransX behavioral representation. This appendix
specifies the corresponding objective construction, sampling rules, and
task-specific execution details.

\subsection{Retrieval Objective and Negative Construction}
\label{app:retrieval-posttraining}

For a positive candidate $j^+$ exposed in event $(u,t)$, valid mature
non-positive candidates from the same event form the hard-negative set
\[
\mathcal N_{\mathrm{hard}}^{(u,t)}
=
\left\{
j:
j\neq j^+,\;
j\in\mathcal V_t^{(u)},\;
y_{t,j}^{(u)}=0
\right\}.
\label{eq:retrieval-hard-negatives}
\]
Because these candidates share the same behavioral memory and serving
context as the positive item, they provide naturally difficult
comparisons.

Candidate embeddings associated with other user--event examples in the
same logical batch provide an additional in-batch pool,
\[
\mathcal N_{\mathrm{batch}}^{(u,t)}
=
\left\{
j':
j' \text{ belongs to another user--event in the batch}
\right\},
\]
after removing known positives and duplicate items. For notational
simplicity, with one positive item per event,
\[
\mathcal C_t^{(u)}
=
\{j^+\}
\cup
\mathcal N_{\mathrm{hard}}^{(u,t)}
\cup
\mathcal N_{\mathrm{batch}}^{(u,t)}.
\]

Using the retrieval score $\exp s_{t,i}^{(u)}$ defined for item $i$ at event $(u,t)$, the InfoNCE loss is
\[
\mathcal L_{\mathrm{ret}}^{(u,t,j^+)}
=
-
\log
\frac{
\exp s_{t,j^+}^{(u)}
}{
\sum_{j\in\mathcal C_t^{(u)}}
\exp s_{t,j}^{(u)}
}.
\label{eq:retrieval-infonce}
\]

The event packing used by core training therefore supplies both the
contextual behavioral representation and within-event hard negatives,
while item representations already available elsewhere in the logical
batch enlarge the contrastive pool. Behavioral encoding and memory
materialization are unchanged; the task-specific reader, retrieval
projection, and item-side representation define the new capability.

The retrieval reader and item index must use compatible representation
versions. Changes to retrieval-specific reader projections therefore
induce a new retrieval KV/index compatibility version even when the
underlying behavioral memory remains unchanged.

\subsection{Preference-Ranking Objective and Distributed Reduction}
\label{app:ranking-posttraining}

For event $(u,t)$, let $\mathcal V_t^{(u)}$ contain candidates with
valid, mature feedback. A declared partial ordering $\succ_y$ over
response outcomes induces
\[
\mathcal P_t^{(u)}
=
\left\{
(i,j):
i,j\in\mathcal V_t^{(u)},
\;
y_{t,i}^{(u)}
\succ_y
y_{t,j}^{(u)}
\right\}.
\label{eq:ranking-preference-set}
\]
Tied or incomparable outcomes do not generate a pair.

For each $(i,j)\in\mathcal P_t^{(u)}$, the Bradley--Terry probability is
\[
P_\theta
\left(
i\succ_t j
\mid
M_t^{(u)}
\right)
=
\sigma
\left(
\frac{
a_{t,i}^{(u)}-a_{t,j}^{(u)}
}{
\tau_{\mathrm{rank}}
}
\right).
\]
Rather than enumerate all pairs, we sample a bounded subset
\[
\widetilde{\mathcal P}_t^{(u)}
\subseteq
\mathcal P_t^{(u)}.
\]
With optional positive pair weights $w_{ij}$, the event-normalized loss
is
\[
\mathcal L_{\mathrm{BT}}^{(u,t)}
=
-
\frac{
\sum_{(i,j)\in\widetilde{\mathcal P}_t^{(u)}}
w_{ij}
\log
\sigma
\left(
\frac{
a_{t,i}^{(u)}-a_{t,j}^{(u)}
}{
\tau_{\mathrm{rank}}
}
\right)
}{
\sum_{(i,j)\in\widetilde{\mathcal P}_t^{(u)}} w_{ij}
}.
\label{eq:weighted-bt}
\]
Event normalization prevents events with more candidates or eligible
pairs from receiving disproportionate weight. Events containing no
comparable pair contribute no Bradley--Terry term.

\paragraph{Distributed pair reduction.}
Preference ranking preserves the same encoded trajectory, event memory,
and candidate fanout as core training. When both candidates of a sampled
pair reside on the same fanout worker, the pairwise loss can be evaluated
locally. Otherwise, the corresponding logits are gathered to the event
owner or another event-local reduction point, where the pairwise loss is
formed and the resulting logit gradients are returned to candidate
owners.

Thus the additional communication introduced by preference
specialization is event local. Candidate gradients still aggregate
through the same projected KV and memory state before re-entering the
shared behavioral trajectory.

\subsection{Generative Preference Optimization}
\label{app:generative-preference}

Let $\xi_t$ denote the task-specific conditioning state derived from the
behavioral memory and serving context, and let $(G^+,G^-)$ denote a
preferred/rejected output pair. The trainable policy $\pi_\theta$ is
initialized from a generative checkpoint, while $\pi_{\mathrm{ref}}$
is the corresponding frozen pre-preference policy.

Define
\[
\Delta_\theta(\xi,G^+,G^-)
=
\log
\frac{
\pi_\theta(G^+\mid \xi)
}{
\pi_{\mathrm{ref}}(G^+\mid \xi)
}
-
\log
\frac{
\pi_\theta(G^-\mid \xi)
}{
\pi_{\mathrm{ref}}(G^-\mid \xi)
}.
\label{eq:dpo-delta}
\]
For $P_t$ valid pairs associated with event $t$, the DPO objective~\citep{rafailov2023dpo} is
\begin{equation}
\mathcal L_{\mathrm{DPO}}^{(u,t)}
=
-
\frac{1}{P_t}
\sum_{r=1}^{P_t}
\log
\sigma
\left(
\beta
\Delta_\theta
\left(
\xi_t^{(u)},
G_{t,r}^{+},
G_{t,r}^{-}
\right)
\right).
\label{eq:generative-dpo}
\end{equation}
Preference optimization may be combined with autoregressive likelihood
training,
\[
\mathcal L_{\mathrm{gen}}
=
\mathcal L_{\mathrm{AR}}
+
\lambda_{\mathrm{DPO}}
\mathcal L_{\mathrm{DPO}},
\label{eq:combined-generative-loss}
\]
to retain valid generation while improving preference separation.

All chosen and rejected outputs associated with one event condition on
the same cutoff-valid behavioral state, so preference multiplicity
increases readout work rather than behavioral-history encoding.

\subsection{Generative Retrieval Details}
\label{app:genret-posttraining}

For generative retrieval, each catalog item $j$ is represented by a
fixed Semantic-ID sequence
\[
G_j
=
\operatorname{SID}(j)
=
(g_{j,1},\ldots,g_{j,K}).
\]
The decoder factorizes item probability autoregressively:
\[
\log
\pi_\theta
\left(
G_j
\mid
M_t^{(u)},\xi_t^{(u)}
\right)
=
\sum_{k=1}^{K}
\log
\pi_\theta
\left(
g_{j,k}
\mid
M_t^{(u)},\xi_t^{(u)},g_{j,<k}
\right).
\label{eq:genret-likelihood}
\]

Each reader layer projects the behavioral memory once,
\[
K_{t}^{(\ell)}
=
M_t^{(u)}W_{K,\ell}^{\mathrm{GR}},
\qquad
V_{t}^{(\ell)}
=
M_t^{(u)}W_{V,\ell}^{\mathrm{GR}},
\]
and the resulting cross-attention KV is shared across decoding steps and
hypotheses.

An event-level item preference $i\succ_t j$ induces the sequence-level
pair
\[
G^+
=
\operatorname{SID}(i),
\qquad
G^-
=
\operatorname{SID}(j).
\]
For preferred item $i$, negative candidates are drawn from
\[
\mathcal N_t(i)
=
\mathcal N_{t}^{\mathrm{exposed}}(i)
\cup
\mathcal N_{t}^{\mathrm{hard}}(i),
\]
where the first term contains lower-preference items exposed in the same
event and the second contains catalog-valid hard negatives selected by
the retrieval preference scorer. Synthetic negatives are retained only
when the teacher establishes the configured preference margin.

Malformed or nonexistent Semantic IDs are handled by constrained
decoding rather than being treated as ordinary preference negatives.
With one negative per positive the pair objective follows
Eq.~\eqref{eq:generative-dpo}; multiple negatives may instead be handled
through a multi-negative or listwise objective.

\subsection{Generative Slate-Ranking Details}
\label{app:genrank-posttraining}

For event $t$, candidate-conditioned TransX representations
$\{z_{t,i}\}$ and base relevance logits $\{b_{t,i}\}$ are computed once
for the available candidate set. Let
\[
S=(j_1,\ldots,j_m)
\]
denote an ordered slate. At decoding step $k$, the slate decoder
summarizes the previously selected items into state $h_{t,k}$ and scores
each remaining candidate by
\[
s_{t,k,i}
=
b_{t,i}
+
\frac{
(W_q h_{t,k})^\top(W_c z_{t,i})
}{
\sqrt{d_{\text{slate}}}
}.
\label{eq:genrank-appendix-score}
\]
Previously selected candidates are masked from later positions. The
slate policy therefore factorizes as
\[
\pi_\theta(S\mid \xi_t)
=
\prod_{k=1}^{m}
\pi_\theta
\left(
j_k
\mid
\xi_t,j_{<k}
\right),
\label{eq:slate-factorization}
\]
where $\xi_t$ includes the fixed TransX candidate representations and
serving context.

\paragraph{Preference-pair construction.}
Logged serving events generally provide one realized slate rather than
counterfactual preferred/rejected slate pairs. We therefore sample
candidate slates
\[
S^{(1)},\ldots,S^{(M)}
\sim
\pi_{\mathrm{sample}}(\cdot\mid \xi_t)
\]
and score them using a frozen slate-aware preference function
$R_\omega(\xi_t,S)$. The preference teacher $R_\omega$ is trained on a separate mature-feedback
split and frozen before slate-pair construction. Teacher-preference accuracy
therefore measures agreement with $R_\omega$, not independent recommendation
quality, which is reported separately on held-out evaluation data. 
Matched-length pairs are
retained only when
\[
R_\omega(\xi_t,S^+)
-
R_\omega(\xi_t,S^-)
\ge
\delta_{\mathrm{pref}}.
\label{eq:slate-preference-margin}
\]
Controlled ordering or selection corruptions of a high-quality slate may
supply additional hard negatives when they remain valid and satisfy the
same preference-margin criterion.

The retained slate pairs are optimized using the preference objective in
Eq.~\eqref{eq:generative-dpo}. Candidate representations and behavioral
memory remain fixed across the chosen/rejected likelihood evaluations,
so the additional preference multiplicity is confined to the
autoregressive slate readout.
\section{Model Configurations and Analysis}
\label{app:model-configurations}

\paragraph{Model configurations.}
We use the scaling analysis in Section~\ref{sec:scaling-analysis} to select
the common TransX backbone and then instantiate the task-specific readouts
required by each post-training objective. The TransX backbone configuration is outlined in Table~\ref{tab:transx-backbone-config}.

\begin{table}[hbt]
\centering
\small
\setlength{\tabcolsep}{5pt}
\renewcommand{\arraystretch}{1.12}
\begin{tabularx}{\linewidth}{@{}p{0.27\linewidth}X@{}}
\toprule
\textbf{Component} & \textbf{Configuration} \\
\midrule

Behavior tokenizer
&
MLP $512$--$1024$--$1024$
\\

Behavior encoder
&
4 Transformer blocks; $d_{\mathrm{model}}=1024$; 32 attention heads;
FFN $1024$--$4096$--$1024$
\\

Reader memory
&
32 local behavior states plus up to 32 learned global summaries.
Base block sizes are 64, 32, and 16 states for rich-, medium-, and
sparse-history users, respectively; each block is reduced to one summary
state by learned attention pooling.
\\

Candidate encoder
&
MLP $512$--$1024$--$1024$
\\

Cross-attention fusion &
2 reader blocks; reader width $d_{\rm model}=1024$;
$H_q=32$, $G=4$ KV heads, $d_h=32$; FFN 1024--4096--1024
\\

Sparse embedding table & Roughly $300$M parameters in total (including both hashed and non-hashed ID) \\
\bottomrule
\end{tabularx}
\caption{Shared TransX backbone configuration.}
\label{tab:transx-backbone-config}
\end{table}

The task-specific configurations are as follows:
\begin{itemize}

    \item \texttt{TransX-Base}.
    The prediction readout contains two stacked DCNv2 cross layers~\citep{wang2021dcn} with
    rank $256$, followed by MLP of dimensions
    $2048$--$1024$--$512$--$256$ and a parallel Deep\&Wide tower with the same MLP dimensions. The resulting representation is mapped to the binary click and conversion outputs.

    \item \texttt{TransX-Retrieval}.
    Retrieval replaces the candidate-wise prediction readout with a
    context-level retrieval representation. A context-query encoder with
    dimensions $512$--$256$--$128$ attends to the shared behavior memory using
    $128$-dimensional key and value projections, producing a
    $128$-dimensional user embedding. The item tower reuses the corresponding
    base item tokenizer and applies a projection head to the same
    $128$-dimensional retrieval space.

    \item \texttt{TransX-Ranking}.
    \texttt{TransX-Ranking} reuses the \texttt{TransX-Base} candidate encoder, candidate-to-memory fusion, and interaction network. Post-training replaces the binary response output with a scalar ranking head and changes the supervision to the Bradley–Terry preference objective; this isolates the effect of preference specialization without changing the serving architecture.

    \item \texttt{TransX-GenRet}.
    Generative retrieval retains the shared TransX behavioral memory but replaces the discriminative readout with an autoregressive decoder over item semantic IDs (SIDs). At decoding step $k$, the evolving generative query cross-attends to the same reader memory, allowing the projected memory KV to be reused throughout generation. The decoder predicts one SID token per step until a complete item identifier is produced. We construct SIDs hierarchically from pretrained item embeddings using balanced $k$-means. The first three levels each use a vocabulary of size $2{,}048$; a fourth hash-based token is appended to disambiguate items that collide under the three-level semantic code. The autoregressive decoder is trained independently rather than initialized from the \texttt{TransX-Base} candidate readout.

    \item \texttt{TransX-GenRank}.
    Generative ranking first computes the candidate-conditioned TransX
    representations and base relevance logits for the complete candidate set.
    These representations are reused throughout slate generation. At decoding
    step $k$, a lightweight autoregressive slate decoder summarizes the
    previously selected items into a position-conditioned slate state
    $h_k$. Each remaining candidate is scored by adding a
    slate-conditioned residual to its base relevance logit,
    \[
    s_{k,i}
    =
    b_i
    +
    \frac{(W_q h_k)^\top(W_c z_i)}{\sqrt{d_\text{slate}}},
    \]
    where $z_i$ is the precomputed candidate-conditioned TransX
    representation. Previously selected items are masked from subsequent
    steps. This formulation preserves the strong pointwise relevance learned
    by \texttt{TransX-Base} while concentrating the autoregressive capacity on
    dependencies among slate selections. Candidate representations and
    behavioral memory are computed once and reused across the maximum
    eight decoding steps.
    
\end{itemize}

\paragraph{Post-training settings.}
Table~\ref{tab:posttraining-settings} records the concrete instantiation
used for the experiments reported in this work. Unless otherwise stated,
all post-training variants are initialized from TransX-Base and freeze the
shared behavior tokenizer, causal encoder, and behavior-memory construction;
only the capability-specific readout and output components are optimized.

\begin{table*}[t]
\centering
\footnotesize
\setlength{\tabcolsep}{5pt}
\renewcommand{\arraystretch}{1.18}
\begin{tabularx}{\textwidth}{
    p{2.0cm}
    p{3.3cm}
    p{3.0cm}
    >{\raggedright\arraybackslash}X
}
\toprule
Variant
& Trainable components
& Objective
& Experimental instantiation \\
\midrule

\textbf{TransX-Retrieval}
&
Context-query reader, retrieval KV projection, and item projection/head
&
InfoNCE
&
For each positive item, negatives comprise mature non-positive candidates
from the same event and deduplicated in-batch items from other user--event
examples, after removing known positives.
$\tau_{\mathrm{ret}}=0.02$;
same-event negative cap $=8$ per positive.
\\

\addlinespace

\textbf{TransX-Ranking}
&
Candidate-conditioned ranking readout and scalar ranking head
&
Event-normalized Bradley--Terry
&
Pairs are formed among candidates with mature feedback within the same
event; tied or incomparable outcomes are omitted and at most
$16$ pairs are sampled per event.
$\tau_{\mathrm{rank}}=1.0$;
$w_{ij}=\text{uniform}$.
\\

\addlinespace

\textbf{TransX-GenRet}
&
Generative retrieval reader, task-specific KV projection, and
autoregressive SID decoder
&
$\mathcal L_{\mathrm{AR}}
+\lambda_{\mathrm{DPO}}\mathcal L_{\mathrm{DPO}}$
&
Preferred items are paired with lower-preference same-event items and
catalog-valid hard negatives; synthetic negatives are retained only when
the frozen preference scorer satisfies the configured margin.
$\lambda_{\mathrm{DPO}}=0.2$,
$\beta=0.1$;
decoding: prefix-Trie constraint beam search with width=16.
\\

\addlinespace

\textbf{TransX-GenRank}
&
Autoregressive slate decoder and slate-conditioned scoring parameters
&
$\mathcal L_{\mathrm{AR}}
+\lambda_{\mathrm{DPO}}\mathcal L_{\mathrm{DPO}}$
&
For each event, 8 candidate slates are sampled and
scored by the frozen teacher $R_\omega$. Matched-length pairs are retained
when
$R_\omega(\xi_t,S^+) - R_\omega(\xi_t,S^-)
\ge \delta_{\mathrm{pref}}$.
$\lambda_{\mathrm{DPO}}=0.3$,
$\beta=0.15$,
$\delta_{\mathrm{pref}}=0.22$;
slate-sampling temperature $=0.7$;
maximum slate length $N_{\text{slate}}=8$.
\\

\bottomrule
\end{tabularx}

\caption{
Concrete post-training settings used for the reported TransX specialization
experiments. The shared behavior tokenizer, causal encoder, and
behavior-memory construction are frozen for all variants unless otherwise
noted.
}
\label{tab:posttraining-settings}
\end{table*}

\paragraph{\texttt{TransX-GenRet} and \texttt{TransX-GenRank} analysis.}
Figure~\ref{fig:DPO-results} shows that both generative specializations benefit from the \texttt{TransX-Base} backbone. For \texttt{TransX-GenRet}, post-training converges faster and consistently improves held-out preference quality, Recall@$K$, NDCG@$K$, valid-SID rate, and generation diversity relative to matched training from random initialization. \texttt{TransX-GenRank} shows the same pattern, with faster DPO convergence and stronger held-out preference separation. Together, these results indicate that the behavioral representation learned by \texttt{TransX-Base} transfers effectively to both autoregressive retrieval and slate generation, allowing post-training to focus on task-specific decoding rather than relearning user behavior from scratch.


Despite effective DPO optimization, the two generative specializations
exhibit a consistent trade-off relative to their discriminative
counterparts: greater structured-output expressiveness comes with higher
sensitivity to representation design, supervision quality, and
autoregressive decoding.
\begin{itemize}
    \item  Where \texttt{TransX-GenRet} improves over \texttt{TransX-Retrieval}, the gains are concentrated among long-tail users and items, suggesting that hierarchical SID decoding can capture item structure not fully expressed by a single embedding-space similarity. However, invalid SIDs still occur after convergence and directly reduce Recall@$K$ without constrained decoding; generated results are also more concentrated on popular items.
    \item \texttt{TransX-GenRank} is broadly comparable on average with the conventional learning-to-rank variant, but exhibits higher variance and weaker quality at some later slate positions as early autoregressive decisions propagate forward. Its performance is also more sensitive to the coverage and fidelity of slate-level preference supervision, with additional risk from exposure bias and overfitting to teacher or preference signals.
\end{itemize}

\begin{table}[hbt]
\centering
\small
\setlength{\tabcolsep}{5pt}
\renewcommand{\arraystretch}{1.12}
\begin{tabularx}{\linewidth}{@{}p{0.19\linewidth}p{0.35\linewidth}X@{}}
\toprule
\textbf{Model} & \textbf{Primary advantage} & \textbf{Primary trade-off} \\
\midrule

\texttt{TransX-GenRet}
&
Retrieval as conditional sequence generation exploits hierarchical structure in SIDs and shows potential on long-tail users/items.
&
Quality depends on SID construction, collisions, decoding strategy,
negative sampling, and diversity; invalid SIDs and popularity
concentration remain important failure modes.
\\ \hline

\texttt{TransX-GenRank}
&
Autoregressive slate generation explicitly models dependencies among slate positions and has potential to improve slate-wise recommendation quality.
&
Sequential decisions introduce error propagation and exposure bias, while
quality depends more strongly on the coverage and fidelity of slate-level
preference supervision.
\\

\bottomrule
\end{tabularx}
\caption{
Observed behaviors and trade-offs of the generative TransX specializations.
}
\label{tab:generative-tradeoffs}
\end{table}

Finally, these generative experiments are intended to establish that the shared TransX behavioral representation supports autoregressive retrieval and slate-generation readouts and that preference optimization improves the corresponding held-out generative diagnostics. They are not intended as matched-budget evidence that the generative variants outperform their discriminative counterparts.
\end{document}